\documentclass[11pt]{article}

\usepackage[utf8]{inputenc}
\setlength{\parskip}{0cm}
\setlength{\parindent}{1cm}
\usepackage{graphicx}
\graphicspath{ {./images/} }
\usepackage{amssymb}
\usepackage{enumitem}
\usepackage{bbm}
\usepackage[margin=1in]{geometry}
\usepackage{amsmath}
\usepackage[ruled,vlined]{algorithm2e}
\usepackage{subfig}
\usepackage{mathtools}
\usepackage{scalerel}
\usepackage{pst-node}
\usepackage{tikz-cd} 
\usepackage{comment}
\definecolor{Black}{rgb}{0,0,0}
\usepackage[linktocpage=true,
pagebackref=true,colorlinks,
linkcolor=Black,citecolor=Black,
bookmarks,bookmarksopen,bookmarksnumbered]{hyperref}

\usepackage{cleveref}
\usepackage{amsthm}
\usepackage{thmtools,blindtext}
\usepackage{thm-restate}
\usepackage{xcolor}
\usepackage[nottoc]{tocbibind} 

\newtheorem{theorem}{Theorem}
\newtheorem{corollary}[theorem]{Corollary}
\newtheorem{lemma}[theorem]{Lemma}

\newtheorem{definition}[theorem]{Definition}

\newtheorem{question}[theorem]{Question}

\DeclarePairedDelimiter{\ceil}{\lceil}{\rceil}

\newcommand{\FF}{\mathbbm{F}}
\newcommand{\ZZ}{\mathbbm{Z}}

\usepackage{stmaryrd}

\newcommand{\sem}[1]{\llbracket#1\rrbracket}
\newcommand{\uhr}{\upharpoonright}
\newcommand{\AND}{\textsc{and}}
\newcommand{\OR}{\textsc{or}}
\newcommand{\NOT}{\textsc{not}}
\newcommand{\MAJ}{\textsc{maj}}
\newcommand{\Word}{{\normalfont\textsc{Word}}}
\newcommand{\Stab}{{\normalfont\mathrm{Stab}}}
\newcommand{\Diag}{{\normalfont\mathrm{Diag}}}
\newcommand{\AC}{\mathsf{AC}}
\newcommand{\TC}{\mathsf{TC}}
\newcommand{\BAR}[1]{\overline{#1}{}}

\title{Symmetric Formulas for Products of Permutations}
\author{William He\\ Duke University \and Benjamin Rossman\\Duke University}

\begin{document}

\maketitle

\begin{abstract}
We study the formula complexity of the word problem $\Word_{S_n,k} : \{0,1\}^{kn^2} \to \{0,1\}$: given $n$-by-$n$ permutation matrices $M_1,\dots,M_k$, compute the $(1,1)$-entry of the matrix product $M_1\cdots M_k$. 
An important feature of this function is that it is invariant under action of $S_n^{k-1}$ given by
\[
  (\pi_1,\dots,\pi_{k-1})(M_1,\dots,M_k) = (M_1\pi_1^{-1},\pi_1M_2\pi_2^{-1},\dots,\pi_{k-2}M_{k-1}\pi_{k-1}^{-1},\pi_{k-1}M_k).
\]
This symmetry is also exhibited in the smallest known unbounded fan-in $\{\AND,\OR,\NOT\}$-formulas for $\Word_{S_n,k}$, which have size $n^{O(\log k)}$.

In this paper we prove a matching $n^{\Omega(\log k)}$ lower bound for $S_n^{k-1}$-invariant formulas  computing $\Word_{S_n,k}$. This result is motivated by the fact that a similar lower bound for unrestricted (non-invariant) formulas would separate complexity classes $\mathsf{NC}^1$ and $\mathsf{Logspace}$.

Our more general main theorem gives a nearly tight $n^{d(k^{1/d}-1)}$ lower bound on the $G^{k-1}$-invariant depth-$d$ $\{\MAJ,\AND,\OR,\NOT\}$-formula size of 
$\Word_{G,k}$ for any finite simple group $G$ whose minimum permutation representation has degree~$n$. We also give nearly tight lower bounds on the $G^{k-1}$-invariant depth-$d$ $\{\AND,\OR,\NOT\}$-formula size in the case where $G$ is an abelian group.
\end{abstract}

\section{Introduction}

\subsection{$P$-invariant complexity}

Let $P$ be a permutation group on $m$ elements (i.e.,\ a subgroup of the symmetric group $S_m$). There is a natural action of $P$ on the set of Boolean function $f : \{0,1\}^m \to \{0,1\}$ given by $(\pi f)(x_1,\dots,x_m) = f(x_{\pi(1)},\dots,x_{\pi(m)})$. We say that $f$ is {\em $P$-invariant} if $\pi f = f$ for all $\pi \in P$ (i.e.,\ $f$ is symmetric under the action of $P$). Many important Boolean functions are invariant under group actions on coordinates $1,\dots,m$: {\em symmetric functions} such as \textsc{Parity} and \textsc{Majority} are fully $S_m$-invariant, while {\em graph properties} such as \textsc{Connectivity} and \textsc{3-Colorability} are invariant under $P = S_n$ acting on $m = \binom{n}{2}$ edge-indicator variables.

Any permutation group $P$ also acts on combinatorial devices that compute functions on $\{0,1\}^m$, such as $m$-variable circuits, formulas, branching programs, etc.  In this paper, we focus on {\em formulas} in both the $\mathsf{AC}$ basis (unbounded fan-in $\AND$, $\OR$ gates) and the $\mathsf{TC}$ basis (unbounded fan-in $\MAJ$ gates) with negations on input literals.  Here we view formulas as rooted trees in which leaves (``inputs'') are labeled by literals $\texttt x_1,\overline{\texttt x}_1,\dots,\texttt x_m,\overline{\texttt x}_m$ or constants $\texttt 0,\texttt 1$, and non-leaves (''gates'') are labeled by gate types. We treat the children of a gate as an unordered multiset, so that two formulas are identical if and only they are isomorphic as labeled rooted trees.

$P$ acts on the set of formulas by relabeling literals $\texttt x_i$ to $\texttt x_{\pi(i)}$ and $\overline{\texttt x}_i$ to $\overline{\texttt x}_{\pi(i)}$.
A formula $\Phi$ is said to be {\em $P$-invariant} if $\pi \Phi = \Phi$ for all $\pi \in P$.
Every $P$-invariant formula computes a $P$-invariant Boolean function. On the other hand, a $P$-invariant Boolean function $f$ may be computed (often more efficiently) by a non-$P$-invariant formula $\Phi$; in this case, we would say that $\Phi$ is {\em semantically $P$-invariant}, but not {\em syntactically $P$-invariant}.

The action of $P$ on formulas preserves parameters such as {\em depth} (the maximum number of gates on a leaf-to-root branch) and {\em size} (the number of leaves labeled by literals). 
For general Boolean functions $f$, we write $\mathcal L_{\mathsf{AC}_d}(f)$ and $\mathcal L_{\mathsf{TC}_d}(f)$ for the minimum size of a depth-$d$ $\mathsf{AC}$ (respectively $\mathsf{TC}$) formulas that computes $f$, and we write $\mathcal L_{\mathsf{AC}}(f)$ and $\mathcal L_{\mathsf{TC}}(f)$ allowing unrestricted depth.
When $f$ is $P$-invariant, we may consider the corresponding {\em $P$-invariant complexity measures} (indicated by superscript $P$):
\[
\begin{array}{cccccc}
  \mathcal L_{\mathsf{TC}_d}(f)
  &\le&
  \mathcal L_{\mathsf{AC}_d}(f)\\
  \rotatebox[origin=c]{-90}{$\le$}\vphantom{\Big|}
  &&\rotatebox[origin=c]{-90}{$\le$}\vphantom{\Big|}\\  
  \mathcal L^P_{\mathsf{TC}_d}(f)
  &\le&
  \mathcal L^P_{\mathsf{AC}_d}(f)
\end{array}
\qquad\qquad
\begin{array}{ccccc}
  \mathcal L_{\mathsf{TC}}(f)
  &\le&
  \mathcal L_{\mathsf{AC}}(f)
  &=&
  \mathcal L_{\mathsf{DeMorgan}}(f)\\
  \rotatebox[origin=c]{-90}{$\le$}\vphantom{\Big|}&
  &\rotatebox[origin=c]{-90}{$\le$}\vphantom{\Big|}\\
  \mathcal L^P_{\mathsf{TC}}(f)
  &\le&
  \mathcal L^P_{\mathsf{AC}}(f)
\end{array}
\]

Invariant circuit complexity has been previously studied in the context of descriptive complexity, an area concerned with characterizing complexity classes in terms of definability in different logics. Here the languages/Boolean functions considered are graph properties (or in general isomorphism-invariant properties of finite relational structures), and $P$ is the symmetric group $S_n$ acting on $m = \binom{n}{2}$ edge-indicator variables (or $m = \sum_{i=1}^t n^{r_i}$ indicator variables for properties of structures with relations of arity $r_1,\dots,r_t$).
Denenberg, Gurevich and Shelah \cite{Denenberg1986} showed that $P$-invariant $\mathsf{AC}$ circuits of polynomial size and constant depth (subject to a certain uniformity condition) capture precisely the first-order definable graph properties.
More recently, Anderson and Dawar \cite{Anderson2016} established a correspondence between polynomial-size $P$-invariant $\mathsf{TC}$ circuits and definability in fixed-point logic with counting. 
For both classes of formulas, {\em exponential gaps} are known between the $P$-invariant complexity and non-invariant complexity of explicit graph properties.

\subsection{The word problem over a finite permutation group}

Let $G \le S_n$ be a finite permutation group, which we assume to be transitive. Specifically, we will be interested in the case $G = S_n$, as well as the case where $G$ is a simple group and $G \le S_n$ is a faithful permutation representation of minimum degree (e.g.,\ $G$ is the alternating group $A_n$, or $G$ is a cyclic group of prime order $p$ and $n=p$).

We write $\BAR{G}$ for the set of permutation matrices corresponding to elements of $G$. For $k \in \mathbbm N$, we view $\BAR{G}^k$ as a subset of Hamming cube $\{0,1\}^{kn^2}$.
We write elements of $\BAR{G}^k$ as $k$-tuples of matrices $(M_1,\dots,M_k)$, and we identify coordinates of $\{0,1\}^{kn^2}$ with Boolean variables $M_{i,a,b}$ for $i \in [k]$ and $a,b \in [n]$.

\begin{definition}\label{df:word problem}
The {\em length-$k$ word problem} for $G$ is the Boolean function $\Word_{G,k} : \BAR{G}^k \to \{0,1\}$ that outputs the $(1,1)$-entry of the matrix product $M_1 \cdots M_k$. 
\end{definition}

This is one natural version of the ``word problem'' for a finite permutation group. This should not be confused with the word problem for finitely presented groups (see Section \ref{sec:fp groups}).  Note that the problem of determining whether $M_1\cdots M_k$ is the identity reduces to $n$ instances of $\Word_{G,k}$.

We are interested in the invariant formula complexity of $\Word_{G,k}$ with respect to the group $G^{k-1}$ acting on $\BAR G^k$ by
\[
  (g_1,\dots,g_{k-1})(M_1,\dots,M_k)
  =
  (M_1\BAR{g_1}^{-1},\BAR{g_1}M_2\BAR{g_2}^{-1},\dots,\BAR{g_{k-2}}M_{k-1}\BAR{g_{k-1}}^{-1},\BAR{g_{k-1}} M_k)
\]
where $\BAR{g_i}$ denotes the permutation matrix corresponding to $g_i$.
Note that the action $G^{k-1}$ on $\BAR{G}^k$ arises from an action on input variables $M_{i,a,b}$ 
given by
\[
  (g_1,\dots,g_{k-1})M_{i,a,b} = M_{i,g_{i-1}(a),g_i(b)}
\]
where $g_0=g_k=1_G$. Since this action is faithful, we may view $G^{k-1}$ as a subgroup of $S_{kn^2}$.

\subsection{The formula size of $\Word_{S_n,k}$}

The functions $\Word_{S_n,k}$ (also known as  {\em iterated permutation matrix multiplication}) have an important place in complexity theory, since languages $\{\Word_{S_5,k}\}_{k \in \mathbb N}$ and $\{\Word_{S_n,n}\}_{n \in \mathbb N}$ are respectively complete for complexity classes $\mathsf{NC^1}$ and $\mathsf{Logspace}$ \cite{barrington1989bounded,cook1987problems}. Understanding the formula complexity of the word problem for $S_n$ could eventually be key to separating these classes.

The smallest known DeMorgan formulas ($\AND_2,\OR_2,\NOT$ gates) for $\Word_{S_n,k}$ have size $n^{O(\log k)}$.  Allowing unbounded fan-in $\AND_\infty,\OR_\infty,\NOT$ gates (i.e.,\ $\mathsf{AC}$ formulas), we obtain the same size $n^{O(\log k)}$ with depth $O(\log k)$ and moreover by formulas that are syntactically $S_n^{k-1}$-invariant.  In fact, the smallest known $\mathsf{AC}$ formulas of any given depth (via a natural divide-and-conquer construction that we describe in \Cref{cor:upper}) happen to be $S_n^{k-1}$-invariant:
\[
  \mathcal L^{S_n^{k-1}}_{\mathsf{AC}}(\Word_{S_n,k}) 
  = 
  n^{O(\log k)},
  \qquad
  \mathcal L^{S_n^{k-1}}_{\mathsf{AC}_{d+1}}(\Word_{S_n,k}) 
  = 
  n^{O(d(k^{1/d}-1))}.
\]
(Note that the size-depth tradeoff on the right implies the formula size lower bound on the left, since $\lim_{d \to \infty} d(k^{1/d}-1) = \ln k$.)

A matching lower bound 
  $\mathcal L_{\mathsf{AC}}(\Word_{S_n,k}) 
  = 
  n^{\Omega(\log k)}$
for unrestricted (non-invariant) formulas would separate complexity classes $\mathsf{NC}^1$ and $\mathsf{Logspace}$.
This motivates the question of first showing matching lower bounds for $S_n^{k-1}$-invariant formulas.

\subsection{Our results}

Our first theorem shows that the above $S_n^{k-1}$-invariant upper bounds are nearly optimal, even for $\mathsf{TC}$ formula which include $\MAJ$ gates.

\begin{theorem}\label{Sn lower bound}
For all $n,k,d \ge 1$,
\[
  \mathcal L^{S_n^{k-1}}_{\mathsf{TC}}(\Word_{S_n,k}) \ge n^{\log_2 k},\qquad
  \mathcal L^{S_n^{k-1}}_{\mathsf{TC}_d}(\Word_{S_n,k}) \ge n^{d(k^{1/d}-1)}.
\]
\end{theorem}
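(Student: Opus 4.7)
The plan is to prove the depth-$d$ lower bound $\mathcal L^{S_n^{k-1}}_{\mathsf{TC}_d}(\Word_{S_n,k}) \ge n^{d(k^{1/d}-1)}$ by induction on $d$; the unbounded-depth bound $n^{\log_2 k}$ then follows by noting that $d(k^{1/d}-1) = \log_2 k$ at $d = \log_2 k$, so specializing to this depth recovers the claimed unrestricted inequality.

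The framework for reasoning about $S_n^{k-1}$-invariant formulas is via orbit decomposition. If $\Phi$ is invariant with root children $\Psi_1,\dots,\Psi_t$, then the multiset of children is $S_n^{k-1}$-invariant and decomposes into orbits $O_1,\dots,O_r$. Each orbit $O_j = S_n^{k-1}\cdot \Psi_j$ contributes $|O_j|\cdot|\Psi_j| = [S_n^{k-1}:\Stab(\Psi_j)]\cdot|\Psi_j|$ to $|\Phi|$, yielding the tradeoff between orbit size and subformula size that drives the recursion. The central technical step is a structure theorem for these stabilizers: because the Boolean function $f_{\Psi_j}$ computed by $\Psi_j$ is $\Stab(\Psi_j)$-invariant, and the collection $\{f_{\Psi_j}\}$ must recover $\Word_{S_n,k}$ through the root gate, the possible stabilizers are tightly constrained. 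Exploiting the fact that $S_n$ (and its simple subgroup $A_n$) has no subgroups of small index besides point-stabilizers, I would show that $\Stab(\Psi_j)$ is essentially a ``factor subgroup'' $\prod_{i\in I_j}S_n$ for some $I_j \subseteq [k-1]$ recording the coordinates to which $\Psi_j$ is insensitive, placing $\Psi_j$ in an orbit of size roughly $n^{k-1-|I_j|}$.

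For the inductive step, I would choose a split $k = k_1 \cdot k_2$ with $k_1 \approx k^{1/(d+1)}$ and argue that $\Phi$ must exhibit divide-and-conquer structure along this split, mirroring the upper bound construction in \Cref{cor:upper}. Specifically, the orbits near the root correspond to intermediate ``splicing configurations'' $(b_1,\dots,b_{k_1-1})\in[n]^{k_1-1}$ and come in orbits of size at least $n^{k_1-1}$, while each orbit representative is a subformula computing, after an appropriate restriction, an instance of $\Word_{S_n,k_2}$ in depth $d$. Inductively, each such subformula has size at least $n^{d(k_2^{1/d}-1)}$, yielding $|\Phi| \ge n^{k_1-1}\cdot n^{d(k_2^{1/d}-1)} = n^{(d+1)(k^{1/(d+1)}-1)}$, as required.

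The hardest part is carrying out the structural reduction rigorously. In particular one must rule out ``exotic'' $\MAJ$-gate combinations that might compute $\Word_{S_n,k}$ using subformulas with stabilizer structure very different from the divide-and-conquer template, and one must handle the fact that orbit representatives do not literally compute $\Word_{S_n,k_2}$ but only do so after restriction. I expect that forcing the claimed orbit structure requires an argument at the level of the group algebra $\mathbbm{Q}[S_n^{k-1}]$, or an equivalent decomposition into irreducible $S_n^{k-1}$-representations, to certify that the information-theoretic content of $\Word_{S_n,k}$ cannot be represented at depth $d+1$ by orbits smaller than $n^{k_1-1}$ at the critical layer.
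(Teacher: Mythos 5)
You have the right skeleton: prove the depth-$d$ bound by induction, decompose the children of the root gate into orbits under the invariance group, and trade off the orbit size $[H:\Stab(\Psi_j)]$ against the subformula size. But the two key technical steps you sketch do not hold up, and your proposed route is quite different from the paper's. First, the claim that $\Stab(\Psi_j)$ is essentially a factor subgroup $\prod_{i\in I_j} S_n$ is false: stabilizers of subformulas can be ``twisted diagonal'' subgroups of various kinds, and the paper spends considerable effort (Goursat's lemma, Lemma~\ref{lem: Goursat}, and the support machinery of Corollary~\ref{cor: support lemma}) precisely to control such subgroups; even so, the conclusion is only that a low-index subgroup has small \emph{support}, not that it is a factor. (Also, the index of $\prod_{i\in I_j}S_n$ in $S_n^{k-1}$ is $(n!)^{k-1-|I_j|}$, not $n^{k-1-|I_j|}$, so even in the clean case your claimed orbit size is off.) Second, the inductive step cannot proceed by ``arguing that $\Phi$ must exhibit divide-and-conquer structure along a chosen split $k=k_1 k_2$'': nothing forces an adversarial formula to respect any particular split, and the paper's argument deliberately avoids claiming structural normalization. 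Instead it defines a potential function $\mu$ on pairs of subgroups (Definition~\ref{def: mu nonabelian simple}) and shows, via the shrinkage property (Lemma~\ref{lem: shrinkage by n}) and the intersection property (Lemma~\ref{lem: simple intersection property}), that at each level \emph{either} an orbit is large (contributing a multiplicative factor) \emph{or} the representative retains high potential (so induction applies) --- without assuming any divide-and-conquer shape.

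There are two further conceptual gaps. You work directly with $P=S_n^{k-1}$ acting on the formula, but this yields nothing because $\Stab_P(\Word_{S_n,k})=P$ is trivially all of $P$; the paper instead embeds $P$ into a larger group $Q$ (the left-right action of $G^{2k}$ with endpoints fixed) so that $\Stab_Q(\Word_{G,k})=P$ becomes a \emph{proper} subgroup of $Q$ and the stabilizer condition carries real information. Relatedly, you must track both the syntactic stabilizer $\Stab_Q(\Phi)$ and the semantic stabilizer $\Stab_Q(\sem{\Phi})$: a formula $(\OR,\{\texttt{1},\Psi\})$ has the same syntactic stabilizer as $\Psi$ but computes a constant, so any potential defined only from the syntactic stabilizer would be fooled. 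The paper's $\beta_{Q,\Omega,d}(H,K)$ and the two-argument $\mu(H,K)$ exist exactly to handle this. Finally, your suggestion that the structural claims should be certified via the group algebra $\mathbbm{Q}[S_n^{k-1}]$ or irreducible $S_n^{k-1}$-representations is not the route the paper takes --- it uses only elementary subgroup lattice arguments --- and it is unclear how a representation-theoretic decomposition of the function space would control the combinatorics of $\MAJ$-gate formula trees. The actual proof of Theorem~\ref{Sn lower bound} is also indirect: it is deduced from Theorem~\ref{thm: simple lower bound} applied to $G=A_n$, using monotonicity of $G^{k-1}$-invariant complexity in $G$, rather than by working with $S_n^{k-1}$ directly.
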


(Note that our the righthand lower bound is off by $1$ in depth compared to the upper bound.)

Note that $\Word_{G,k}$ is simply the restriction of the function $\Word_{S_n,k}$ to the subdomain $\BAR{G}^k \subseteq \BAR{S_n}^k$.  
The $G^{k-1}$-invariant complexity of $\Word_{G,k}$ is moreover a nondecreasing function of $G$ in the subgroup lattice of $S_n$. 

Thus, Theorem \ref{Sn lower bound} follows from the main result of this paper (Theorem \ref{thm: simple lower bound} below), which lower bounds the $G^{k-1}$-invariant $\mathsf{TC}$ formula size of $\Word_{G,k}$ for certain subgroups $G \le S_n$. This is because the alternating subgroup $A_n\leq S_n$ is simple for $n\geq 5$ and has minimum faithful permutation representation of degree $n$.

This also motivates the question of strengthening Theorem \ref{Sn lower bound} by finding more proper subgroups $G \subseteq S_n$ with the same lower bound.

\begin{restatable}
{theorem}{simpleLB}\label{thm: simple lower bound}
Let $G$ be a finite simple group and suppose that $G \le S_n$ is a faithful permutation representation of minimum degree. Then
\[
  \mathcal L^{G^{k-1}}_{\mathsf{TC}}(\Word_{G,k}) \ge n^{\log_2 k},\qquad
  \mathcal L^{G^{k-1}}_{\mathsf{TC}_d}(\Word_{G,k}) \ge n^{d(k^{1/d}-1)}.
\]
\end{restatable}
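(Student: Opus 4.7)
The plan is to establish the lower bound by a recursive argument that exploits the orbit structure of $G^{k-1}$-invariant formulas together with the algebraic properties of $G$ as a simple group with minimum-degree faithful permutation representation. Concretely, I aim for a recursion of the form $L(k) \ge n \cdot L(k/2)$ in the unbounded-depth case, and its depth-sensitive analog $L_d(k) \ge n^{t-1} \cdot L_{d-1}(k/t)$ with $t \approx k^{1/d}$; these unfold to the claimed bounds $n^{\log_2 k}$ and $n^{d(k^{1/d}-1)}$.

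The first step is to attach to every subformula $\Psi$ of a $G^{k-1}$-invariant formula $\Phi$ its \textit{domain} $D(\Psi) = \{i \in [k] : \Psi \text{ mentions some } M_{i,a,b}\}$. Since the $G^{k-1}$-action preserves the index $i$, the domain map is constant on $G^{k-1}$-orbits. The stabilizer of any subformula with domain $D$ contains the pointwise stabilizer of all variables indexed by $D$, giving a ceiling on orbit sizes. The core step is to locate a \emph{cut} in $\Phi$: a layer of subformulas whose domains split $[k]$ into roughly balanced pieces. Such a cut exists by pigeonhole on root-to-leaf paths, each carrying a chain of domains decreasing from $[k]$ to a singleton. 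At the cut, correctness forces accounting for the $n$ middle values $c \in [n]$ appearing in the identity $(M_1 \cdots M_k)_{1,1} = \sum_c (M_1 \cdots M_j)_{1,c} \cdot (M_{j+1} \cdots M_k)_{c,1}$; these $n$ values must manifest as $n$ distinct elements in a $G^{k-1}$-orbit of subformulas, for otherwise one obtains a faithful action of $G$ on fewer than $n$ points, contradicting the minimum-degree hypothesis (here simplicity of $G$ plays its role, via the absence of nontrivial normal subgroups that could allow orbit collapse through a smaller quotient action).

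Executing this cut and recursing on each half---each itself being an invariant formula for a shorter word problem $\Word_{G, k/2}$ via restriction---yields the factor of $n$ per recursion level, hence $n^{\log_2 k}$ in total. The depth-$d$ version follows by splitting $[k]$ into $t \approx k^{1/d}$ pieces of size $k^{1-1/d}$ at each layer, and arguing that a $t$-ary split contributes a factor of $n^{t-1}$ via simultaneous orbit expansion across the $t-1$ bridging sums between consecutive pieces; the recursion $L_d(k) \ge n^{t-1} \cdot L_{d-1}(k/t)$ with $t = k^{1/d}$ then unfolds to $n^{d(k^{1/d}-1)}$. The main obstacle I anticipate is making the cut argument robust for formulas whose children's domains overlap irregularly rather than partitioning $[k]$ cleanly; the likely workaround is to define a quantitative \emph{rank} on subformulas (from domain and orbit structure) with a large-at-root property ($\text{rank} \ge \log_2 k$) and subadditivity under the formula structure (each rank decrement charging $\ge n$ leaves through orbit multiplicities), using simplicity of $G$ to forbid degenerate orbits.
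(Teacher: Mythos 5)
Your high-level intuition is sound in places: the orbit-stabilizer theorem is indeed the engine, and the fact you invoke — that a proper subgroup of a simple group with minimum-degree permutation representation $n$ must have index $\ge n$ — is exactly \Cref{lem:min index} in the paper and does do work there. But there are several genuine gaps that would sink the argument as sketched.

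First, the ``cut'' step is not actually forced by invariance. You claim that the identity $(M_1\cdots M_k)_{1,1}=\sum_c (M_1\cdots M_j)_{1,c}(M_{j+1}\cdots M_k)_{c,1}$ must ``manifest as $n$ distinct elements in a $G^{k-1}$-orbit of subformulas.'' A $G^{k-1}$-invariant formula has no obligation to be organized around this algebraic decomposition, and the orbits it does contain need not correspond to the middle values $c$ at all. What the paper actually proves is much weaker but sufficient: if an orbit of subformulas has size $< n^m$, then the stabilizer of a single orbit element differs from the orbit stabilizer only on a ``support'' of at most $m$ coordinates (\Cref{cor: support lemma}), so a suitable connectivity measure on the stabilizer drops by a factor of at most $m$. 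Establishing this requires Goursat's lemma and a careful analysis of subgroups of $G^k$ that your sketch does not touch, and it is the precise place where both simplicity and the minimum-degree hypothesis enter.

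Second, your domain function $D(\Psi)$ bounds orbit sizes from the wrong side. You observe (correctly) that $\Stab(\Psi)$ contains the pointwise stabilizer of the variables in $D(\Psi)$, giving a ceiling on orbit size. But you need a \emph{floor} on orbit sizes — a guarantee that orbits are large unless the subformula has made real progress on the problem — and the domain of a subformula alone cannot provide that: a subformula can mention all $k$ blocks of variables and still have a tiny orbit (e.g., a syntactically invariant $\OR$ of one literal per block). The paper avoids this by working with a complexity measure $\mu(H,K)$ on \emph{pairs} of stabilizers — the syntactic stabilizer $H=\Stab_Q(\Psi)$ and the semantic stabilizer $K=\Stab_Q(\sem\Psi)$ — tracking the largest connected component of a graph that records on which consecutive coordinates $H$ and $K$ both restrict to $\Diag(G^2)$. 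Tracking the semantic stabilizer is essential: the paper exhibits formulas such as $(\OR,\{\texttt 1,\Psi\})$ whose syntactic stabilizer is $G^{k-1}$ yet whose computed function is constant, which would break any accounting based only on syntactic data.

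Third, the paper's proof runs through an embedding of $G^{k-1}$ into a larger group $Q\le G^{2k}$ (the left-right action with $g_1=g_{2k}=1_G$), and the measure $\mu$ is defined on subgroups of $Q$, not of $G^{k-1}$. This change of ambient group is what makes the stabilizer structure tractable via Goursat's lemma. Your sketch works inside $G^{k-1}$ directly and leaves the ``rank'' function unspecified, deferring exactly the hard part. The recursion you write down and the resulting exponent $d(k^{1/d}-1)$ are the right shape — that arithmetic is the same as in \Cref{lem:general lower bound} — but the content lies in proving the shrinkage inequality $\mu(U,V)\ge \mu(H,L)/(1+\log_n[H:H\cap U])$ and the intersection property, which your proposal does not supply.
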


Our proof of Theorem \ref{thm: simple lower bound} uses different arguments in the cases where $G$ is nonabelian (\Cref{sec:nonabelian simple}) and where $G$ is cyclic of order $p$ (\Cref{sec:prime order}). However, both cases use the same framework developed in \Cref{sec:framework}.

We remark that the {\em minimum degree} requirement is necessary in the nonabelian case, as the statement would be false for the regular representation $G \le S_{|G|}$. Minimum degree ensures every proper subgroup $H < G$ has index at least $n$; this fact plays a role in our lower bound.

An additional result of this paper gives a lower bound on the $C_q^{k-1}$-invariant $\mathsf{AC}$ formula size of the word problem for cyclic groups $C_q$ where $q$ is a prime power.
When $q$ is not prime, we do not know whether the technique of \Cref{thm: simple lower bound} yields a stronger lower bound for $\mathsf{TC}$ formulas.

\begin{restatable}
{theorem}{CyclicACLB}\label{thm: cyclic prime power LB}
Suppose that $q$ is a prime power and $C_q \le S_q$ is cyclic of order $q$. Then
\[
  \mathcal L^{C_q^{k-1}}_{\mathsf{AC}}(\Word_{C_q,k}) \ge q^{\log_2 k},\qquad
  \mathcal L^{C_q^{k-1}}_{\mathsf{AC}_d}(\Word_{C_q,k}) \ge q^{d(k^{1/d}-1)}.
\]
\end{restatable}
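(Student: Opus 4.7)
My plan is to apply the abstract framework of \Cref{sec:framework} with $G = C_q$, where $q = p^e$. The framework reduces the lower bound for $\Word_{G,k}$ to an inductive analysis of $G^{k-1}$-invariant depth-$d$ formulas: a small formula yields, after restricting to an orbit representative among the children of the root gate, a smaller invariant formula for a shorter word problem on a subgroup. The challenge unique to this theorem, compared to \Cref{thm: simple lower bound}, is that $C_q$ is not simple when $e \ge 2$, so proper subgroups $C_{p^f} < C_q$ (with $0 < f < e$) may arise as stabilizers during the induction. We compensate for losing simplicity by exploiting the monotone structure of $\mathsf{AC}$ gates, which is unavailable for $\mathsf{TC}$.

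The first step is to record the relevant orbit data. Because $C_q$ acts regularly on $[q]$, the $C_q^{k-1}$-orbit of any middle-position literal $M_{i,a,b}$ has size $q^2$, every boundary-position literal has orbit size $q$, and every stabilizer of a $k$-tuple of literals under $C_q^{k-1}$ has index divisible by $q$. This $q$-divisibility yields the base multiplicative factor needed at each level of the recursion. Since $C_q^{k-1}$ is abelian, the possible stabilizers are products $\prod_j C_{p^{e_j}}$ with $0 \le e_j \le e$, and each yields a residual word problem of the form $\Word_{C_{p^f}, \ell}$ for some $f \le e$ and $\ell \le k$.

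The second step is the $\mathsf{AC}$-specific core. The children of any gate in a $G^{k-1}$-invariant $\mathsf{AC}$ formula form a $G^{k-1}$-invariant multiset, and for $\AND$ or $\OR$ gates the output function is determined by the family of orbits appearing among the children, via intersection or union of the associated accepting supports. I would prove a one-step progress lemma: if the top-gate fan-in is less than $q^{k^{1/d}-1}$ relative to a reference subproblem, then one can restrict inputs along a single orbit to obtain a $C_q^{\ell-1}$-invariant depth-$(d-1)$ $\mathsf{AC}$ formula for $\Word_{C_q, \ell}$ with $\ell \ge k^{(d-1)/d}$, whose existence at size smaller than $q^{(d-1)(\ell^{1/(d-1)}-1)}$ contradicts the inductive hypothesis. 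Unwinding the recursion gives the $q^{d(k^{1/d}-1)}$ depth-$d$ lower bound, and letting $d \to \infty$ gives the $q^{\log_2 k}$ unbounded-depth lower bound.

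The main obstacle is handling the case where the stabilizer arising in the induction drops to a proper subgroup $C_{p^f}^{k-1}$ with $f < e$; in that case the residual problem is over the strictly smaller group $C_{p^f}$, and naively the inductive bound we obtain is $p^{f \cdot (d-1)(\ell^{1/(d-1)}-1)}$ rather than $q^{(d-1)(\ell^{1/(d-1)}-1)}$. I would handle this by a quotient argument exploiting monotonicity: any $C_q^{k-1}$-invariant $\mathsf{AC}$ formula for $\Word_{C_q,k}$ must separate all $q$ cosets of every subgroup $C_{p^f} \le C_q$, which forces the formula to contain at least $q/p^f$ distinct $\AND/\OR$ orbit-representative subformulas at each level where a drop to $C_{p^f}$ occurs. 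Multiplying these contributions across the subgroup tower $C_p < C_{p^2} < \cdots < C_q$ recovers the full factor of $q$ in the base of the exponent, completing the induction.
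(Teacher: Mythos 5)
Your proposal correctly identifies the two central issues---that $C_q$ is not simple when $q$ is a proper prime power, so the stabilizer of a child sub-formula can drop to intermediate subgroups, and that the monotone structure of $\AND/\OR$ gates is the resource unavailable in the $\MAJ$ setting---but the mechanism you propose for the subgroup drop appears to have a genuine gap. You claim that whenever the stabilizer drops to (a shifted copy of) $C_{p^f}^{\ell-1}$, the formula must contain at least $q/p^f$ \emph{distinct orbit-representative subformulas}, and that multiplying these contributions across the tower $C_p < C_{p^2} < \dots < C_q$ recovers the base $q$. This conflates the number of orbits $r$ with the orbit sizes. In the paper's argument one may in fact reduce to a \emph{single} orbit (in the $\OR$ step, at least one orbit sub-$\OR$ is already nonconstant on the test set, and one may pass to it), so no lower bound on $r$ is available; the large factor comes from the orbit size $[H : \Stab_H(\Psi)]$. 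Moreover, there is no single ``level'' at which a drop to $C_{p^f}$ happens: the stabilizer $U \le C_q^k$ that arises is an arbitrary subgroup, not a shifted-diagonal $C_{p^f}^{\ell-1}$, and the recursion is not over smaller word problems $\Word_{C_{p^f},\ell}$ but over the more flexible invariant ``$\Psi$ is $U$-invariant and nonconstant on a codimension-one extension $T$ of $V_U := U \cap C_p^k$ in $\FF_p^k$.''

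The missing ingredient that makes the paper's induction close is \Cref{lem: shrinkage by q}: for every $H \le C_q^k$, $|H| \le q^{\dim_{\FF_p}(V_H)}$, where $V_H = H \cap C_p^k$ is the mod-$p$ reduction (proved via the filtration $H \cap (p^i)^k$). The inductive bound in \Cref{thm: cyclic prime power general mu} carries a correction factor $|H| / q^{\dim V_H}$, which is $\le 1$ in general and $= 1$ for the initial choice $H = C_q^{k-1}$. The cancellation $[H:U]\cdot|U| = |H|$ together with $q^{\dim V_H}/q^{\dim V_U} = q^{m-1}$ (the dimension drop) and the shrinkage $\mu(V_U,T) \ge \mu(V_H,W)/m$ from \Cref{lem: Rossman shrinkage} then replaces your coset-counting, uniformly and without any case analysis on which $C_{p^f}$ occurs. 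It is also worth noting that the paper explicitly abandons the abstract $\beta_{Q,\Omega,d}$ framework of \Cref{sec:framework} for this theorem (it does not carry over to $C_q^k$ for non-prime $q$) and instead gives a direct induction on depth; the $\AC$-specific step used is precisely the transfer of nonconstancy on a test set to a single orbit under the $\OR$ gate, which is false for $\MAJ$.
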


\begin{corollary}\label{cor:all finite groups}
Let $G$ be any finite group. Define
\begin{align*}
    n(G)&=\max_{H\leq G:H\text{ is simple.}}\min_{\pi:H\to S_m:\ker(\pi)=\{1_G\}}m,\qquad
    q(G)=\max_{C_q\leq G:\text{ $q$ is a prime power.}}q.
\end{align*}
Then 
\[
  \mathcal L^{G^{k-1}}_{\mathsf{TC_d}}(\Word_{G,k}) \ge n^{d(k^{1/d}-1)},\qquad
  \mathcal L^{G^{k-1}}_{\mathsf{AC}_d}(\Word_{G,k}) \ge q^{d(k^{1/d}-1)}.
\]
\end{corollary}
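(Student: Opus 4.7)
The plan is to deduce both bounds from \Cref{thm: simple lower bound,thm: cyclic prime power LB} via a subgroup-restriction argument. For the $\mathsf{TC}_d$ bound, fix a simple subgroup $H \le G$ attaining the maximum in the definition of $n(G)$, and let $\rho : H \hookrightarrow S_n$ be its minimum faithful permutation representation, so that $n = n(G)$. The first observation is that, for $G$ in its given permutation representation on $[m]$, the restriction of $\Word_{G,k}$ to inputs in $\BAR{H}^k \subseteq \BAR{G}^k$ (the permutation matrices corresponding to elements of $H \le G$) equals $\Word_{H,k}$ with $H$ in the \emph{inherited} representation on $[m]$. Any $G^{k-1}$-invariant formula $\Phi$ computing $\Word_{G,k}$ is, a fortiori, $H^{k-1}$-invariant (since $H^{k-1} \le G^{k-1}$), and so the same $\Phi$ serves as an $H^{k-1}$-invariant $\mathsf{TC}_d$ formula computing $\Word_{H,k}$ in this inherited representation, with no change to size or depth.

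The next step is to relate the $H^{k-1}$-invariant $\mathsf{TC}_d$-formula complexity of $\Word_{H,k}$ in the inherited representation on $[m]$ to that in the minimum faithful representation $\rho$ on $[n]$. My plan is to argue that the lower bound of \Cref{thm: simple lower bound} in fact holds for \emph{any} faithful permutation representation of $H$, with the same constant $n$ equal to the intrinsic minimum faithful degree. Inspection of the proof of \Cref{thm: simple lower bound} should reveal that the only representation-dependent input is the fact that every proper subgroup of $H$ has index at least $n$---an intrinsic property of $H$ independent of the ambient representation. Failing this intrinsic route, one can instead restrict further to a single $H$-orbit of size exactly $n$ inside $[m]$, which exists after replacing $G$'s given representation by one containing the coset action $G \to \mathrm{Sym}(G/K)$ for $K \le H$ a subgroup of index $n$ realizing $\rho$, with the distinguished coset chosen as the first index.

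The $\mathsf{AC}_d$ bound follows by the parallel argument using \Cref{thm: cyclic prime power LB} in place of \Cref{thm: simple lower bound}, with a cyclic prime-power-order subgroup $C_q \le G$ (of order $q = q(G)$) in place of $H$. The main obstacle in both cases is the representation-matching issue noted above: since the corollary does not pin down $G$'s representation, one must either extend the main theorems' lower bounds to arbitrary faithful representations of $H$ (resp.\ $C_q$) or find the right sub-representation inside $G$'s given representation. I expect the former, intrinsic route to require only minimal modification of the framework in \Cref{sec:framework}, since both proofs are built around subgroup indices rather than specific representations.
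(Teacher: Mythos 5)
The paper gives no written proof of \Cref{cor:all finite groups}; the intended argument is the monotonicity remark after \Cref{Sn lower bound} combined with \Cref{thm: simple lower bound,thm: cyclic prime power LB}, which is also your plan. You have correctly isolated the subtlety that this glosses over: the simple subgroup $H \le G$ (resp.\ $C_q \le G$) inherits a representation on $[m]$ that is not the minimum-degree one assumed in those theorems.

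However, your ``intrinsic route'' has a genuine gap. You assert that ``the only representation-dependent input is the fact that every proper subgroup of $H$ has index at least $n$,'' and conclude that \Cref{thm: simple lower bound} ``in fact holds for any faithful permutation representation of $H$.'' This is false without qualification. A second, independent requirement is $\Stab_{Q_H}(\Word_{H,k}) = H^{k-1}$, which is what lets \Cref{lem:framework} be invoked at the pair $(H^{k-1},H^{k-1})$ of maximal $\mu$-value. If $H$ fixes the distinguished point $1 \in [m]$---entirely possible, since $H$ may sit inside $\Stab_G(1)$---then $\Word_{H,k}\!\restriction_{\BAR{H}^k}$ is constant, its $Q_H$-stabilizer is all of $Q_H$, $\mu$ collapses to $0$, and the framework yields nothing. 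Your fallback route does not repair this: one cannot ``replace $G$'s given representation by one containing the coset action,'' because the corollary is a lower bound against a formula $\Phi$ that lives in the given ambient $S_m$; changing $G$'s representation changes the problem. The missing idea is a conjugation step: since $G$ is transitive and faithful, $\Stab_G(1)$ has trivial core in $G$, so some conjugate $g^{-1}Hg$ satisfies $g^{-1}Hg \not\le \Stab_G(1)$. Replacing $H$ by this conjugate (which is still a simple subgroup of $G$ of the same isomorphism type, and any $G^{k-1}$-invariant $\Phi$ is automatically $(g^{-1}Hg)^{k-1}$-invariant), one now has $\Stab_H(1) < H$; simplicity of $H$ then forces $\Stab_H(1)$ to have trivial core in $H$, so $H$ acts faithfully and transitively on the orbit of $1$, of size $[H:\Stab_H(1)] \ge n(H)$ by \Cref{lem:min index}. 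With this in hand, the stabilizer computation in \Cref{sec:nonabelian simple} goes through verbatim with constant $c = n(H)$ (which is intrinsic, as you say). The same conjugation trick is needed for the $\mathsf{AC}_d$ bound via $C_q$, with the additional observation that a faithful $C_q$-action with $q$ a prime power must have some orbit of size exactly $q$ (the subgroup lattice of $C_q$ is a chain).
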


Note that \Cref{cor:all finite groups} is tight for abelian groups $G$ up to a factor of $|G|$, since for abelian $G$, we can write $G=C_{t_1}\times\dots\times C_{t_\ell}$ for some integers $\{t_i\}_{i\in [\ell]}$. This gives a divide and conquer $G^{k-1}$-invariant formula for $\Word_{G,k}$ by solving $\Word_{C_{t_i},k}$ for each $i$.

We remark that all of our lower bounds of the form $n^{d(k^{1/d}-1)}$ for $P$-invariant depth-$d$ formulas imply lower bounds $n^{k^{1/d}-1}/kn^2$ for $P$-invariant depth-$d$ circuits, since every ($P$-invariant) depth-$d$ circuit of size $s$ unfolds to a ($P$-invariant) depth-$d$ formula of size at most $s^{d-1} m$ where $m$ is the number of variables.

The parameters defined in \Cref{cor:all finite groups} have been studied before in the setting of all finite groups. Babai et al. \cite{babai1993faithful} show that for any finite group $G$, the minimum degree of a faithful permutation representation of $G$ upper bounds the size of every cyclic subgroup of prime power order. This shows that when $G$ is a finite simple group, the formula size lower bound given by \Cref{thm: simple lower bound} is always stronger than the bound given by \Cref{thm: cyclic prime power LB}. 

However, for some groups $G$ the lower bounds on $\mathcal{L}_{\mathsf{AC}}^{G^{k-1}}(\Word_{G,k})$ implied by \Cref{thm: simple lower bound} and \Cref{thm: cyclic prime power LB} are similar. In \Cref{sec:large gaps} we give examples where the gaps between the lower bounds may be large, conditional on some number-theoretic conjectures. Such examples justify the extra work to prove \Cref{thm: simple lower bound}, especially in the case where $G$ is nonabelian simple.

\subsection{The lower bound technique}

Let $f : \{0,1\}^m \to \{0,1\}$ be a $P$-invariant partial function, where $P \le S_m$ and $\Omega = \mathrm{Dom}(f)$ is a $P$-invariant subset of $\{0,1\}^m$.
To prove Theorem \ref{thm: simple lower bound}, we introduce a general framework that lower bounds the $P$-invariant $\mathsf{TC}_d$ formula size of $f$ in terms of a function $\beta_{Q,\Omega,d}(P)$ where $Q$ is any supergroup $P \le Q \le S_m$ such that $P$ is the $Q$-stabilizer of $f$.
Functions $\beta_{Q,\Omega,d}$ may be viewed as complexity measures on pairs of subgroups of $Q$, which are interesting in their own right. These complexity measures correspond to ``formulas'' that construct $P$ in a certain manner starting from $Q$-stabilizers of points $1,\dots,m$.

This lower bound framework is described in \Cref{sec:framework}. It is then applied to $\Word_{G,k}$ in for nonabelian simple groups in \Cref{sec:nonabelian simple}, and for cyclic group of prime order in \Cref{sec:prime order}.

For a broad outline of our technique see \Cref{sec:technique}

Our proof of Theorem \ref{thm: cyclic prime power LB} in \Cref{sec:prime power} uses a different argument that generalizes a proof in the special case $q=2$ in previous work of the second author \cite{Rossman2018}.

\subsection{Related work}

\subsubsection{$\mathsf{TC}_d$ and $\mathsf{AC}_d$ formula lower bounds for $\Word_{S_n,k}$}

For context, we state below the strongest known lower bounds for $\Word_{S_n,k}$ with respect to bounded depth $\mathsf{AC}$ formulas (without the restriction of $S_n^{k-1}$-invariance). The first three results all use Switching Lemmas, which limits these lower bounds to depth $o(\log n + \log k)$. In contrast, our asymptotically tight lower bounds in the $S_n^{k-1}$-invariant setting extend to arbitrary depth (but stabilize at depth $O(\log k)$).
\begin{itemize}
\item
Rossman \cite{Rossman2018a} gives a lower bound $n^{\Omega(\log k)}$ on the $\mathsf{AC}$ formula size of $\Word_{S_n,k}$ when $k \le \log\log n$, however only up to depth $d \le {\log n}/{(\log\log n)^{O(1)}}$.
\item
Beame, Impagliazzo and Pitassi \cite{Beame1998} prove a size-depth tradeoff $n^{\Omega(k^{1/\exp(d)})}$ for depth-$d$ $\mathsf{AC}$ formulas computing $\Word_{S_n,k}$ when $k \le \log n$ and $d \le \log\log k$.
\item
Chen, Oliveira, Servedio and Tan \cite{Chen2016} give an improved tradeoff $n^{\Omega(k^{1/2d}/d)}$
when $k \le n^{1/5}$ and $d \le \log k/\log\log k$,
however not for the function $\Word_{S_n,k}$ (a.k.a.\ iterated permutation matrix multiplication) but rather for the more general {\em iterated Boolean matrix multiplication} problem where input matrices $M_1,\dots,M_k$ are not guaranteed to be permutation matrices.
\item
In the case $n=2$, where $\Word_{S_2,k}$ reduces to the $\textsc{Parity}_k$, lower bound of \cite{Hastad1986,khrapchenko1971complexity,Rossman2015a} imply asymptotically tight tradeoffs:
\[
  \mathcal L_{\mathsf{AC}}(\Word_{S_2,k}) = \Theta(k^2),\qquad
  \mathcal L_{\mathsf{AC}_{d+1}}(\Word_{S_2,k}) = 2^{\Theta(d(k^{1/d}-1))}.
\]
Since these bounds are merely polynomial, they fail to separate $\mathsf{NC}^1$ from $\mathsf{Logspace}$.
\item
Rossman \cite{Rossman2018} gives nearly tight bounds on the $S_2^{k-1}$-invariant formula size of $\Word_{S_2,k}$:
\[
  2^{d(k^{1/d}-1)}
  \le
  \mathcal L^{S_2^{k-1}}_{\mathsf{AC}_{d+1}}(\Word_{S_2,k}) 
  \le
  k2^{dk^{1/d}}.
\]

\item 
Impagliazzo, Paturi, and Saks \cite{impagliazzo1997size} prove that a depth-$d$ $\mathsf{TC}$ circuit computing $\Word_{S_2,k}$ must have size at least $k^{1+1/(1+\sqrt2)^d}$. By hardness-magnification results of Chen and Tell \cite{chen2019bootstrapping} expanding on work by Allender and Kouck\'{y} \cite{allender2010amplifying}, there exists $c>1$ such that improving the above bound to $k^{1+c^{-d}}$ would imply the separation $\mathsf{TC}^0\neq \mathsf{NC}^1$.

\item
Using representation theory, Alexeev, Forbes, Tsimerman \cite{alexeev2011tensor} prove upper bounds on the rank of the group tensor (the polynomial version of the word problem), hence proving upper bounds on arithmetic formula complexity of the word problem. They show that the depth-$d+1$ arithmetic formula size of the word problem polynomial $W_{G,k}$ is at most $\sum_{\rho\in \widehat{G}}\deg(\rho)^{dk^{\frac1d}}$, where $\widehat{G}$ is the set of irreducible representations of $G$. However, for many groups, their formulas are not $G^{k-1}$-invariant due to making choices of cosets of subgroups.
\end{itemize}

In this paper we improve on the result of \cite{Rossman2018} in two ways. First, in the setting of cyclic groups of prime order, we extend the lower bound from formulas in $\mathsf{AC}$ to formulas in $\mathsf{TC}$ as stated in \Cref{thm: simple lower bound}. Second, we generalize the arguments used to cyclic groups of prime power order for the $\mathsf{AC}$ lower bound in \Cref{thm: cyclic prime power LB}.

\subsubsection{The word problem for finitely presented groups}\label{sec:fp groups}

An important point to make is the distinction between our version of the word problem on groups and other more commonly studied versions. In the most commonly studied version of the word problem, one is given a set of generators for the group and the relations among the generators. Then the problem is to determine whether a sequence of these generators multiplies to equal the identity element. Often these problems deal with infinite groups. The complexity of this version of the word problem has been well-studied \cite{Dehn1911,Bartholdi2020,Myasnikov2017,Miasnikov2018}. On the other hand, we only deal with finite groups and do not worry about generators and relations, but are rather explicitly given the group elements serving as input to our problem.

\section{Preliminaries}

We fix the following notation throughout. For any $k\in \mathbbm{N}$, let $[k]=\{1,\dots,k\}$.

$G$ shall always be a finite permutation group on $n$ elements (i.e, a subgroup of $S_n$), which we assume to be transitive. $P$ and $Q$ shall always be subgroups of $S_m$, where $m = kn^2$ in our application. 

We write $1_G$ for the identity element in $G$ and 
$Z(G)$ for the center of $G$. 
If $H$ is a subgroup of $G$, then we write $H\leq G$. If moreover $H$ is a normal subgroup, then we write $H\trianglelefteq G$. If $G$ has no nontrivial proper normal subgroup, then $G$ is \emph{simple}. We denote $|G|$ to be the order of $G$.

We will often consider $k$-fold powers of a group $G$, denoted $G^k$. For $S\subseteq[k]$, let $\pi_S:G^k\to G^S$ be the projection homomorphism to $S$. A {\em permutation representation} of $G$ is a homomorphism $G\to S_n$ for some positive integer $n$. We call $n$ the {\em degree} of the representation, and if the homomorphism is injective, then the representation is called {\em faithful}. Define the subgroup $\Diag(G^2)<G^2$ to be the subgroup $\{(g,g):g\in G\}$.

\begin{definition}\normalfont
A {\em labeled tree} is a finite unordered rooted tree in which each node is associated with a label.  

An {\em $m$-variable $\AC$ formula} (respectively, {\em $\TC$ formula}) is a labeled tree in which each leaf (``input'') is labeled by a constant symbol or literal in the set $\{\texttt 0,\texttt 1,\texttt x_1,\BAR{\texttt x_1},\dots,\texttt x_m,\BAR{\texttt x_m}\}$ and each non-input (``gate'') is labeled by $\AND$ or $\OR$ (respectively, labeled by $\MAJ$).

The {\em depth} of a formula is the maximum number of gates on a leaf-to-root path. The {\em size} (a.k.a.\ {\em leaf-size}) of formula is the number of leaves that are labeled by literals.

For $d \in \mathbbm N$, we write $\AC_d$ (resp.\ $\TC_d$) for the set of depth-$d$ $\AC$ formulas (resp.\ $\TC$ formulas) {\em up to isomorphism}.  That is, we consider two formulas to be the same iff there exists an isomorphism between them as labeled graphs.

An alternative, concrete inductive definition of sets $\AC_d$ and $\TC_d$ is given by:
\begin{itemize}
    \item 
      Let $\AC_0 = \TC_0 = \{\texttt 0,\texttt 1,\texttt x_1,\BAR{\texttt x}_1,\dots,\texttt x_m,\BAR{\texttt x}_m\}$.
    \item
      For $d \ge 1$, 
      $\AC_d$ (resp.\ $\TC_d$) is the set of pairs of form $(\AND,I)$ or $(\OR,I)$ (resp.\ of the form $(\MAJ,I)$) where $I$ is a multiset of formulas in $\AC_0 \cup \dots \cup \AC_{d-1}$ (resp.\ $\TC_0 \cup \dots \cup \TC_{d-1}$).
\end{itemize}

The symmetric group $S_m$ acts on $\AC_d$ and $\TC_d$ by permuting indices on literals. For $P \le S_m$, we say that a formula $\Phi$ is {\em $P$-invariant} if $\pi \Phi = \Phi$ for all $\pi \in P$.

Every formula $\Phi$ computes a Boolean function denoted $\sem{\Phi} : \{0,1\}^m \to \{0,1\}$ in the usual way.  
For a partial function $f : \Omega \to \{0,1\}$ where $\Omega \subseteq \{0,1\}^m$, we say that $\Phi$ {\em computes} $f$ if $\sem{\Phi}(x)=f(x)$ for all $x \in \Omega$.

Note that $P$-invariance of $\Phi$ implies $P$-invariance of $\sem{\Phi}$, but the converse need not hold in general.  Also note that $\MAJ$ gates can simulate both $\AND$ and $\OR$ gates (by padding by an appropriate number of zeros or ones).  Any function computable by $\AC$ formulas of a given size and depth is therefore computable by a $\TC$ formula of the same size and depth.
\end{definition}

\section{$S_n^{k-1}$-invariant formulas for $\Word_{S_n,k}$}\label{Upper bounds}

\begin{lemma}\label{lem:upper}
For all $n,k,d \ge 1$ such that $k^{1/d}$ is an integer, the function $\Word_{S_n,k}$ is computed by $S_n^{k-1}$-invariant
$\Sigma_{d+1}$ and $\Pi_{d+1}$ formulas of size $kn^{d(k^{1/d}-1)}$.
\end{lemma}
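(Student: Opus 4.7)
The plan is an $r$-way recursive divide-and-conquer with $r = k^{1/d}$, alternating $\Sigma$-type and $\Pi$-type representations at successive levels so that adjacent same-type gates merge and the total depth comes out to exactly $d+1$.

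For the base case $d=1$ I would record two $S_n^{k-1}$-invariant depth-$2$ formulas for the $(a,b)$-entry of a product $M_1\cdots M_k$, each of size $kn^{k-1}$:
\[
  (M_1\cdots M_k)_{a,b} \;=\; \bigvee_{(c_1,\ldots,c_{k-1})\in [n]^{k-1}} \bigwedge_{i=1}^{k} M_i(c_{i-1},c_i)
\]
and
\[
  (M_1\cdots M_k)_{a,b} \;=\; \bigwedge_{(c_1,\ldots,c_{k-1})\in [n]^{k-1}} \Bigl(\,\bigvee_{i=1}^{k-1} \neg M_i(c_{i-1},c_i) \;\vee\; M_k(c_{k-1},b)\,\Bigr),
\]
with the convention $c_0=a$ and $c_k=b$. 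The $\Sigma_2$ formula is the usual path formula, and the $\Pi_2$ formula is valid on permutation inputs because the unique path out of $a$ traced by $M_1,\ldots,M_{k-1}$ must be completed by $M_k$ sending its endpoint to $b$. Both are $S_n^{k-1}$-invariant since the outer quantifier ranges over all intermediate-index tuples.

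For the inductive step I partition the $k=r^d$ matrices into $r$ consecutive blocks $B_1,\ldots,B_r$ of length $r^{d-1}$. To build the $\Sigma_{d+1}$ formula, apply the base $\Sigma_2$ formula at the block level and replace each entry $(B_i)_{a_{i-1},a_i}$ by the inductive $\Pi_d$ formula; the outer $\wedge$ then merges with the top $\wedge$ of each inner block formula, producing an OR-AND-OR-AND-$\cdots$ structure of depth $2+d-1=d+1$. Dually, to build the $\Pi_{d+1}$ formula, apply the base $\Pi_2$ formula at the block level, express each $\neg(B_i)_{a_{i-1},a_i}$ (for $i<r$) as the De Morgan dual of the inductive $\Pi_d$ formula and $(B_r)_{a_{r-1},1}$ by the inductive $\Sigma_d$ formula, and merge the outer $\vee$ with the top $\vee$ of each inner formula. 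Both cases give leaf-count
\[
  n^{r-1}\cdot r\cdot r^{d-1}n^{(d-1)(r-1)} \;=\; kn^{d(r-1)}.
\]
Invariance under $S_n^{k-1}$ passes through: the ``boundary'' factors of $S_n^{k-1}$ simply relabel the outer tuple $(a_1,\ldots,a_{r-1})$, while the ``interior'' factors preserve each inner block formula by the induction hypothesis.

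The main obstacle I foresee is arranging the $\Pi_2$ base formula to have size exactly $kn^{k-1}$: a naive distribution of the $\Sigma_2$ path formula into CNF would blow up by a factor of $n$, so one must exploit the permutation-matrix constraint to drop the negation from one literal per clause. The remaining work is careful bookkeeping of the $\Sigma$/$\Pi$ alternation at each level so that each merge shaves off exactly one layer, leaving $d+1$ layers overall.
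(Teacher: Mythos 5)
Your proof is correct and takes essentially the same approach as the paper: the same $\Sigma_2$/$\Pi_2$ base case, the same partition into $k^{1/d}$ consecutive blocks of length $k^{(d-1)/d}$, and the same De Morgan/merge bookkeeping to get depth exactly $d+1$. (The paper's own write-up has a couple of off-by-one typos in the depth subscripts during the induction, which your version avoids.)
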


\begin{proof}
For $u_0,u_k \in [n]$, let $\Word^{(u_0,u_k)}_{S_n,k}(M_1,\dots,M_k)$ denote the $(u_0,u_k)$-entry of $M_1\cdots M_k$. Thus, $\Word_{S_n,k}$ is the function $\Word^{(1,1)}_{S_n,k}$.

In the base case $d=1$, we have $\Sigma_2$ and $\Pi_2$ formulas:
\begin{align*}
  \Word&{}^{(u_0,u_k)}_{S_n,k}(M_1,\dots,M_k)\\
  &=
  \bigvee_{u_1,u_2,\dots,u_{k-1} \in [n]}\ 
  \Big(
    M_{1,u_0,u_1} \wedge M_{2,u_1,u_2} \wedge \dots \wedge M_{k-1,u_{k-2},u_{k-1}} \wedge M_{k,u_{k-1},u_k}
  \Big),\\
  \Word&{}^{(u_0,u_k)}_{S_n,k}(M_1,\dots,M_k) \\
  &=
  \bigwedge_{u_1,u_2,\dots,u_{k-1} \in [n]}\ 
  \Big(
    \Big(M_{1,u_0,u_1} \wedge M_{2,u_1,u_2} \wedge \dots \wedge M_{k-1,u_{k-2},u_{k-1}}\Big) \Rightarrow M_{k,u_{k-1},u_k}
  \Big)\\
  &=
  \bigwedge_{u_1,u_2,\dots,u_{k-1} \in [n]}\ 
  \Big(
    \bigvee_{j\in[k-1]}
    \neg M_{j,u_{j-1},u_j}
    %\neg M_{1,u_0,u_1} \vee \neg M_{2,u_1,u_2} \vee \dots \vee \neg M_{k-1,u_{k-2},u_{k-1}} 
    \vee M_{k,u_{k-1},u_k}
  \Big).
\end{align*}
Both formulas are clearly $S_n^{k-1}$-invariant and have size $kn^{k-1}$.

For the induction step, assume $d \ge 2$ and let $\ell = k^{(d-1)/d}$. Then
\begin{align*}
  \Word&{}_{S_n,k}^{(u_0,u_k)}(M_1,\dots,M_k)\\ 
  &=
  \bigvee_{u_\ell,u_{2\ell},\dots,u_{k-\ell} \in [n]}\ 
  \bigwedge_{j\in[k^{1/d}]}
  \Word_{S_n,\ell}^{(u_{(j-1)\ell},u_{j \ell})}(M_{(j-1)\ell+1},\dots,M_{j\ell}).
\end{align*}
By induction, each $\Word_{S_n,\ell}^{(\cdot,\cdot)}$ has an $S_n^{\ell-1}$-invariant $\Pi_{d-1}$ formula of size $\ell n^{(d-1)(\ell^{1/(d-1)}-1)}$. Substituting these formulas above and collapsing the adjacent layers of $\AND$ below the output, we get an $S_n^{k-1}$-invariant $\Sigma_d$ formula for $\Word_{S_n,k}^{(u_0,u_k)}(M_1,\dots,M_k)$ of size $n^{(k-\ell)/\ell} k^{1/d} \cdot \ell n^{(d-1)(\ell^{1/(d-1)}-1)}$, which equals $kn^{d(k^{1/d}-1)}$.
We similarly get an $S_n^{k-1}$-invariant $\Pi_d$ formula from the observation that
\begin{align*}
  \Word&{}_{S_n,k}^{(u_0,u_k)}(M_1,\dots,M_k)\\ 
  &=
  \bigwedge_{u_\ell,u_{2\ell},\dots,u_{k-\ell} \in [n]}
  \Big(
  \bigvee_{j\in[k^{1/d}-1]}
  \neg\Word_{S_n,\ell}^{(u_{(j-1)\ell},u_{j \ell})}(M_{(j-1)\ell+1},\dots,M_{j\ell})\\
  &\hspace{2.5in} \vee
  \Word_{S_n,\ell}^{(u_{k-\ell},u_{k})}(M_{k-\ell+1},\dots,M_{k})
  \Big).
\end{align*}
Here we replace each $\Word_{S_n,\ell}^{(\cdot,\cdot)}$ subformula under a negation with a $\Pi_{d-1}$ formula, which we then convert to $\Sigma_{d-1}$ formula using DeMorgan's law.
\end{proof}

When $k^{1/d}$ is not necessary an integer, a similar divide-and-conquer construction gives the following upper bound:

\begin{corollary}\label{cor:upper}
For all $n,k,d \ge 1$, the function $\Word_{S_n,k}$ is computed by $S_n^{k-1}$-invariant $\Sigma_{d+1}$ and $\Pi_{d+1}$ formulas of size $n^{O(dk^{1/d})}$.  In particular, 
$\Word_{S_n,k}$ has $S_n^{k-1}$-invariant $\mathsf{AC}$ formulas of size $n^{O(\log k)}$ and depth $O(\log k)$.
\end{corollary}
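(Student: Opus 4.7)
The plan is to deduce \Cref{cor:upper} from \Cref{lem:upper} by a padding argument. Let $\ell := \lceil k^{1/d} \rceil$ and $k' := \ell^d \ge k$, so that $(k')^{1/d} = \ell$ is an integer. \Cref{lem:upper} then supplies an $S_n^{k'-1}$-invariant $\Sigma_{d+1}$ formula $\Phi'$ (and an analogous $\Pi_{d+1}$ formula) computing $\Word_{S_n,k'}$ of size $k'n^{d(\ell-1)}$. Because the identity matrix $I_n$ is a permutation matrix whose insertion into a product leaves every entry unchanged,
\[
  \Word_{S_n,k}(M_1,\dots,M_k) = \Word_{S_n,k'}(M_1,\dots,M_k,I_n,\dots,I_n).
\]
Substituting the appropriate constants $\texttt 0,\texttt 1$ for each variable $M_{i,a,b}$ with $i > k$ in $\Phi'$ yields a depth-$(d+1)$ formula $\Phi$, of size no larger than that of $\Phi'$, computing $\Word_{S_n,k}$.

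The step requiring the most care is showing that $S_n^{k-1}$-invariance survives this substitution. Consider the subgroup
\[
  H := \{(g_1,\dots,g_{k-1},1_{S_n},\dots,1_{S_n}) : g_i \in S_n\} \le S_n^{k'-1}.
\]
Under the $S_n^{k'-1}$-action on input variables, elements of $H$ fix every variable $M_{i,a,b}$ with $i > k$ (since both $g_{i-1}$ and $g_i$ equal $1_{S_n}$ for such indices), while their action on the remaining variables coincides with the natural $S_n^{k-1}$-action on inputs of $\Word_{S_n,k}$. Because $\Phi'$ is $S_n^{k'-1}$-invariant, it is in particular $H$-invariant; since $H$ pointwise fixes the variables being substituted, this invariance passes to $\Phi$, giving the required $S_n^{k-1}$-invariance.

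It remains to bound the size $k' n^{d(\ell-1)}$. Elementary estimates give $\ell - 1 \le k^{1/d}$ and $k' = \ell^d \le (k^{1/d}+1)^d \le 2^d \max(1,k)$; combined with the standard inequality $d(k^{1/d}-1) \ge \ln k$ (immediate from $e^x \ge 1+x$), each of $k$, $2^d$, and $\ell^d$ is bounded by $n^{O(dk^{1/d})}$ whenever $n \ge 2$. The total size is therefore $n^{O(dk^{1/d})}$, establishing the main assertion. For the final sentence of the corollary, setting $d := \lceil \log_2 k \rceil$ yields $k^{1/d} \le 2$, producing an $S_n^{k-1}$-invariant $\mathsf{AC}$ formula of size $n^{O(\log k)}$ and depth $O(\log k)$.
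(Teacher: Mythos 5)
Your padding argument is correct and is a genuinely different route from the one the paper sketches. The paper states \Cref{cor:upper} with the one-line remark that ``a similar divide-and-conquer construction'' handles the non-integral case, i.e.\ it intends to rerun the inductive construction of \Cref{lem:upper} with roughly $\lceil k^{1/d}\rceil$ blocks per level. You instead use \Cref{lem:upper} entirely as a black box: round $k$ up to a perfect $d$-th power $k'=\lceil k^{1/d}\rceil^d$, take the $S_n^{k'-1}$-invariant depth-$(d{+}1)$ formula for $\Word_{S_n,k'}$, and substitute identity-matrix constants for the extra $k'-k$ inputs. The payoff of your approach is that it isolates the only genuinely new step --- checking that the restriction to the subgroup
\[
H=\{(g_1,\dots,g_{k-1},1_{S_n},\dots,1_{S_n})\}\le S_n^{k'-1}
\]
fixes the substituted variables pointwise and acts as $S_n^{k-1}$ on the rest, so that invariance descends through the substitution --- and you verify it correctly. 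It also avoids re-deriving the size recurrence. The cost is a slightly lossier estimate ($k'\le 2^d k$), but since the target is only $n^{O(dk^{1/d})}$ this is absorbed; your arithmetic ($\ell-1\le k^{1/d}$, $d(k^{1/d}-1)\ge\ln k$, $2^d\le n^{dk^{1/d}}$ for $n\ge 2$) is sound, and the $n=1$ and $k=1$ corner cases are trivial. The final choice $d=\lceil\log_2 k\rceil$ for the $\mathsf{AC}$ statement is exactly right.

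One small remark on rigor that you handled correctly and is worth keeping explicit: when substituting constants into $\Phi'$ one must \emph{not} simplify the formula, only relabel leaves by $\texttt 0$ or $\texttt 1$; this keeps the tree structure (hence depth, and hence membership in $\Sigma_{d+1}$/$\Pi_{d+1}$) unchanged, and since size counts only literal-labeled leaves, the size can only decrease. With that reading, the substitution commutes with the $H$-action precisely because $H$ fixes the substituted variables, which is the crux of your invariance claim.
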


The upper bound of Corollary \ref{cor:upper} is the quantitatively strongest known (up to constants in the exponent), even for $\mathsf{TC}$ formulas that may include $\MAJ$ gates.

\section{Overall technique}\label{sec:technique}
The structure of the proof of \Cref{thm: simple lower bound} is quite similar to that of the upper bound. Both bounds are inductive on formula depth, and deal with dividing the product $M_1 \cdots M_k$ into contiguous segments. While the upper bound keeps track of how many subproblems of permutation multiplication we have, the lower bound keeps track of the amount of symmetry that the subformulas of the formula must retain. This amount of symmetry is quantified by the length of the longest contiguous segment of the product (i.e. $M_i \cdots M_j$ for some $1 \leq i \leq j \leq k$) over which the formula is fully symmetric. 

Before giving the proof of \Cref{thm: simple lower bound}, we give a rough overview of our proof using the case in which $G=A_n$, the alternating group on $n$ elements. However, we remark that there is no significant simplification in the proofs given in \Cref{sec:nonabelian simple} by setting $G=A_n$ rather than allowing $G$ to be any nonabelian simple group. We merely focus on $A_n$ here for concreteness of the permutation action on $[n]$.

Let $\Phi$ be an $A_n^{k-1}$-invariant formula computing $\Word_{A_n,k}$. As a first step, we let $A_n^{k-1}$ act on $\Phi$ as a subgroup of a larger group action, given by letting $Q=A_n^{2k}$ act on inputs $(h_1,\dots,h_k)$:
\begin{align*}
    (g_1,\dots,g_{2k})(h_1,\dots,h_k)&=(g_1h_1g_2^{-1},\dots,g_{2k-1}h_kg_{2k}^{-1}).
\end{align*}
Now $A_n^{k-1}$ can be viewed as the subgroup of $Q=A_n^{2k}$ given by the equations $g_2=g_3,\dots,g_{2k-2}=g_{2k-1},g_{2k}=g_1=1_G$. 

We will define a complexity measure $\mu$ on subgroups of $Q$. Roughly, if a formula is stabilized by a subgroup $H$, then $\mu(H)$ provides a good lower bound on the size of that formula. In this overview, we will not say what exactly this complexity measure is, but we will say that it quantifies segments of the product $M_1\cdots M_k$ over which a formula is fully symmetric. What is more important here is the following two properties of $\mu$.
\begin{enumerate}
    \item Let a set of formulas $\{\Phi_j\}$ have $Q$-stabilizer $H$. If $\{\Phi_j\}=O_1\cup\dots\cup O_t$ breaks this set into its orbits under $H$, then there is some orbit $O_i$ with $Q$-stabilizer $H'$ such that $\mu(H')\geq \mu(H)$.
    \item Let $\Phi_j$, which has $Q$-stabilizer $H_j$, lie in the orbit $O_i$ stabilized by $H'$. Then $\mu(H_j)\geq \frac{\mu(H')}{1+\log_n[H':H'\cap H_j]}$.
\end{enumerate}
The idea for using these properties to get formula lower bounds is as follows. We can let $Q$ act on a formula $\Phi=(\MAJ,\{\Phi_j\})$ and find its stabilizer $H=\Stab_Q(\Phi)$. We then break $\{\Phi_j\}$ into orbits under the action of $H$: $\{\Phi_j\}=O_1\cup\dots\cup O_t$. From the first property we know that one of the orbits $O_i$ has a high complexity stabilizer $H'$.  

We examine a representative $\Phi_j\in O_i$ of that high complexity orbit. At this point we have roughly two things that can happen. First, if the $Q$-stabilizer $H_j=\Stab_Q(\Phi_j)$ of $\Phi_j$ has $[H':H'\cap H_j]$ small then it must be that $\mu(H_j)$ is large by the second property, and we apply induction to get a good lower bound on the size of $\Phi_j$ in terms of $\mu(H_j)$. Each subformula in the orbit $O_i$ has the same size as $\Phi_j$ by transitivity, so this gives a good lower bound on the size of $\Phi$. Otherwise, $[H':H'\cap H_j]$ is large, and the orbit-stabilizer theorem tells us that there are many subformulas in $O_i$, which again provides a good lower bound on the size of $\Phi$ in terms of $\mu(H)$.

Finally, we show that $\mu(A_n^{k-1})$ is large in order to get a good lower bound on the size of a formula with $Q$-stabilizer equal to $A_n^{k-1}$. An $A_n^{k-1}$-invariant formula computing the word problem must have $Q$-stabilizer equal to $A_n^{k-1}$, so we have the lower bound we want.

This is an oversimplified overview of our actual analysis, and the real proof of the lower bound considers pairs of subgroups corresponding to formulas. One subgroup in the pair is the stabilizer of the formula, and the other is the stabilizer of the function computed by the formula. This can be defined in a formal way. More care also needs to be put into the second property, especially in concerns such as the transition from $H'$-stabilizers to $Q$-stabilizers. Such considerations complicate the analysis, but the above ideas still lie at the core of the argument.

\section{The lower bound framework}\label{sec:framework}
Let $Q \le S_m$, and let $\Omega$ be a $Q$-invariant subset of $\{0,1\}^m$.
We consider the action of $Q$ on the set of functions with $\Omega$ (and arbitrary codomain).

For an integer-valued function $f:\Omega\to\mathbbm{N}$, let 
\[
  \mathrm{Stab}_Q(f) = \{\pi \in Q : f(x) = f(x_{\pi(1)},\dots,x_{\pi(m)}) \text{ for all } x \in \Omega\}.
\]
For a set of formulas $\{\Phi_i\}$ with each $\Phi_i$ taking inputs in $\Omega$, let
\begin{align*}
    \Stab_Q(\{\Phi_i\})=\{\pi\in Q: \{\pi \Phi_i\}=\{\Phi_i\}\}.
\end{align*}
To simplify notation, write $\Stab_Q(\Phi)=\Stab_Q(\{\Phi\})$ when $\{\Phi\}$ is a set containing just one formula.

Let $\mathcal B_{Q,\Omega} \subseteq \mathcal N_{Q,\Omega}$ be the following sets of subgroups of $Q$:
\begin{align*}
  \mathcal{B}_{Q,\Omega} &= \{\Stab_Q(f): f\text{ is a Boolean function with domain }\Omega \text{ and codomain } \{0,1\}\},\\
  \mathcal N_{Q,\Omega} &= \{\mathrm{Stab}_Q(f) : f \text{ is a function with domain } \Omega \text{ and codomain $\mathbbm{N}$}\}.
\end{align*}

Note $\mathcal{N}_{Q,\Omega}$ is the closure of $\mathcal{B}_{Q,\Omega}$ under intersections.

Let $\chi_1,\dots,\chi_m : \{0,1\}^m \to \{0,1\}$ be the coordinate functions $\chi_i(x) = x_i$. For a formula $\Phi$, we suppress notation and write $\sem{\Phi}|_\Omega=\sem{\Phi}$ when $\Omega$ is clear from context.

\begin{definition}\normalfont\label{def:beta}
For $d=0,1,2,\dots$, we define a function $\beta_{Q,\Omega,d} : \{(H,K):H\leq K\leq Q, K\in\mathcal{B}_{Q,\Omega}\} \to \mathbbm{N} \cup \{\infty\}$ inductively. For any $H\leq K\leq Q$ with $K\in\mathcal{B}_{Q,\Omega}$, let 
\[
  \beta_{Q,\Omega,0}(H,K) = \begin{cases}
    0 &\text{if } H = Q,\\
    1 &\text{if } H < Q \text{ and $H = \mathrm{Stab}_Q(\chi_i|_\Omega)$ for some $i \in [m]$},\\
    \infty &\text{otherwise.}
  \end{cases}
\]
Then, for $d \ge 1$ define
\begin{align*}
    \beta_{Q,\Omega,d}(H,K)
    &= 
    \min_{\substack{
    \text{$(H,K)$-good }r\in\mathbbm{N},\\
    H_1,\dots,H_r\leq Q,\\
    U_1,\dots,U_r\leq Q,\\
    L_1,\dots,L_r\in\mathcal{N}_{Q,\Omega},\\
    V_1,\dots,V_r\in\mathcal{B}_{Q,\Omega}.
    }}\max_{i\in[r]}[H_i:H_i\cap U_i] \beta_{Q,\Omega,d-1}(U_i,V_i),
\end{align*}
where $(r,(H_i)_{i\in[r]},(U_i)_{i\in[r]},(L_i)_{i\in[r]},(V_i)_{i\in[r]})$ is \emph{$(H,K)$-good} if and only if all of the following hold.
\begin{enumerate}[label=\alph*)]
    \item\label{prop a} For all $i\in[r]$, $U_i\leq V_i$ and $H_i\leq L_i$.
    \item\label{syntactic intersection} $H_1\cap\dots \cap H_r=H$.
    \item\label{semantic intersection} $L_1\cap\dots \cap L_r\leq K$.
    \item\label{syntactic conjugate} For all $i\in[r]$, $\bigcap_{h\in H_i}h^{-1}U_ih\leq H_i$.
    \item\label{semantic conjugate} For all $i\in[r]$, $\bigcap_{h\in H_i}h^{-1}V_ih\leq L_i$. 
\end{enumerate}
Let $\beta_{Q,\Omega}(H,K) = \lim_{d\to\infty} \beta_{Q,\Omega,d}(H,K)$.
\end{definition}

Note that in the inductive definition of $\beta_{Q,\Omega,d}$, the quantity $\beta_{Q,\Omega,d-1}(U_i,V_i)$ is defined because $V_i\in\mathcal{B}_{Q,\Omega}$ and \Cref{prop a}. Also observe that monotonicity with respect to $d$ is displayed: $\beta_{Q,\Omega,0}(H,K) \ge \beta_{Q,\Omega,1}(H,K) \ge \beta_{Q,\Omega,2}(H,K) \ge\dots$. Also, we have that $\beta_{Q,\Omega,d}(H,Q)=0$ for all $d$ and $H\leq Q$.

\Cref{lem:framework} shows that given some $Q$ acting on $\Omega$, $\beta_{Q,\Omega,d}(H,K)$ lower bounds the size of formulas $\Phi$ with $\Stab_Q(\Phi)=H$ and $\Stab_Q(\sem{\Phi})=K$. This is helpful because it converts the problem of lower bounding formula size to a purely group-theoretic problem of lower bounding the inductively defined function $\beta_{Q,\Omega,d}$ on pairs of subgroups. 

Before stating and proving the result, we remark that it is important that we defined $\beta_{Q,\Omega,d}$ on pairs of subgroups of $Q$, one of which is in $\mathcal{B}_{Q,\Omega}$. This is because the sets $\mathcal{B}_{Q,\Omega}$ and $\mathcal{N}_{Q,\Omega}$ may not contain $\Stab_Q(\Phi)$ for a formula $\Phi$. In general, formulas may carry more information about themselves than just the way they evaluate on inputs in $\Omega$.

However, it is often the case that $\mathcal{N}_{Q,\Omega}$ is easier to study than the set of all subgroups of $Q$, which may contain badly behaved subgroups. Thus, in our applications, our complexity measures also may take into account $\Stab_Q(\sem{\Phi})$, which must lie in $\mathcal{B}_{Q,\Omega}$, since $\sem{\Phi}$ is a Boolean function on $\Omega$. We then leverage the relationship $\Stab_Q(\Phi)\leq \Stab_Q(\sem{\Phi})$ frequently.

\begin{lemma}\label{lem:framework}
Let $\Phi$ be a $\mathsf{TC}_d$ formula. Then
\[
  \mathrm{size}(\Phi) \ge \beta_{Q,\Omega,d}(\mathrm{Stab}_Q(\Phi),\Stab_{Q}(\sem{\Phi})).
\]
\end{lemma}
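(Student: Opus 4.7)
The plan is to proceed by induction on $d$. The base case $d=0$ is routine: a constant formula has size $0$ with $H = K = Q$ so $\beta_{Q,\Omega,0}(H,K) = 0$, while a literal $\texttt x_j$ or $\overline{\texttt x}_j$ has size $1$ with $\Stab_Q(\Phi) = \Stab_Q(\chi_j|_\Omega)$, giving $\beta_{Q,\Omega,0}(H,K) \le 1$ directly from the definition.

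For the inductive step $d \ge 1$, I would write $\Phi = (\MAJ, \{\Phi_1, \ldots, \Phi_s\})$ and set $H = \Stab_Q(\Phi)$, $K = \Stab_Q(\sem{\Phi})$. The central geometric move is to partition the multiset $\{\Phi_j\}$ into $H$-orbits $O_1, \ldots, O_r$, pick a representative $\Phi_i^* \in O_i$ in each orbit, and introduce four $Q$-subgroups:
\[
  H_i := \Stab_Q(O_i), \quad U_i := \Stab_Q(\Phi_i^*), \quad L_i := \Stab_Q(\{\sem{\psi}: \psi \in O_i\}), \quad V_i := \Stab_Q(\sem{\Phi_i^*}).
\]
Since $H \le H_i$ and $H$ already acts transitively on $O_i$, we have $H_i\,\Phi_i^* = O_i$, and orbit--stabilizer then yields $|O_i| = [H_i : H_i \cap U_i]$.

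Next I would verify that the tuple $(r, (H_i), (U_i), (L_i), (V_i))$ is $(H,K)$-good. Condition (a) ($U_i \le V_i$, $H_i \le L_i$) follows from the general principle that the syntactic stabilizer of a formula (or multiset of formulas) is contained in the semantic one. Condition (b), $\bigcap_i H_i = H$, uses that elements in a common $H$-orbit have equal multiplicity in $\{\Phi_j\}$, so preserving every orbit setwise is equivalent to preserving the multiset. Condition (c), $\bigcap_i L_i \le K$, uses the symmetry of $\MAJ$: preserving each semantic sub-multiset $\{\sem{\psi}: \psi \in O_i\}$ preserves the union $\{\sem{\Phi_j}\}$, which determines $\sem{\Phi}$. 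For the conjugation conditions (d) and (e), transitivity of $H_i$ on $O_i$ gives $\bigcap_{h \in H_i} h^{-1} U_i h = \bigcap_{\psi \in O_i} \Stab_Q(\psi) \le \Stab_Q(O_i) = H_i$, and the analogous calculation handles $V_i$ and $L_i$ on the semantic side.

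Finally, since the $|O_i|$ subformulas in $O_i$ all have the same size as $\Phi_i^*$ (being $Q$-images of one another), $\text{size}(\Phi) \ge |O_i| \cdot \text{size}(\Phi_i^*) \ge [H_i : H_i \cap U_i] \cdot \beta_{Q,\Omega,d-1}(U_i, V_i)$ for every $i$, where the inner inequality applies the inductive hypothesis to $\Phi_i^*$ (and monotonicity of $\beta_{Q,\Omega,d'}$ in $d'$ handles children of depth strictly below $d-1$). Taking the maximum over $i$ and using that the tuple is $(H,K)$-good, I conclude $\text{size}(\Phi) \ge \beta_{Q,\Omega,d}(H,K)$, closing the induction. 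The hard part will be the bookkeeping for conditions (d) and (e): these are precisely what reconcile the arbitrary choice of representative $\Phi_i^*$ with the fact that its $H_i$-conjugates sweep out the full orbit $O_i$, and they are the reason the definition of $\beta$ must carry the conjugation structure rather than only simpler subgroup containments.
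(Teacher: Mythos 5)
Your overall structure matches the paper's proof exactly, but there is a genuine gap in one of the four subgroup choices. You take $L_i := \Stab_Q(\{\sem{\psi}: \psi \in O_i\})$, the stabilizer of the \emph{multiset of Boolean functions} computed by the $i$-th orbit. Nothing in your argument shows that this subgroup lies in $\mathcal{N}_{Q,\Omega}$, and in fact it need not: $\mathcal{N}_{Q,\Omega}$ is the intersection-closure of $\mathcal{B}_{Q,\Omega}$, i.e.\ it consists of subgroups obtainable as $\bigcap_j \Stab_Q(g_j)$ for Boolean $g_j$, whereas the stabilizer of a multiset of functions is a ``disjunctive'' condition (there exists a permutation matching the images to the originals) and can produce subgroups outside this class. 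Concretely, take $Q = S_3$ acting on $\Omega = \{0,1\}^3$ and a depth-$2$ $\TC$ formula $\Phi = (\MAJ,\{\Psi_1,\Psi_2,\Psi_3\})$ with $\sem{\Psi_1} = x_1\wedge\neg x_2$, $\sem{\Psi_2} = x_2\wedge\neg x_3$, $\sem{\Psi_3} = x_3\wedge\neg x_1$. Here $H = \Stab_Q(\Phi) = C_3$, there is one orbit, and your $L_1 = \Stab_{S_3}(\{\sem{\Psi_1},\sem{\Psi_2},\sem{\Psi_3}\}) = C_3$. But every Boolean function on $\{0,1\}^3$ that is $C_3$-invariant is automatically $S_3$-invariant (the orbits of $C_3$ and $S_3$ on $\{0,1\}^3$ coincide), so $\mathcal{B}_{S_3,\{0,1\}^3} = \{\{1\},\langle(12)\rangle,\langle(13)\rangle,\langle(23)\rangle,S_3\}$ and $\mathcal{N}_{S_3,\{0,1\}^3}$ is its intersection-closure, which does not contain $C_3$. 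So the tuple you construct is simply not admissible in the minimum defining $\beta_{Q,\Omega,d}(H,K)$.

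The paper avoids this by a more careful choice: $L_i := \Stab_Q(f_i)$ where $f_i(x) = |\{j : \sem{\Psi_{i,j}}(x)=1\}|$ is the \emph{integer-valued count} of orbit members firing on $x$. This is the $Q$-stabilizer of an $\mathbb{N}$-valued function on $\Omega$, hence in $\mathcal{N}_{Q,\Omega}$ by definition, and it contains your multiset stabilizer, so conditions \ref{prop a} and \ref{semantic conjugate} still hold; condition \ref{semantic intersection} holds because preserving each count $f_i$ suffices to preserve the total count over all wires into the $\MAJ$ gate and hence the value of $\sem{\Phi}$. (In the example above, $f_1$ is $S_3$-invariant, so the paper's $L_1 = S_3 \in \mathcal{N}_{Q,\Omega}$.) This is exactly the reason the definition of $\mathcal{N}_{Q,\Omega}$ is phrased via integer-valued rather than Boolean functions, and it is the one point where your proposal would need to be repaired: replace the multiset stabilizer by the count-function stabilizer, and re-verify conditions \ref{prop a}, \ref{semantic intersection}, \ref{semantic conjugate} for the enlarged group. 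The rest of your argument --- the orbit decomposition, the choice of $H_i, U_i, V_i$, the orbit-stabilizer counting, and the verification of conditions \ref{syntactic intersection} and \ref{syntactic conjugate} --- is correct and identical to the paper's.
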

\begin{proof}
Let $H=\Stab_Q(\Phi)$ and $K=\Stab_Q(\sem{\Phi})$. First note that $\beta_{Q,\Omega,d}(H,K)$ is well-defined because $H\leq K$ and $K$ is the $Q$-stabilizer of the Boolean function $\sem{\Phi}$ on $\Omega$. We prove by induction on $d$ that $\mathrm{size}(\Phi)\geq \beta_{Q,\Omega,d}(H,K)$. 
The base case $d=0$ follows from the definitions of $\beta_{Q,\Omega,0}$ and $\mathsf{TC}_0$.

Now assume that the result holds for all formulas up to depth $d$. Let $\Phi$ be a $\mathsf{TC}_{d+1}$ formula. Assume that $\Phi=(\MAJ,I)$, where $I$ is a multiset of formulas in $\mathsf{TC}_0\cup\dots\cup \mathsf{TC}_d$. The group $H=\Stab_Q(\Phi)$ then acts on $I$ as a permutation group, since $\Phi$ is $H$-invariant. Break $I$ into its $r$ orbits under the action of $H$.  

For $i\in[r]$, let $\{\Psi_{i,j}:j\in[t_i]\}$ be the $i$th orbit of $I$ under the action of $H$. Let $H_i=\Stab_Q(\{\Psi_{i,j}:j\in[t_i]\})$. Let $U_i=\Stab_Q(\Psi_{i,1})$ and let $V_i=\Stab_Q(\sem{\Psi_{i,1}})$. Finally, let $L_i=\Stab_Q(f_i)$, where $f_i$ is the integer-valued function  $f_i:\Omega\to \mathbbm{N}$ defines by $f_i(x)=|\{j:\Psi_{i,j}(x)=1\}|$ for $x\in\Omega$. Note that this is the Hamming weight function on the wires from this orbit of subformulas.

We claim that the tuple $(r,(H_i)_{i\in[r]},(U_i)_{i\in[r]},(L_i)_{i\in[r]},(V_i)_{i\in[r]})$ is $(H,K)$-good. We prove that this tuple satisfies each one of the conditions stated in the definition below.

For all $i\in[r]$, we have $U_i\leq V_i$ and $H_i\leq L_i$ because for any formula $\Psi$, we have $\Stab_Q(\Psi)\leq\Stab_Q(\sem{\Psi})$, so
\begin{align*}
    U_i= \Stab_Q(\Psi_{i,1})\leq\Stab_Q(\sem{\Psi_{i,1}})=V_i.
\end{align*}
Moreover, for each $i\in[r]$, we have 
\begin{align*}
    H_i=\Stab_Q(\{\Psi_{i,j}:j\in[t_i]\})\leq\Stab_Q(f_i)=L_i.
\end{align*}
That $V_i$ is an element of $\mathcal{B}_{Q,\Omega}$, and $L_i$ is an element of $\mathcal{N}_{Q,\Omega}$ follows because $V_i$ is the $Q$-stabilizer of the Boolean function $\sem{\Psi_{i,1}}$ on $\Omega$, while $L_i$ is the $Q$-stabilizer of the integer-valued function $f_i$ on $\Omega$. This proves that \Cref{prop a} is satisfied.

Now we prove that $H_1\cap \dots \cap H_r=H$. First, we have
\begin{align*}
    \bigcap_{i\in[r]}H_i&=\bigcap_{i\in[r]}\Stab_Q(\{\Psi_{i,j}:j\in[t_i]\})\leq \Stab_Q((\MAJ,\bigcup_{i\in[r]}\{\Psi_{i,j}:j\in[t_i]\}))=\Stab_Q(\Phi)=H.
\end{align*}
That $H\leq H_1\cap \dots\cap H_r$ follows because orbits of $I$ under the action of $H$ are $H$-invariant. This proves that \Cref{syntactic intersection} is satisfied.

If $q\in Q$ is such that $f_i(qx)=|\{j\in[t_i]:\Psi_{i,j}(qx)=1\}|=|\{j\in[t_i]:\Psi_{i,j}(x)=1\}|=f_i(x)$ for all $x\in \Omega$ and $i\in[r]$, then $\sem{\Phi}(qx)=\sem{\Phi}(x)$ for all $x\in\Omega$. Therefore, we have
\begin{align*}
    \bigcap_{i\in[r]}L_i=\bigcap_{i\in[r]}\Stab_Q(f_i)\leq \Stab_Q(\sem{\Phi})=K.
\end{align*}
This proves that \Cref{semantic intersection} is satisfied.

The $Q$-stabilizers of the subformulas $\Psi_{i,j}$ in $i$th orbit are all $H_i$-conjugates of each other. To see this, note that for every $\Psi_{i,j}$ there exists $h\in H_i$ such that $\Psi_{i,1}^h=\Psi_{i,j}$. Then, for any $s\in \Stab_Q(\Psi_{i,1})=U_{i}$, we have
\begin{align*}
    \Psi_{i,j}^{h^{-1}sh}&={\Psi_{i,1}^{sh}}={\Psi_{i,1}^h}={\Psi_{i,j}}.
\end{align*} 
This shows that $h^{-1}\Stab_Q({\Psi_{i,1}})h\leq \Stab_Q({\Psi_{i,j}})$. The symmetric argument shows the reverse inclusion. As a result, for each $i\in[r]$,
\begin{align*}
    \bigcap_{h\in H_i}h^{-1}U_ih&=\bigcap_{j\in[t_i]}\Stab_Q(\Psi_{i,j})\leq \Stab_Q(\{\Psi_{i,j}:j\in[t_i]\})=H_i.
\end{align*}
This proves that \Cref{syntactic conjugate} is satisfied.

\Cref{semantic conjugate} follows similarly, since for each $\Psi_{i,j}$ in the $i$th orbit, we have $\Stab_Q(\sem{\Psi_{i,j}})= h^{-1}\Stab_Q(\sem{\Psi_{i,}})h$ for some $h\in H_i$. To see this, again note that for every $\Psi_{i,j}$ there exists $h\in H_i$ such that $\Psi_{i,1}^h=\Psi_{i,j}$. Then, for any $s\in \Stab_Q(\sem{\Psi_{i,1}})=V_{i}$, we have
\begin{align*}
    \sem{\Psi_{i,j}}^{h^{-1}sh}&=\sem{\Psi_{i,j}^{h^{-1}sh}}=\sem{\Psi_{i,1}^{sh}}=\sem{\Psi_{i,1}^h}=\sem{\Psi_{i,j}}.
\end{align*}
This shows that $h^{-1}\Stab_Q(\sem{\Psi_{i,1}})h\leq \Stab_Q(\sem{\Psi_{i,j}})$. The symmetric argument shows the reverse inclusion. 

If $q\in Q$ is such that $\sem{\Psi_{i,j}}(qx)=\sem{\Psi_{i,j}}(x)$ for all $x\in \Omega$ and $i\in[r]$, then $f_i(qx)=f_i(x)$ for all $x\in\Omega$. As a result, for each $i\in[r]$,
\begin{align*}
    \bigcap_{h\in H_i}h^{-1}V_ih=\bigcap_{j\in[t_i]}\Stab_Q(\sem{\Psi_{i,j}})\leq \Stab_Q(f_i)= L_i.
\end{align*}
We have finished verifying that our tuple of $r$ and subgroups is $(H,K)$-good.

Recall that $\{\Psi_{i,j}:j\in[t_i]\}$ is the $i$th orbit of this group action stabilized by $H_i$, and $\Stab_Q(\Psi_{i,1})=U_i$. Then by the orbit-stabilizer theorem applied to the action of $H_i$ on the orbit, the size of the orbit is at least $[H_i:H_i\cap U_{i}]$. The size of each subformula in this orbit is equal to the size of $\Psi_{i,1}$, since for every $\Psi_{i,j}$, there exists $h\in H$ such that $\Psi_{i,j}=\Psi_{i,1}^h$, and $\mathrm{size}(\Psi_{i,1}^h)=\mathrm{size}(\Psi_{i,1})$ is clear. Therefore, for some $(r,(H_i)_{i\in[r]},(U_i)_{i\in[r]},(L_i)_{i\in[r]},(V_i)_{i\in[r]})$ that is $(H,K)$-good, we have
\begin{align*}
    \mathrm{size}(\Phi)&\geq \max_{i\in[r]}[H_i:H_i\cap U_i]\mathrm{size}(\Psi_{i,1})\\
    &\geq \max_{i\in[r]}[H_i:H_i\cap U_i]\beta_{Q,\Omega,d}(U_i,V_i)\\
    &\geq \beta_{Q,\Omega,d+1}(H,K).
\end{align*}
The second inequality follows by induction and monotonicity of $\beta_{Q,\Omega,d}$ in $d$. The last inequality follows from the inductive definition of $\beta_{Q,\Omega,d+1}$.
\end{proof}

To use the lemma to lower bound the size of $P$-invariant formulas, we start off with some $P$-invariant function $f : \Omega \to \{0,1\}$ where $\Omega$ is a $P$-invariant subset of $\{0,1\}^m$. To apply the lemma, find a supergroup $P \le Q \le S_m$ such that $\Omega$ is $Q$-invariant and $\Stab_Q(f) = P$. Then we get the lower bound $\mathrm{size}(\Phi)\geq\beta_{Q,\Omega,d}(P,P)$ for any $\Phi\in\mathsf{TC}_d$ with $\Stab_Q(\Phi)=\Stab_Q(\sem{\Phi})=P$.

Note that choosing $Q=P$ yields nothing, since $\beta_{P,\Omega,d}(H,P) = 0$ for all $d$ and $H\leq P$.

In our application, $G \le S_n$, $m = kn^2$, $P = G^{k-1}$, and $\Omega = \BAR{G}^k= \{(M_1,\dots,M_k) \in \{0,1\}^{kn^2}\} = \{0,1\}^m$, where the matrices are permutation matrices giving elements of $G\leq S_n$. 

Thus, we are left with the problem of lower bounding $\beta_{Q,\Omega,d}(P,P)$ for some choice of $Q$. This is a purely group-theoretic problem, and \Cref{lem:general lower bound} provides a framework to solve it.

\begin{lemma}\label{lem:general lower bound}
Assume $c\geq 1$ and that $\mu:\{(H,L):H\leq L\leq Q,L\in\mathcal{N}_{Q,\Omega}\}\to\mathbbm{N}$ is such that for any $H\leq K\leq Q$ with $K\in\mathcal{B}_{Q,\Omega}$,
\begin{enumerate}[label=\roman*),topsep=0.1in]
    \item\label{intersection property} Let $r\in\mathbbm{N}$, $H_1,\dots,H_r\leq Q$, $L_1,\dots,L_r\in\mathcal{N}_{Q,\Omega}$ be such that $H_i\leq L_i$ for all $i\in[r]$. Suppose that $H=H_1\cap\dots\cap H_r$ and $K\geq L_1\cap\dots\cap L_r$. Then there is some $i\in[r]$ such that $\mu(H_i,L_i)\geq \mu(H,K)$.
    \item\label{shrinkage property} Let $U\leq V\leq Q$ with $V\in\mathcal{B}_{Q,\Omega}$ and $L\geq H$ with $L\in\mathcal{N}_{Q,\Omega}$. Suppose that $\bigcap_{h \in H}h^{-1}Uh\leq H$ and $\bigcap_{h\in H}h^{-1}Vh\leq L$. Then $\mu(U,V)\geq \mu(H\cap U,V)\geq \frac{\mu(H,L)}{1+\log_c[H:H\cap U]}$.
    \item\label{literal property} Let $H=\Stab_{Q}(\chi_i|_\Omega)$ for some coordinate function $\chi_i$. Then $\mu(H,K)\leq 1$. 
    \item\label{Q property} $\mu(Q,Q)=0$. 
\end{enumerate}
Then for all $H\leq K\leq  Q$ with $K\in\mathcal{B}_{Q,\Omega}$, 
\begin{align*}
    \beta_{Q,\Omega,d}(H,K)&\geq  c^{d(\mu(H,K)^{1/d}-1)} \text{ for all } d \ge 1,\\
    \beta_{Q,\Omega}(H,K)&\geq  c^{\log_2(\mu(H,K))}.
\end{align*}
\end{lemma}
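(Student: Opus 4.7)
The plan is to prove the depth-$d$ bound $\beta_{Q,\Omega,d}(H,K) \ge c^{d(\mu(H,K)^{1/d}-1)}$ by induction on $d\ge 1$; the unrestricted-depth bound will follow by a parallel induction applied to the limiting recursion $\beta_{Q,\Omega}(H,K) = \min\max_i [H_i{:}H_i\cap U_i]\,\beta_{Q,\Omega}(U_i,V_i)$ inherited from the sequence. Throughout, the four hypotheses on $\mu$ mesh cleanly with the definition of $(H,K)$-goodness: property~\ref{intersection property} uses conditions~\ref{syntactic intersection}--\ref{semantic intersection} to select a ``heavy'' index $i^*$ with $\mu(H_{i^*},L_{i^*}) \ge \mu(H,K)$, while property~\ref{shrinkage property}---whose hypotheses are exactly the conjugation conditions \ref{syntactic conjugate}--\ref{semantic conjugate}---produces the shrinkage estimate $\mu(U_{i^*},V_{i^*}) \ge \mu(H,K)/(1+\log_c[H_{i^*}{:}H_{i^*}\cap U_{i^*}])$.

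In the base case $d=1$, fix an $(H,K)$-good tuple realizing $\beta_{Q,\Omega,1}(H,K)$ and take $i^*$ from~\ref{intersection property}. Finiteness of $\beta_{Q,\Omega,0}(U_{i^*},V_{i^*})$ forces $U_{i^*}$ to be either $Q$ (so $\mu(U_{i^*},V_{i^*})=0$ by~\ref{Q property}) or some $\Stab_Q(\chi_j)$ (so $\mu(U_{i^*},V_{i^*}) \le 1$ by~\ref{literal property}). Combining this with property~\ref{shrinkage property} forces $[H_{i^*}{:}H_{i^*}\cap U_{i^*}] \ge c^{M-1}$ where $M = \mu(H,K)$, and the base case follows since $\beta_{Q,\Omega,0}(U_{i^*},V_{i^*}) \ge 1$ when $U_{i^*} \ne Q$.

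For the inductive step $d \to d+1$, the hypothesis combined with property~\ref{shrinkage property} gives
\[
[H_{i^*}{:}H_{i^*}\cap U_{i^*}] \cdot \beta_{Q,\Omega,d}(U_{i^*},V_{i^*}) \;\ge\; c^{\log_c a \,+\, d((M/(1+\log_c a))^{1/d}-1)},
\]
where $a = [H_{i^*}{:}H_{i^*}\cap U_{i^*}]$. Setting $t = 1+\log_c a$ and $\ell = M^{1/(d+1)}$, the target reduces to $t + d\,\ell^{(d+1)/d}\,t^{-1/d} \ge (d+1)\ell$, which is precisely the weighted AM-GM inequality with weights $1$ and $d$ applied to $t$ and $\ell^{(d+1)/d}t^{-1/d}$ (whose weighted geometric mean is exactly $\ell$). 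For the second bound, the same strategy applies to the limiting recursion with target potential $c^{\log_2\mu}$; the corresponding inductive inequality reduces to $\log_c a \ge \log_2(1+\log_c a)$, which holds at the endpoints $a = 1$ and $a \ge c$ that arise as the recursion drives $\mu$ down to the base level. The main technical obstacles I anticipate are the algebraic verification of the weighted AM-GM step and the careful bookkeeping of the conjugation conditions~\ref{syntactic conjugate}--\ref{semantic conjugate} required to legitimately invoke property~\ref{shrinkage property} at each level of the induction.
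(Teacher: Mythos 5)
Your induction for the depth-$d$ bound is essentially the paper's proof. You invoke property \ref{intersection property} to select a heavy index $i^*$ from a good tuple (using conditions \ref{syntactic intersection}--\ref{semantic intersection}), invoke property \ref{shrinkage property} via the conjugation conditions \ref{syntactic conjugate}--\ref{semantic conjugate}, and then verify $c^{m-1}\,c^{d((M/m)^{1/d}-1)} \ge c^{(d+1)(M^{1/(d+1)}-1)}$; your weighted AM--GM derivation of $t + d\,\ell^{(d+1)/d}t^{-1/d} \ge (d+1)\ell$ is exactly the step the paper disposes of with ``elementary calculus,'' and starting the induction at $d=1$ rather than the paper's $d=0$ (with a convention interpreting $c^{0\cdot(\cdot)}$) is a cosmetic rearrangement. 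Your base-case analysis---using finiteness of $\beta_{Q,\Omega,0}(U_{i^*},V_{i^*})$ to force $U_{i^*}$ to be $Q$ or a coordinate stabilizer, then combining properties \ref{literal property}, \ref{Q property}, \ref{shrinkage property}---is correct.

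The unbounded-depth bound is where your argument has a real gap. You reduce to $\log_c a \ge \log_2(1+\log_c a)$ and claim this holds ``at the endpoints $a=1$ and $a\ge c$ that arise as the recursion drives $\mu$ down.'' But with $s=\log_c a$ the inequality $s\ge \log_2(1+s)$ is false for $0<s<1$, i.e.\ for integer indices $1<a<c$, and nothing in the good-tuple conditions or in properties \ref{intersection property}--\ref{Q property} restricts $[H_{i^*}:H_{i^*}\cap U_{i^*}]$ to be $1$ or at least $c$: it is just some positive integer. So the inductive step for $\beta_{Q,\Omega}(H,K)\ge c^{\log_2\mu(H,K)}$ does not close. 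Note also that the naive limit of the depth-$d$ bound gives only $c^{\ln\mu}$, which is strictly weaker than $c^{\log_2\mu}$; the improvement must come from exploiting the integrality of $\mu$ (i.e.\ $\mu(U_{i^*},V_{i^*})\ge \lceil M/m\rceil$, not merely $M/m$), a point the paper itself dispatches in a single sentence. Your write-up inherits that terseness, but the ``endpoints'' justification as stated is incorrect and you should revisit it.
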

\begin{proof}
We prove by induction on $d$ and actually start at 0, interpreting for $d=0$
\begin{align*}
    c^{d(\mu(H,K)^{1/d}-1)}&=\begin{cases}
    0 &\text{if } \mu(H,K) = 0,\\
    1 &\text{if } \mu(H,K)=1,\\ 
    \infty &\text{otherwise.}
  \end{cases}
\end{align*}
Note that these interpretations fit into the inductive steps. The base case in which $d=0$ is clear by \Cref{literal property}, \Cref{Q property}, and the definition of $\beta_{Q,\Omega,0}$. Now assume that the result holds for $\beta_{Q,\Omega,i}$ with $i\leq d$.

By definition, for some tuple $(r,(H_i)_{i\in[r]},(U_i)_{i\in[r]},(L_i)_{i\in[r]},(V_i)_{i\in[r]})$ that is $(H,K)$-good we have
\begin{align*}
    \beta_{Q,\Omega,d+1}(H,K)&\geq \max_{i\in[r]}[H_i:H_i\cap U_i]\beta_{Q,\Omega,d}(U_i,V_i).
\end{align*}
Because the tuple is $(H,K)$-good, by \Cref{prop a}, \Cref{syntactic intersection}, and \Cref{semantic intersection}, we have that the hypotheses of \Cref{intersection property} are satisfied by the $H_i$ and $L_i$, so for some $i\in[r]$, we have $\mu(H_i,L_i)\geq \mu(H,K)$. 

Let $m=1+\log_{c}[H_i:H_i\cap U_i]$. By \Cref{prop a}, \Cref{syntactic intersection}, and \Cref{semantic conjugate} the hypotheses of \Cref{shrinkage property} are satisfied by $U_i$, $H_i$, $V_i$, and $L_i$, so we have $\mu(U_i,V_i)\geq \mu(H_i\cap U_i, V_i)\geq \frac{\mu(H_i,L_i)}{m}\geq \frac{\mu(H,K)}{m}$. Therefore,
\begin{align*}
    \beta_{Q,\Omega,d+1}(H,K)&\geq c^{m-1}\beta_{Q,\Omega,d}(U_i,V_i)
    \geq c^{m-1}c^{d((\frac{\mu(H,K)}{m})^{1/d}-1)}
    \geq c^{(d+1)(\mu(H,K)^{1/(d+1)}-1)}.
\end{align*}
The last step follows from optimization over $m$ using elementary calculus.

This lower bound approaches $c^{\ln(\mu(H,K))}$ as $d\to\infty$, but since $\mu$ takes integral values, when $d$ becomes large we actually have a lower bound $c^{\log_2(\mu(H,K))}$.
\end{proof}

We can informally interpret the results of this section in the following way. \Cref{lem:framework} shows that the size of a formula $\Phi$ with $\Stab_Q(\Phi)=H$ and $\Stab_Q(\sem{\Phi})=K$ depends on a sequence of operations on the subgroup lattice $\mathrm{Sub}(Q)$ of $Q$. We begin with a pair $(H,K)\in \mathrm{Sub}(Q)^2$. Then, to get to the pair stabilizing the orbits of subformulas of $\Phi$, we first ``go up'' in $\mathrm{Sub}(Q)^2$. 

That is, in the process of breaking the set of subformulas of $\Phi$ into $H$-orbits, we find that the syntactic and semantic stabilizers of these orbits are supergroups of $H$ and $K$ that satisfy the intersection properties given by \Cref{syntactic intersection} and \Cref{semantic intersection} stated. 

The next step is to find the stabilizer of a single subformula in an orbit. Here we apply the orbit-stabilizer theorem, and are ``going down'' in the subgroup lattice. At this point there is a cost associated to how far down on the subgroup lattice we go by the orbit-stabilizer theorem.

The point of \Cref{lem:general lower bound} is then to define a ``lower-bound witness'' $\mu$ on a subset of $\mathrm{Sub}(Q)^2$ that is accurately reflects the cost of these operations. A designer of a small invariant formula would then want to find sequences of subgroups (under the constraints stated in the definition of $\beta_{Q,\Omega,d}$) in the subgroup lattice to get to the low-complexity subgroups (stabilizers of coordinate functions and $Q$ itself) with as little cost as possible. Cost is accrued in the going down phase of the process. \Cref{lem:general lower bound} states conditions sufficient for the formula designer to not be able to easily decrease the lower-bound witness $\mu$.

\section{Applying the framework}
In this section we prove \Cref{thm: simple lower bound}. \Cref{sec:nonabelian simple} takes care of the nonabelian case and \Cref{sec:prime order} takes care of the abelian case.
\subsection{$\mathsf{TC}$ lower bounds for nonabelian simple groups}\label{sec:nonabelian simple}
Throughout this section let $G\leq S_n$ be a finite nonabelian simple permutation group, where $n$ is the the minimum degree of a faithful representation of $G$. We want to use the framework set up in \Cref{sec:framework} to prove a lower bound on the formula size of $G^{k-1}$-invariant $\mathsf{TC}_d$ formulas computing the function $\Word_{G,k} : \BAR{G}^k \to \{0,1\}$. 

To do so, we consider a larger group $G^{k-1} \le Q \le S_{kn^2}$ such that $\BAR{G}^k$ is $Q$-invariant and $\Stab_Q(\Word_{G,k}) = G^{k-1}$.
We then get the bound $\mathcal L^{G^{k-1}}_{\mathsf{TC}_d}(\Word_{G,k}) \ge \beta_{Q,\Omega,d}(G^{k-1},G^{k-1})$ by \Cref{lem:framework}, so for our purposes it suffices to lower bound $\beta_{Q,\Omega,d}(G^{k-1},G^{k-1})$
We do so by constructing a suitable $\mu$ satisfying the hypotheses of \Cref{lem:general lower bound} with $c=n$.

Recall that $n$ is the minimum degree of a faithful permutation representation of $G$. The following lemma characterizes this $n$ nicely.
\begin{lemma}\label{lem:min index}
$n=\min(\sqrt{|G|},\min_{A<G}[G:A])$.
\end{lemma}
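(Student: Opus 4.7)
The plan is to prove this lemma in two main parts: first, establish the equality $n = \min_{A<G}[G:A]$, and second, verify the additional inequality $\min_{A<G}[G:A] \le \sqrt{|G|}$.

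For the equality $n = \min_{A<G}[G:A]$, I would argue both inequalities. To show $n \le \min_{A<G}[G:A]$, I would construct for each proper subgroup $A<G$ a faithful permutation representation of degree $[G:A]$: the action of $G$ on the coset space $G/A$ by left multiplication gives a homomorphism $G \to S_{[G:A]}$ whose kernel is the normal core $\bigcap_{g \in G} gAg^{-1}$. This kernel is a normal subgroup of $G$ contained in $A$; since $G$ is simple, it must equal either $G$ or the trivial group, and as $A$ is proper the first possibility is excluded, so the representation is faithful. For the reverse inequality $n \ge \min_{A<G}[G:A]$, I would take any faithful representation $\phi: G \to S_m$ and decompose $[m]$ into $G$-orbits; since $G$ is nontrivial, at least one orbit must be nontrivial (otherwise $\phi$ would be trivial), and such an orbit has size $[G:A]$ for some proper subgroup $A < G$, yielding $m \ge [G:A] \ge \min_{A<G}[G:A]$.

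For the inequality $\min_{A<G}[G:A] \le \sqrt{|G|}$, equivalently the existence of a proper subgroup of order at least $\sqrt{|G|}$, I would verify this case by case via the classification of finite simple groups. For $A_n$ with $n \ge 5$, the point stabilizer $A_{n-1}$ has index $n$ and one checks $n^2 \le n!/2 = |A_n|$ holds for $n \ge 5$. For classical and exceptional groups of Lie type, explicit parabolic subgroups (stabilizers of maximal flags, points in the natural module, etc.) give proper subgroups of the required order, and the minimum indices are tabulated. For the sporadic groups, the minimum indices are known explicitly and satisfy the bound by direct inspection.

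The main obstacle is the second part. The equality $n = \min_{A<G}[G:A]$ is a clean consequence of simplicity and orbit analysis, but the bound $\min_{A<G}[G:A] \le \sqrt{|G|}$ is a genuine fact about nonabelian finite simple groups that I do not see how to prove without some structural input (either CFSG or a deep general theorem about primitive permutation groups). A potentially more elementary route would be a counting argument showing that if every maximal subgroup had order less than $\sqrt{|G|}$, then the union of their conjugates could not cover $G$ in a manner consistent with every non-identity element lying in a maximal subgroup; however, I do not expect this to go through cleanly, and case analysis through CFSG is likely the most direct approach.
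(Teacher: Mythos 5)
Your proposal is correct and takes essentially the same approach as the paper, which cites Johnson (1971) for $n = \min_{A<G}[G:A]$ (the coset-action/orbit-decomposition argument you spell out is precisely the content of that citation, specialized to simple groups where core-free coincides with proper) and the classification of finite simple groups for $\min_{A<G}[G:A] \le \sqrt{|G|}$, exactly as you do.
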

\begin{proof}
That $\min(\sqrt{|G|},\min_{A<G}[G:A])=\min_{A<G}[G:A]$ follows from the classification of finite simple groups \cite{Wilson2009}. Then, $\min_{A<G}[G:A]=n$ follows from \cite{johnson1971minimal}.
\end{proof}

We first consider the ``left-right'' action of $G^{2k} \le S_{kn^2}$ on $\BAR{G}^k$ 
\begin{align*}
    (g_1,\dots,g_{2k})(M_1,\dots,M_k)&=(\BAR{g_1}M_1\BAR{g_2}^{-1},\dots,\BAR{g_{2k-1}}M_k\BAR{g_{2k}}^{-1}).
\end{align*}
Note that odd coordinates in $[2k]$ act on matrices on the left, while even coordinates act on the right. That is, the elements $g_{2i-1}$ and $g_{2i}$ act respectively on the left and right of the matrix $M_i$. 

The embedding of $G^{k-1}$ in $G^{2k}$ (as subgroups of $S_{kn^2}$) is given by the shifted diagonal action.
\begin{align}\label{embedding G}
    (g_1,\dots,g_{k-1})\mapsto (1_G,g_1^{-1},g_1,g_2^{-1},g_2,\dots,g_{k-1}^{-1},g_{k-1},1_G).
\end{align}
However, the $G^{2k}$-stabilizer of $\Word_{G,k}$ is larger than $G^{k-1}$, since it contains all elements of the form $(g_1,1_G,\dots,1_G)$ with $g_1\in\Stab_G(1)$ and $(1_G,\dots,1_G,g_{2k})$, where $g_{2k}\in \Stab_G(1)$.

For this reason, we take $Q$ to be the subgroup of $G^{2k}$ given by $Q=\{(g_1,\dots,g_{2k})\in G^{2k}:g_1=g_{2k}=1_G\}$. Now we have $G^{k-1} \le Q \le S_{kn^2}$ with $\Stab_Q(\Word_{G,k}) = G^{k-1}$, as required.

Observe that the coordinate functions $\chi_1,\dots,\chi_{kn^2}:\{0,1\}^{kn^2}\to\{0,1\}$ are given by the entries of the permutation matrices $\{M_1,\dots,M_k\}$.

We now define the $\mu$ that we will use in conjunction with \Cref{lem:general lower bound} to prove a lower bound on $\beta_{Q,\Omega,d}(G^{k-1},G^{k-1})$.

\begin{definition}\label{def: mu nonabelian simple}\normalfont
For $H\leq Q$ let 
\begin{align*}
    E(H) &=\{\{i,i+1\} : i\in [k-1] \text{ such that } H{\uhr}_{\{2i,2i+1\}}=\Diag(G^{\{2i,2i+1\}})\}.
\end{align*}
We can view $E(H)$ as the edge set of an undirected graph with vertex set $[k]$, which we denote $([k],E(H))$. Note that this graph is a spanning subgraph of the path graph with edge set $\{\{1,2\},\{2,3\},\dots,\{k-1,k\}\}$. 

For $H\leq K\leq Q$ with $K\in\mathcal{N}_{Q,\Omega}$, we define $\mu(H,K)$ to be the number of vertices in the largest connected component of $([k],E(H)\cap E(K))$. Define $\mu(H,Q)=0$ for any $H\leq Q$.
\end{definition}

We remark that this definition of $\mu$ is highly dependent on both $H$ and $L$. In fact, if $([k],E(H))$ has a large connected component, it still may be possible for $\mu(H,K)$ to be small for some choice of $L$. 

The consequence for our lower bound is as follows. Suppose $\Phi$ is a formula with $\Stab_Q(\Phi)=G^{k-1}$. Then $\Stab_Q(\Phi)=\Stab_Q(\Word_{G,k})$. However, we can say nothing about the size of $\Phi$ unless we also have some information about $\Stab_Q(\sem{\Phi})$. 

If $\Stab_Q(\sem{\Phi})=Q$, for example, then $\mu(\Stab_Q(\Phi),\Stab_Q(\sem{\Phi}))=0$ by definition. Then \Cref{lem:general lower bound} gives no meaningful lower bound on $\beta_{Q,\Omega,d}(\Stab_Q(\Phi),\Stab_Q(\sem{\Phi}))$, and hence we have no meaningful lower bound on $\mathrm{size}(\Phi)$. 

Such a $\Phi$ can be realized in the following way. Let $\Psi$ be such that $\sem{\Psi}=\Word_{G,k}$ and $\Stab_Q(\Psi)=G^{k-1}$. Let $\Phi=(\OR,\{\texttt{1},\Psi\})$. Then $\sem{\Phi}$ is constant, so $\Stab_Q(\sem{\Phi})=Q$. However, $\Stab_Q(\Phi)=\Stab_Q(\Psi)=G^{k-1}$. Thus, even though $\Stab_Q(\Phi)=G^{k-1}$ is a subgroup for which we expect nontrivial lower bounds, we have no lower bound on its size. Fortunately, such a formula $\Phi$ does not arise in an optimal construction, due to $Q$ being useless to a formula designer in the ``going up'' phase of formula design.

In the rest of this section, we show that once we take into account semantic stabilizers (as we have done in \Cref{def: mu nonabelian simple}), we do get nontrivial lower bounds. 

\subsubsection{Structural results}
Since $Q$ can be thought of as a $2k-2$-fold power of the group $G$, it is helpful to understand the structure of subgroups of direct powers of $G$. The following lemma of Goursat helps characterize these subgroups.
\begin{lemma}[\cite{Goursat1889}]
\label{lem: Goursat}
Let $A$ and $B$ be groups. Let $K$ be a subgroup of $A\times B$ such that the projections $\pi_A:K\to A$ and $\pi_B: K\to B$ are surjective. Let $M = \{a \in A : (a,1) \in K\} \trianglelefteq A$ and $N = \{b \in B : (1,b) \in K\} \trianglelefteq B$. Then there exists an isomorphism $\theta : A/M \to B/N$ such that $K = \{(a,b) \in A \times B : \theta(aM) = bN\}$.
\end{lemma}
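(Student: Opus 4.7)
The plan is to carry out the classical Goursat construction: define $\theta$ on cosets by lifting to $K$ and then verify well-definedness, the homomorphism property, and bijectivity in turn. The key observation throughout is that $M \trianglelefteq A$ and $N \trianglelefteq B$ follow automatically from surjectivity of the projections combined with conjugation inside $K$, so normality need not be assumed separately.

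First I would verify normality. Given $a \in A$ and $m \in M$, surjectivity of $\pi_A$ yields some $b \in B$ with $(a,b) \in K$. Since $(m, 1_B) \in K$, the conjugate $(a,b)(m,1_B)(a,b)^{-1} = (ama^{-1}, 1_B)$ lies in $K$, so $ama^{-1} \in M$. The argument for $N \trianglelefteq B$ is symmetric.

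Next I would define $\theta : A/M \to B/N$ by choosing, for each $a \in A$, some $b$ with $(a,b) \in K$ (which exists since $\pi_A$ is surjective) and setting $\theta(aM) = bN$. To check well-definedness, suppose $(a,b), (a',b') \in K$ with $aM = a'M$. Then $(a^{-1}a', b^{-1}b') = (a,b)^{-1}(a',b') \in K$ and $a^{-1}a' \in M$, so $(a^{-1}a', 1_B) \in K$; multiplying these two elements of $K$ yields $(1_A, b^{-1}b') \in K$, hence $b^{-1}b' \in N$ and $bN = b'N$. The homomorphism property follows because $(a,b)(a',b') = (aa', bb') \in K$, so $\theta(aa'M) = bb'N = \theta(aM)\theta(a'M)$; surjectivity of $\theta$ comes from surjectivity of $\pi_B$; and injectivity holds since $\theta(aM) = N$ forces the existence of $(a,b) \in K$ with $b \in N$, so $(1_A, b) \in K$ and $(a, 1_B) = (a,b)(1_A,b)^{-1} \in K$, giving $a \in M$.

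Finally I would establish the set equality. The inclusion $K \subseteq \{(a,b) : \theta(aM) = bN\}$ is immediate from the construction of $\theta$. Conversely, given $(a,b)$ with $\theta(aM) = bN$, pick $(a, b_0) \in K$; then $b_0 N = bN$, so $b = b_0 n$ for some $n \in N$, and $(a, b) = (a, b_0)(1_A, n) \in K$ since both factors lie in $K$. The argument is essentially bookkeeping, and the only subtle point is the well-definedness step, where one must simultaneously use closure of $K$ under products and inverses together with the definitions of $M$ and $N$ as ``$1$-fibers'' of the two projections.
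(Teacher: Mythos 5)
Your proof is correct and is the standard argument for Goursat's lemma. Note that the paper does not supply a proof of this statement---it is cited directly to Goursat (1889) and used as a black box---so there is no in-paper argument to compare against; your reconstruction, including the observation that normality of $M$ and $N$ follows for free from surjectivity of the projections, matches the textbook treatment.
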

This lemma is especially helpful in the special case where $A$ and $B$ are both nonabelian simple groups, since the possibilities for $M$ and $N$ are then restricted to either being the full group $A$ or $B$ respectively, or trivial.
\begin{corollary}\label{cor:diag to full}
Let $\Diag(G^2)<H\leq G^2$. Then $H=G^2$.
\end{corollary}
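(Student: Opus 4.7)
The plan is to apply Goursat's lemma (\Cref{lem: Goursat}) with $A = B = G$ to the subgroup $H \leq G^2$. Since $H$ contains $\Diag(G^2)$ and the diagonal already projects surjectively onto each factor, both projections $\pi_1, \pi_2 : H \to G$ are surjective, so the hypotheses of Goursat are met. Let $M = \{a \in G : (a,1) \in H\}$ and $N = \{b \in G : (1,b) \in H\}$, so that $M, N \trianglelefteq G$ and there is an isomorphism $\theta : G/M \to G/N$ with $H = \{(a,b) : \theta(aM) = bN\}$.

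Now I would use simplicity of $G$ to restrict the possibilities for $M$ and $N$: each must equal either $\{1_G\}$ or $G$. Since $\theta$ is an isomorphism, $|G/M| = |G/N|$, so the only cases are $M = N = G$ or $M = N = \{1_G\}$.

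In the first case, $\theta$ is the trivial isomorphism between two trivial quotients, and the defining relation $\theta(aM) = bN$ is satisfied for all pairs, giving $H = G^2$, as desired. In the second case, $H = \{(a, \theta(a)) : a \in G\}$ is the graph of an automorphism of $G$ and therefore has order $|G| = |\Diag(G^2)|$. Combined with the hypothesis $\Diag(G^2) \leq H$, this forces $H = \Diag(G^2)$, contradicting the strict containment $\Diag(G^2) < H$. Hence only the first case can occur, and $H = G^2$.

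There is no real obstacle here; the argument is a direct application of Goursat's lemma, and simplicity of $G$ (a nonabelian simple group, per the standing assumption of \Cref{sec:nonabelian simple}) immediately kills all but two cases. The only thing to watch is that the nonabelian hypothesis is not actually needed for this step, only simplicity, but it is available in the ambient setting.
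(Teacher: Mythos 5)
Your proof is correct and also goes through Goursat's lemma, but it organizes the argument differently from the paper. The paper's proof is more hands-on: from $\Diag(G^2) < H$ it manufactures a concrete element $(g,1) \in H$ with $g \neq 1_G$ (pick $(g_1,g_2) \in H$ with $g_1 \neq g_2$ and multiply by $(g_2^{-1},g_2^{-1})$), which shows directly that $M := H{\uhr}_{\{1\}}$ is nontrivial; Goursat is invoked only to recall that $M \trianglelefteq \pi_1(H) = G$, whence simplicity forces $M = G$, and then $N = G$ and $H = G^2$ follow. Your version instead runs the full Goursat case analysis on $(M,N) \in \{\{1_G\},G\}^2$, rules out $M=N=\{1_G\}$ by the order count $|H| = |G| = |\Diag(G^2)|$, and concludes. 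Both are correct and essentially equally short; yours has the virtue of making explicit that in the excluded case $H$ would itself be the graph of an automorphism, which is structurally why the strict containment $\Diag(G^2) < H$ has teeth. You are also right that only simplicity (not nonabelianness) is used, despite the paper's phrasing "Because $G$ is nonabelian simple"; the statement holds for $G = C_p$ as well, since then $[C_p^2:\Diag(C_p^2)] = p$ is prime.
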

\begin{proof}
Since $H>\Diag(G^2)$ there exists $(g_1,g_2)\in H$ such that $g_1g_2^{-1}\neq 1_G$. Then, multiplying by $(g_2^{-1},g_2^{-1})\in \Diag(G^2)<H$, we find that $(g_1g_2^{-1},1_G)\in H$.

Note that by \Cref{lem: Goursat}, $\{1_G\}^{\{1\}}<H{\uhr}_{\{1\}}\trianglelefteq \pi_{\{1\}}(H)= G^{\{1\}}$. Because $G$ is nonabelian simple, we must have $H{\uhr}_{\{1\}}=G^{\{1\}}$. This also implies that $H{\uhr}_{\{2\}}=G^{\{2\}}$. 
\end{proof}

Another point at which \Cref{lem: Goursat} becomes useful is in our proof that $\mu$ satisfies \Cref{shrinkage property}. Note that for $\ell\geq 2$, if $H\leq Q$ is such that $E(H)$ induces a connected component of size $\ell$ in $([k],E(H))$, then $H$ has some shifted diagonal subgroup isomorphic to $G^\ell$. Thus, the following definition is very natural.

\begin{definition}\label{def:support}\normalfont
For a subgroup $H\leq G^k$, define a \emph{support} for $H$ to be a subset $T\subseteq[k]$ such that for all $i\not\in T$, $H{\uhr}_{\{i\}}=G^{\{i\}}$.
\end{definition}

\begin{lemma}\label{lem: quotient G}
For any $\ell\in\mathbbm{N}$, let $N\triangleleft G^\ell$ be such such that $\frac{G^\ell}{N}\cong G$. Then there is exactly one $j\in[\ell]$ such that $N{\uhr}_{\{j\}}<G^{\{\ell\}}$. 
\end{lemma}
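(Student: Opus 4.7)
The plan is to first establish a structural result for normal subgroups of $G^\ell$ (where $G$ is nonabelian simple), and then read off the conclusion from the quotient condition.

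\textbf{Step 1 (structural lemma).} I would prove that every normal subgroup $N \trianglelefteq G^\ell$ has the form $N = \prod_{i \in S} G^{\{i\}}$ for some $S \subseteq [\ell]$. Define $S = \{i \in [\ell] : N\uhr_{\{i\}} = G^{\{i\}}\}$. The inclusion $\prod_{i \in S} G^{\{i\}} \leq N$ is immediate from the definition of $S$. For the reverse inclusion, take $(g_1,\dots,g_\ell) \in N$ and suppose for contradiction that $g_j \neq 1_G$ for some $j \notin S$. Since $G$ is nonabelian simple, $Z(G) = \{1_G\}$, so there exists $h \in G$ with $hg_jh^{-1} \neq g_j$. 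Let $e_j(h) \in G^\ell$ denote the element with $h$ in position $j$ and $1_G$ elsewhere. Normality of $N$ gives $e_j(h)(g_1,\dots,g_\ell)e_j(h)^{-1} \in N$, and multiplying by $(g_1,\dots,g_\ell)^{-1}$ yields $e_j(hg_jh^{-1}g_j^{-1}) \in N$, with $hg_jh^{-1}g_j^{-1} \neq 1_G$. Hence $N\uhr_{\{j\}}$ is a nontrivial subgroup of $G^{\{j\}}$; a routine verification shows it is normal in $G^{\{j\}}$, so by simplicity of $G$ it equals $G^{\{j\}}$, contradicting $j \notin S$.

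\textbf{Step 2 (pin down $S$).} Given the structural form $N = \prod_{i \in S} G^{\{i\}}$, we have
\[
  G^\ell / N \;\cong\; \prod_{i \in [\ell]\setminus S} G^{\{i\}} \;\cong\; G^{\ell - |S|}.
\]
Since $G$ is finite, the condition $G^{\ell - |S|} \cong G$ forces $\ell - |S| = 1$ (by comparing orders), so there is a unique index $j \in [\ell] \setminus S$.

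\textbf{Step 3 (read off the conclusion).} For this unique $j$, the elements of $N$ have trivial $j$-th coordinate, so $N\uhr_{\{j\}} = \{1_G\}^{\{j\}} < G^{\{j\}}$. For every other $i \in S = [\ell] \setminus \{j\}$, the definition of $S$ gives $N\uhr_{\{i\}} = G^{\{i\}}$. This gives exactly one $j$ as claimed.

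The only substantive step is the structural lemma in Step 1; the remaining steps are essentially bookkeeping. The key trick there is the standard commutator argument exploiting that $G$ is both simple and centerless, which lets one upgrade a nontrivial projection at some coordinate $j$ into a nontrivial element of the slice $N\uhr_{\{j\}}$, and then invoke simplicity to force the full factor.
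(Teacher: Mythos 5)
Your proof is correct, and it takes a genuinely different route from the paper's. The paper argues via Goursat's lemma (stated earlier as Lemma 16) combined with a case analysis on the projections $\pi_{\{j\}}(N)$: it first shows that at most one projection can be trivial (else the index of $N$ exceeds $|G|$), then rules out the case that all projections are full by writing $N$ as the graph of an isomorphism and deriving a contradiction with normality, and finally reads off that the unique $j$ with trivial projection is also the unique $j$ with proper restriction. You instead prove outright the classical structural fact that every normal subgroup of $G^\ell$, for $G$ nonabelian simple, is a subproduct $\prod_{i\in S}G^{\{i\}}$ — using the commutator trick $e_j(h)\,n\,e_j(h)^{-1}n^{-1}\in N$ enabled by $Z(G)=\{1\}$, followed by simplicity to inflate a nontrivial slice $N\uhr_{\{j\}}$ to the full factor — and then the lemma drops out by counting orders in $G^\ell/N\cong G^{\ell-|S|}\cong G$. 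Your route is shorter and self-contained (no Goursat), and it proves a strictly stronger structural claim that makes the conclusion immediate; the paper's route fits its broader reliance on Goursat's lemma elsewhere (e.g., in Corollary 18 and Corollary 19), so neither choice is redundant in context. One small remark: you implicitly use $|G|>1$ in the step $|G|^{\ell-|S|}=|G|\Rightarrow \ell-|S|=1$, which of course holds since $G$ is nonabelian simple.
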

\begin{proof}
For all $j\in[\ell]$ we have that $\pi_{\{j\}}(N)\trianglelefteq G^{\{j\}}$ by normality of $N$ in $G^\ell$. Therefore, this projection must be either the full group $G^{\{j\}}$ or trivial. 

There is at most one $j$ such that $\pi_{\{j\}}(N)=\{1_G\}^{\{j\}}$. Otherwise we would have $\pi_{\{j_1,j_2\}}(N)=\{1_G\}^{\{j_1,j_2\}}$, and 
\begin{align*}
    [G^\ell:N]\geq [\pi_{\{j_1,j_2\}}(G^\ell):\pi_{\{j_1,j_2\}}(N)]\geq |G|^2.
\end{align*}
The first inequality follows from the elementary group theory fact that for any group homomorphism $\varphi:A\to C$, we have $[\varphi(A):\varphi(B)]$ for any $B\leq A$.

Assume that there is no $j$ such that $\pi_{\{j\}}(N)=\{1_G\}^{\{j\}}$. Then there is at least one $j\in[\ell]$ such that $N{\uhr}_{\{j\}}=\{1_G\}^{\{j\}}$, since otherwise we would have $N{\uhr}_{\{j\}}=G^{\{j\}}$ for all $j\in[\ell]$. This follows from \Cref{lem: Goursat} implying that $N{\uhr}_{\{j\}}\trianglelefteq \pi_{\{j\}}(N)$ and simplicity of $G$. This would imply $N=G^\ell$, a contradiction. Without loss of generality assume that $j=\ell$.

Using \Cref{lem: Goursat}, write $N=\mathrm{Graph}(\pi_{\{\ell\}}(N)\cong\frac{\pi_{[\ell-1]}(N)}{N{\uhr}_{[\ell-1]}})$. Therefore, there exists some isomorphism $\varphi:\frac{\pi_{[\ell-1]}(N)}{N{\uhr}_{[\ell-1]}}\to G^{\{\ell\}}$ such that $N=\{(g_1,\dots,g_{\ell-1},g_\ell):\varphi(g_1,\dots,g_{\ell-1})N{\uhr}_{[\ell-1]}= g_\ell\}$. But such a subgroup cannot be normal in $G^\ell$, since for any pair of non-identity group elements $g\in G$, there exists $a\in G$ such that $a^{-1}ga\neq g$ by $G$ being nonabelian.

Thus, there is exactly one $j$ such that $\pi_{\{j\}}(N)=\{1_G\}^{\{j\}}$. Assume without loss of generality that $j=\ell$. Then $N=\mathrm{Graph}(\{1_G\}\cong\frac{\pi_{[\ell-1]}(N)}{N{\uhr}_{[\ell-1]}})$. Since $\pi_{[\ell-1]}(N)=G^{[\ell-1]}$, it must also be the case that $N{\uhr}_{[\ell-1]}=G^{[\ell-1]}$. Thus, this $j$ is the only such that $N{\uhr}_{j}< G^{\{\ell\}}$. 
\end{proof}

\begin{corollary}\label{cor: quotient G}
Let $N\triangleleft H\leq G^k$ be such that $\frac{H}{N}\cong G$. Let $T$ be a support for $H$. Then $N$ has a support of size exactly $|T|+1$. 
\end{corollary}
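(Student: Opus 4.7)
The plan is to leverage the subgroup $K := \prod_{i \in [k]\setminus T} G^{\{i\}}$, viewed as an internal direct product inside $G^k$. By the support hypothesis, $G^{\{i\}} \le H$ for every $i \notin T$, so $K \le H$; and because each $G^{\{i\}}$ is normal in $G^k$, the subgroup $K$ is normal in $H$. First I would observe that $NK$ is then also normal in $H$, so $NK/N$ is a normal subgroup of $H/N \cong G$. By simplicity of $G$, exactly one of two things happens: either $NK = N$ (equivalently, $K \le N$), or $NK = H$.

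In the first case, $K \le N$ immediately forces $N{\uhr}_{\{i\}} = G^{\{i\}}$ for every $i \notin T$, so $T$ is already a support for $N$, and enlarging $T$ by any single element of $[k]\setminus T$ produces a support of size $|T|+1$. In the second case, $K/(K \cap N) \cong NK/N \cong G$, so $K \cap N$ is a normal subgroup of $K \cong G^{[k]\setminus T}$ whose quotient is isomorphic to $G$. At this point I would apply \Cref{lem: quotient G} directly to $K \cap N \triangleleft K$, which supplies a unique $j \in [k]\setminus T$ such that $(K \cap N){\uhr}_{\{j\}} < G^{\{j\}}$, with $(K \cap N){\uhr}_{\{i\}} = G^{\{i\}}$ for all other $i \in [k]\setminus T$.

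The key bookkeeping step is the observation that for any $i \in [k]\setminus T$, an element of $G^k$ supported only on coordinate $i$ automatically lies in $K$; hence $(K \cap N){\uhr}_{\{i\}} = N{\uhr}_{\{i\}}$ for all such $i$. Consequently, in the second case, $N{\uhr}_{\{i\}} = G^{\{i\}}$ for every $i \in [k]\setminus T$ with $i \neq j$, and the set $S := T \cup \{j\}$ is a support for $N$ of size exactly $|T|+1$.

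The main obstacle I anticipate is not a deep one but requires care: verifying that $K \triangleleft H$ and that the restriction operation $(-){\uhr}_{\{i\}}$ commutes with intersection against $K$ in the way claimed. Once this is in place, the simplicity-driven dichotomy plus \Cref{lem: quotient G} does all of the work. A small boundary subtlety is that in the first case one needs $T \subsetneq [k]$ in order to enlarge, which is the natural regime (the claim is vacuous or trivially adjusted otherwise).
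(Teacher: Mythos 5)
Your proof is correct and takes essentially the same route as the paper's: restrict to the coordinates in $[k]\setminus T$, use simplicity of $H/N\cong G$ to obtain a dichotomy, and apply \Cref{lem: quotient G} in the nontrivial case. Your packaging via $K=G^{[k]\setminus T}\trianglelefteq H$, $NK\trianglelefteq H$, and the second isomorphism theorem $K/(K\cap N)\cong NK/N$ is a cleaner way to get what the paper obtains by a hand-built injection with normality asserted ``by construction,'' and you correctly note that the corollary's ``exactly $|T|+1$'' should really read ``at most $|T|+1$'' to cover the degenerate case $T=[k]$ --- harmless slack that does not affect its use in \Cref{cor: support lemma}.
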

\begin{proof}
Without loss of generality assume that $T=\{k-|T|+1,\dots,k\}$. We claim that $\frac{H{\uhr}_{[k-|T|]}}{N{\uhr}_{[k-|T|]}}$ is isomorphic to a normal subgroup of $\frac{H}{N}$. To see this, first note that $N{\uhr}_{[k-|T|]}\trianglelefteq H{\uhr}_{[k-|T|]}$, since the intersection of a subgroup with a normal subgroup is normal.

Now let $(g_1,\dots,g_{k-|T|})N{\uhr}_{[k-|T|]}$ and $(h_1,\dots,h_{k-|T|})N{\uhr}_{[k-|T|]}$ be two distinct elements of $\frac{H{\uhr}_{[k-|T|]}}{N{\uhr}_{[k-|T|]}}$. We claim that $(g_1,\dots,g_{k-|T|},1_G^{|T|\mathrm{\:times}})N$ and $(h_1,\dots,h_{k-|T|},1_G^{|T|\mathrm{\:times}})N$ must be distinct elements of $\frac{H}{N}$.

Assume otherwise. Then $(g_1h_1,\dots,g_{k-|T|}h_{k-|T|},1_G^{|T|\mathrm{\:times}})$ is an element of $N$. This implies that $(g_1h_1,\dots,g_{k-|T|}h_{k-|T|})$ is an element of $N{\uhr}_{[k-|T|]}$. We have now contradicted that $(g_1,\dots,g_{k-|T|})N{\uhr}_{[k-|T|]}$ and $(h_1,\dots,h_{k-|T|})N{\uhr}_{[k-|T|]}$ are distinct.

This gives an injection of $\frac{H{\uhr}_{[k-|T|]}}{N{\uhr}_{[k-|T|]}}$ into $\frac{H}{N}$. Normality of the image of this injection in $\frac{H}{N}$ is clear by construction.

By simplicity of $\frac{H}{N}\cong G$, either $\frac{H{\uhr}_{[k-|T|]}}{N{\uhr}_{[k-|T|]}}\cong \{1_G\}$, in which case we are done immediately, or $\frac{H{\uhr}_{[k-|T|]}}{N{\uhr}_{[k-|T|]}}\cong G$, in which case we are done by \Cref{lem: quotient G}, since $H{\uhr}_{[k-|T|]}=G^{[k-|T|]}$.
\end{proof}

\begin{corollary}\label{cor: support lemma} 
If $[G^k:H]< n^m$, then $H$ has a support of size at most $m$. 
\end{corollary}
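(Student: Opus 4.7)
The plan is to prove the stronger equivalent statement: if $T$ is a \emph{minimal} support of $H$, then $[G^k:H] \ge n^{|T|}$. Since $n \ge 2$, the corollary follows by contraposition. By \Cref{def:support}, $T$ being a support means $H \supseteq G^{[k]\setminus T}$ (embedded with identity in the $T$-coordinates). A short computation then shows $H = G^{[k]\setminus T} \times K$ where $K := \pi_T(H) \le G^T$, so $[G^k:H] = [G^T:K]$; minimality of $T$ translates to $K{\uhr}_{\{i\}} < G^{\{i\}}$ for every $i \in T$.

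I would then prove by induction on $|T|$ that any such $K \le G^T$ satisfies $[G^T:K] \ge n^{|T|}$. Fix $i \in T$ and let $K' = K \cap G^{T\setminus\{i\}}$; the second isomorphism theorem gives $[G^T:K] = [G^{T\setminus\{i\}}:K'] \cdot [G^{\{i\}}:\pi_i(K)]$, and the equality $K'{\uhr}_{\{j\}} = K{\uhr}_{\{j\}} < G^{\{j\}}$ for $j \in T\setminus\{i\}$ allows the inductive hypothesis, giving $[G^{T\setminus\{i\}}:K'] \ge n^{|T|-1}$. If some $i \in T$ has $\pi_i(K) < G^{\{i\}}$, we pick that $i$: \Cref{lem:min index} yields $[G^{\{i\}}:\pi_i(K)] \ge n$, and the induction step closes.

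The main obstacle is the \emph{subdirect} case, where $\pi_i(K) = G^{\{i\}}$ for every $i \in T$. Here $K{\uhr}_{\{i\}}$ is normal in $K$ (conjugates of singly-supported elements remain singly-supported), so its image in $\pi_i(K) = G$ is normal, and simplicity forces $K{\uhr}_{\{i\}} = 1$ for every $i$. The target bound becomes $|K| \le |G|^{|T|/2}$, which together with $|G| \ge n^2$ from \Cref{lem:min index} would give $[G^T:K] \ge |G|^{|T|/2} \ge n^{|T|}$. To prove this, define $i \sim j$ on $T$ by $\pi_{\{i,j\}}(K) \ne G^{\{i,j\}}$; \Cref{lem: Goursat} together with simplicity of $G$ forces $\pi_{\{i,j\}}(K)$ in that case to be the graph of an automorphism $G \to G$, and composing such automorphisms along pairs shows $\sim$ is transitive. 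Partitioning $T$ into the resulting blocks $B_1,\dots,B_r$, each block forces $K$ to restrict to a diagonal copy of $G$ inside $G^{B_s}$, yielding $K \subseteq D_{B_1} \times \cdots \times D_{B_r}$ and hence $|K| \le |G|^r$.

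The final step is to rule out singleton blocks, and this is where the nonabelianness of $G$ is essential. If $\{i\}$ were a singleton in $\sim$, then Goursat applied to $K \hookrightarrow G^{\{i\}} \times \pi_{T\setminus\{i\}}(K)$ (using $K{\uhr}_{\{i\}} = 1$) combined with $\pi_{\{i,j\}}(K) = G^{\{i,j\}}$ for every $j \ne i$ would force $K$ to consist of elements of the form $(a,b,\alpha(a)\sigma(b),\dots)$ for certain automorphisms $\alpha,\sigma$ of $G$; the subgroup closure axiom then yields the identity $\sigma(b_1)\alpha(a_2) = \alpha(a_2)\sigma(b_1)$ for all $a_2, b_1 \in G$, contradicting that $G$ is nonabelian. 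Hence every block has size at least $2$, so $r \le |T|/2$, giving $|K| \le |G|^{|T|/2}$ and completing the proof.
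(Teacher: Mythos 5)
Your argument is genuinely different from the paper's. The paper inducts on the ambient dimension $k$, applying Goursat directly to $H \le G^{[k]} \times G^{\{k+1\}}$ and invoking \Cref{cor: quotient G} about quotients isomorphic to $G$; you instead reduce to $K := \pi_T(H)$ for a minimal support $T$, induct on $|T|$, and split into the ``some projection proper'' case (where $[G:\pi_i(K)] \ge n$ and the index multiplies through) and the subdirect case (where $n \le \sqrt{|G|}$ does the work). Your setup through the reduction $H = G^{[k]\setminus T} \times K$, the index factorization $[G^T:K] = [G^{T\setminus\{i\}}:K'] \cdot [G:\pi_i(K)]$, the derivation $K{\uhr}_{\{i\}} = 1$ in the subdirect case, the equivalence relation $\sim$ and its transitivity via composition of Goursat automorphisms, and the bound $|K| \le |G|^r$ are all sound, and the subdirect case is exactly where the classification-dependent inequality $n \le \sqrt{|G|}$ from \Cref{lem:min index} enters, matching the paper's use of it.

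The one genuine gap is the ``rule out singleton blocks'' step. You claim that a singleton $\{i\}$ forces $K$ to consist of elements of the form $(a, b, \alpha(a)\sigma(b), \dots)$, but this structure is not justified for general $|T|$: Goursat applied to $K \hookrightarrow G^{\{i\}} \times \pi_{T\setminus\{i\}}(K)$ gives $K$ as the graph of an isomorphism $G \cong L/N$ (with $L = \pi_{T\setminus\{i\}}(K)$ and $N$ subdirect in $G^{T\setminus\{i\}}$), not a triple with a designated third coordinate expressible as $\alpha(a)\sigma(b)$. Your commutativity contradiction genuinely only goes through when $|T| = 3$. Fortunately the gap is readily repaired, and in a way that also eliminates the need for any contradiction: if $\{i\}$ is a singleton, then $\pi_{\{i,j\}}(K) = G^2$ for every $j \ne i$, so $K{\uhr}_{\{i,j\}}$ (a normal subgroup of $K$ whose projection to $G^{\{i,j\}}$ is normal in $G^2$) must be trivial, since any of the nontrivial normal subgroups $G^{\{i\}}, G^{\{j\}}, G^2$ would make $K{\uhr}_{\{i\}}$ or $K{\uhr}_{\{j\}}$ nontrivial. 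Hence $L{\uhr}_{\{j\}} = \pi_j(K{\uhr}_{\{i,j\}}) = 1$ for all $j \ne i$, so $L$ is a subdirect subgroup of $G^{T\setminus\{i\}}$ with all singleton restrictions trivial, and $K \cong L$ because $K{\uhr}_{\{i\}} = 1$. Strong induction on $|T|$ then gives $|K| = |L| \le |G|^{(|T|-1)/2} \le |G|^{|T|/2}$ directly, without needing to know whether singleton blocks actually exist. With that repair, your proof is complete and is a nice alternative to the paper's.
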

\begin{proof}
We prove by induction on $k$. In the base case, $k=1$ and this is true since $H$ has empty support unless $H<G$, in which case $[G:H]\geq n$ by \Cref{lem:min index}. Now assume that the result holds for subgroups of $G^{k}$. Let $H\leq G^{k+1}$, and use \Cref{lem: Goursat} to write $H=\text{Graph}(\frac{G_1}{N_1}\cong\frac{G_2}{N_2})$, where $G_1=\pi_{[k]}(H)$, $N_1=H{\uhr}_{[k]}$, $G_2=\pi_{\{k+1\}}(H)$, and $N_2=H{\uhr}_{\{k+1\}}$.

First consider if $G_2=G$. If $N_2=G$, then we have $H=G_1\times G$, and there exists a support $T$ for $G_1$ of size at most $\log_{n}([G^{k}:G_1])= \log_{n}([G^{k+1}:H])$ that is also a support for $H$. 

Otherwise we have $N_2=\{1_G\}$, since $G$ is simple. Let $T$ be a minimum size support for $G_1$. Since $\frac{G_1}{N_1}\cong \frac{G_2}{N_2}\cong G$, \Cref{cor: quotient G} implies that $N_1$ has a support given by $T\cup \{j\}$ for some $j\in [k]$.

Therefore, $T\cup\{j\}$ is a support for $N_1$ of size at most 
\begin{align*}
    |T\cup \{j\}|&=|T|+1\leq \log_{n}([G^k:G_1])+1 =\log_{n}(\frac{|G^k|}{|G_1|})+1\\
    &= \log_{n}(\frac{|G^{k+1}|}{|G||H|})+1=\log_{n}([G^{k+1}:H])+1-\log_{n}(|G|)\leq \log_{n}([G^{k+1}:H])-1.
\end{align*}
The first inequality follows by induction. The equality $\log_n(\frac{|G^k|}{|G_1|})=\log_n(\frac{|G^{k+1}|}{|G||H|})$ follows from $|H|=|G_1|$, since $H=\text{Graph}(\frac{G_1}{N_1}\cong G)$ and hence $|G||N_1|=|G_1|$. The last inequality uses that $n\leq \sqrt{|G|}$, a fact proved in \Cref{lem:min index}. Now, we claim that $T\cup \{j,k+1\}$ serves as a support for $H$ of size at most $\log_{n}([G^{k+1}:H])$. Let $i\in [k]\setminus( T\cup \{j\})$. Then $H{\uhr}_{\{i\}}\geq N_1|_{\{i\}} = G^{\{i\}}$.

Now we consider when $G_2<G$. In this case, let $T$ be a minimum size support for $N_1$. Then by induction
\begin{align*}
    |T|&\leq \log_{n}([G^k:N_1])=\log_{n}(\frac{|G^k|}{|N_1|})\\
    &= \log_{n}(\frac{|G^{k+1}||G_2|}{|G||H|})= \log_{n}([G^{k+1}:H])-\log_{n}([G:G_2])\leq\log_{n}([G^{k+1}:H])-1.
\end{align*}
The first inequality follows by induction. The equality $|N_1|=\frac{|H|}{|G_2|}$ used in the third step follows because $H$ is the graph of an isomorphism $\frac{G_1}{N_1}\cong \frac{G_2}{N_2}$. This implies that each coset of $N_2$ in $G_2$ yields $|N_1||N_2|$ group elements of $H$. There are $\frac{|G_2|}{|N_2|}$ such cosets, so $|H|=\frac{|N_1||N_2||G_2|}{|N_2|}= |N_1||G_2|$. The last inequality follows from the fact that $n\leq \min_{L<G}[G:K]$, proved in \Cref{lem:min index}. 

Then $T\cup \{k+1\}$ has size at most $\log_{n}([G^{k+1}:H])$. We show that now $T\cup \{k+1\}$ serves as a support for $H$. For any $i\in[k+1]\setminus (T\cup\{k+1\})= [k]\setminus T$, we have $H{\uhr}_{\{i\}}\geq N_1{\uhr}_{\{i\}}= G^{\{i\}}$.
\end{proof}

\subsubsection{The properties of $\mu$}

Now we begin proving that $\mu$ satisfies the hypotheses of \Cref{lem:general lower bound}. \Cref{Q property} is clearly satisfied by definition of $\mu$. We verify that $\mu$ satisfies \Cref{literal property}.

\begin{lemma}\label{lem: literals no $G$-hard chains}
Let $H=\Stab_Q(\chi_i|_\Omega)$ for some $i\in[kn^2]$. Then $\mu(H,K)\leq 1$ for any $K\geq H$.
\end{lemma}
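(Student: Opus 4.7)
The plan is to describe $H = \Stab_Q(\chi_i|_\Omega)$ explicitly and then check, edge by edge, that the graph $([k],E(H))$ has no edges at all. Since $\mu(H,K)$ is the size of the largest component of $([k],E(H) \cap E(K))$, this immediately gives $\mu(H,K) \le 1$ for every $K \ge H$.

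First, I would fix the coordinate in question: $\chi_i = M_{\ell,a,b}$ for some $\ell \in [k]$ and $a,b \in [n]$. Using the left-right action, $(g_1,\dots,g_{2k})$ sends the variable $M_{\ell,a,b}$ to $M_{\ell,g_{2\ell-1}(a),g_{2\ell}(b)}$, while the definition of $Q$ forces $g_1 = g_{2k} = 1_G$. Hence $(g_1,\dots,g_{2k}) \in H$ iff
\begin{equation*}
  g_1 = g_{2k} = 1_G \qquad\text{and}\qquad M_{\ell,g_{2\ell-1}(a),g_{2\ell}(b)} = M_{\ell,a,b} \text{ for every } M_\ell \in \BAR G.
\end{equation*}
Here I would observe that as $M_\ell$ ranges over all permutation matrices in $\BAR G$ and $G$ is transitive, the displayed identity forces $g_{2\ell-1} \in \Stab_G(a)$ and $g_{2\ell} \in \Stab_G(b)$: if, say, $g_{2\ell-1}(a) \ne a$, transitivity yields some $\pi \in G$ with $\pi(a) = b$ but $\pi(g_{2\ell-1}(a)) \ne b$, contradicting the identity. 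All other coordinates $g_j$ with $j \notin \{1,2\ell-1,2\ell,2k\}$ are unrestricted.

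Next, for each $i \in [k-1]$ I would compute the pointwise restriction $H\!\uhr_{\{2i,2i+1\}}$, i.e., the set of pairs $(h_{2i},h_{2i+1})$ such that $(1,\dots,1,h_{2i},h_{2i+1},1,\dots,1) \in H$. By the characterization above this restriction equals $G \times G$ whenever $i \notin \{\ell-1,\ell\}$; it equals $G \times \Stab_G(a)$ when $i = \ell-1$; and it equals $\Stab_G(b) \times G$ when $i = \ell$ (with obvious degenerations if $\ell = 1$ or $\ell = k$, where one side is trivially satisfied by the $Q$-condition). Since $G$ is transitive on $n \ge 2$ points, $\Stab_G(a) < G$ and $\Stab_G(b) < G$, so each of these three subgroups contains elements of the form $(g,1_G)$ or $(1_G,g)$ with $g \neq 1_G$ and is therefore never equal to $\Diag(G^{\{2i,2i+1\}})$.

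Thus $E(H) = \emptyset$, so $E(H) \cap E(K) = \emptyset$ for any $K$, and every connected component of $([k], E(H) \cap E(K))$ is a single vertex, giving $\mu(H,K) \le 1$. I don't anticipate any obstacle here: the whole argument is just unpacking the action and applying transitivity to rule out the specific diagonal equality defining $E(H)$. The only thing to be careful about is the convention for $H\!\uhr_{\{2i,2i+1\}}$ (it is the pointwise support restriction, as used in Goursat's lemma), which makes it automatic that $H\!\uhr$ is a product of one-coordinate stabilizers rather than some more subtle diagonal group.
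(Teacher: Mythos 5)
Your proof is correct and follows essentially the same route as the paper's: you note that the constraint defining $H$ involves only the two coordinates $\{2\ell-1,2\ell\}$, that for every $i \in [k-1]$ at least one of $2i, 2i+1$ falls outside this set (by the parity mismatch between $\{2i,2i+1\}$ and $\{2\ell-1,2\ell\}$), and hence that no restriction $H\!\uhr_{\{2i,2i+1\}}$ can equal $\Diag(G^{\{2i,2i+1\}})$. The paper's own proof is a terser version of exactly this observation.

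One imprecision is worth flagging. Your justification that the identity forces $g_{2\ell-1}\in\Stab_G(a)$ and $g_{2\ell}\in\Stab_G(b)$ does not quite close: choosing $\pi$ with $\pi(a)=b$, the identity demands $\pi(g_{2\ell-1}(a)) = g_{2\ell}(b)$, not $\pi(g_{2\ell-1}(a)) = b$; you cannot replace $g_{2\ell}(b)$ by $b$ while simultaneously trying to prove $g_{2\ell}(b)=b$. What the identity directly yields (comparing it at $\pi\sigma$ for $\sigma\in\Stab_G(a)$ against its value at $\pi$) is $\Stab_G(a)=\Stab_G(g_{2\ell-1}(a))$, and concluding $g_{2\ell-1}(a)=a$ from this requires that distinct points have distinct stabilizers. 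That holds here because the minimum-degree faithful representation of a nonabelian simple group is primitive, but it is not a consequence of transitivity alone (it fails, e.g., for $C_p$ in its regular action). Fortunately this full characterization of $H$ is more than the lemma needs: the restrictions $H\!\uhr_{\{2i,2i+1\}}$ you actually invoke set exactly one of $g_{2\ell-1},g_{2\ell}$ to $1_G$ by definition, and in that reduced situation your stabilizer argument goes through verbatim, so the conclusion $E(H)=\emptyset$ and $\mu(H,K)\le 1$ is sound.
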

\begin{proof}
In this case $H$ is the $Q$-stabilizer of a function on $\BAR{G}^k$ depending only on $M_{i^*}$ for some $i^*\in[k]$. Therefore, we have that for any $j\in \{2,\dots,2k-1\}\setminus \{2i^*-1,2i^*\}$, we have $H{\uhr}_{\{j\}}=G^{\{j\}}$. As a result, $E(H)\cap E(K)$ is empty and the largest connected component in $([k],E(H)\cap E(K))$ is a single vertex.
\end{proof}

The following lemma proves that $\mu$ satisfies \Cref{intersection property}.
\begin{lemma}\label{lem: simple intersection property}
Let $H\leq K\leq Q$. Let $S\subseteq \{\{i,i+1\}:i\in[k]\}$ be the edge set of a connected component in $([k],E(H)\cap E(K))$. Let $r\in\mathbbm{N}$ and for each $i\in[r]$, fix $H_i\leq L_i\leq Q$ such that $L_i$ is an element of $\mathcal{N}_{Q,\Omega}$. Suppose that $H_1\cap\dots\cap H_r=H$ and $L_1\cap \dots\cap L_r\leq K$. Then there exists $i\in[r]$ such that $S$ is a subset of edges in $([k],E(H_i)\cap E(L_i))$.
\end{lemma}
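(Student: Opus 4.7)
My plan is to pick any single edge of $S$, locate some $j \in [r]$ witnessing that edge in $E(H_j) \cap E(L_j)$, and then propagate that membership across the entire path $S$, exploiting the fact that $L_j = \Stab_Q(f_j)$ for some integer-valued $f_j$ on $\Omega = \BAR{G}^k$. Throughout I associate to a word-index edge $\{i,i+1\}$ the position pair $\{2i,2i+1\}$ in $G^{2k}$.

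For the single-edge step, fix any $\{i,i+1\} \in S$. Because $\{i,i+1\} \in E(H)$, every fiber $L_j\uhr_{\{2i,2i+1\}}$ contains $H\uhr_{\{2i,2i+1\}} = \Diag(G^2)$, so by \Cref{cor:diag to full} each such fiber is either $\Diag(G^2)$ or $G^2$. Since fibers commute with intersections, the hypothesis $L_1 \cap \dots \cap L_r \le K$ together with $K\uhr_{\{2i,2i+1\}} = \Diag(G^2)$ gives $\bigcap_j L_j\uhr_{\{2i,2i+1\}} \le \Diag(G^2)$; if every $L_j\uhr_{\{2i,2i+1\}}$ were $G^2$, the intersection would equal $G^2$, so some $j$ satisfies $L_j\uhr_{\{2i,2i+1\}} = \Diag(G^2)$. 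The sandwich $H\uhr \le H_j\uhr \le L_j\uhr$ then forces $H_j\uhr_{\{2i,2i+1\}} = \Diag(G^2)$, placing $\{i,i+1\}$ in $E(H_j) \cap E(L_j)$.

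The main obstacle is promoting this per-edge witness to one $j$ that simultaneously covers all of $S$. Here the hypothesis $L_j \in \mathcal{N}_{Q,\Omega}$ becomes essential. Writing $L_j = \Stab_Q(f_j)$, the identity $L_j\uhr_{\{2i,2i+1\}} = \Diag(G^2)$ translates (via invariance under the diagonal elements $(1,\dots,g,g,\dots,1) \in Q$) to $f_j$ factoring as $f_j(M_1,\dots,M_k) = g(M_1,\dots,M_{i-1}, M_iM_{i+1}, M_{i+2}, \dots, M_k)$ with $g$ depending nontrivially on the $M_iM_{i+1}$-slot; nontriviality is forced because an $f_j$ independent of the product slot would enlarge the fiber to $G^2$. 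For any adjacent $\{i+1,i+2\} \in E(H)$, I would then show $L_j\uhr_{\{2(i+1),2(i+1)+1\}} \ne G^2$: the equality would make $f_j$ invariant under arbitrary right-multiplication of $M_{i+1}$ by $G$ (freeze the right slot of the $G^2$-fiber to the identity), hence independent of $M_{i+1}$ altogether, contradicting the nontrivial dependence of $g$ on $M_iM_{i+1}$ since varying $M_{i+1}$ at fixed $M_i$ ranges over all of $\BAR{G}$. The mirror argument on the left slot of the corresponding $G^2$-fiber handles $\{i-1,i\} \in E(H)$. Because the fiber at any edge of $E(H)$ is $\ge \Diag(G^2)$, \Cref{cor:diag to full} again forces it to equal $\Diag(G^2)$, and the same sandwich puts the neighboring edge into $E(H_j)$.

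Iterating this local propagation inductively along the path $S$, whose edges all lie in $E(H)$, starting from any single edge supplied by the first step yields $S \subseteq E(H_j) \cap E(L_j)$ for a uniform $j$, which is exactly what the lemma demands.
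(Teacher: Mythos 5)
Your proposal is correct, and it takes a related but genuinely different route from the paper's.

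The paper's proof supposes for contradiction that \emph{every} $L_i$ has a ``hole'' (an edge $\{j,j+1\}$ of $S$ at which $L_i\uhr_{\{2j,2j+1\}} = G^2$). For each such $i$ it starts a twist element $(1,\dots,1,g_0,1)$ at the rightmost position $2k-1$ and propagates it leftward through $L_i$'s diagonal fibers by a single chain of rewriting equalities, absorbing it at the hole; hence this twist lies in every $L_i$, and therefore in $\bigcap_i L_i \le K$, contradicting $K\uhr_{\{2k-2,2k-1\}} = \Diag(G^2)$. Your proof instead proceeds in two stages: first, using the clean identity $\bigcap_j(L_j\uhr_T) = (\bigcap_j L_j)\uhr_T$ together with \Cref{cor:diag to full}, you isolate a single witness $L_j$ whose fiber is $\Diag$ at one chosen edge of $S$; second, you prove a local propagation step, showing that a $\Diag$ fiber cannot sit next to a $G^2$ fiber of the same $L_j$ inside $E(H)$. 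The local step uses the same underlying algebra as the paper's chain argument (diag-to-full, and the interplay between the $\Diag$ fiber forcing $f_j$ to factor through $M_iM_{i+1}$, while a $G^2$ fiber forces $f_j$ to be constant in a neighboring matrix argument), but the organization is different: you establish a one-edge witness existentially and then spread it, whereas the paper negates the conclusion globally and collapses to a contradiction at one endpoint. Your decomposition is arguably more modular and makes the role of the hypothesis $L_j \in \mathcal{N}_{Q,\Omega}$ more visible; the paper's single chain is more economical in notation. One minor point worth tightening in a write-up: the phrase ``hence independent of $M_{i+1}$ altogether'' is correct, but it deserves the one-line justification that invariance under $M_{i+1} \mapsto M_{i+1}a^{-1}$ for all $a \in G$ already ranges over every value of the $M_{i+1}$-argument, so no separate handling of the left action on $M_{i+1}$ is needed.
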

\begin{proof}
Assume without loss of generality that $S=\{\{j,j+1\}:j\in[k-1]\}$, since in all other cases $S$ is still a contiguous path on $[k]$ and the proof is the same. Notice that $L_i\geq H_i\geq H$ for all $i\in[r]$, so $L_i{\uhr}_{\{2i,2i+1\}}\geq H{\uhr}_{\{2i,2i+1\}}=  \Diag(G^{\{2i,2i+1\}})$ for all $i\in[r]$.

For any $i\in[r]$, assume that there is some $j\in[k-1]$ such that $L_i{\uhr}_{\{2j,2j+1\}}> \Diag(G^{\{2j,2j+1\}})$. By \Cref{cor:diag to full} we have $L_i{\uhr}_{\{2j+1\}}= G^{\{2j+1\}}$. 

Since $L_i$ is an element of $\mathcal{N}_{Q,\Omega}$, there exists a function $f$ on $\Omega$ for which $L_i$ is the $Q$-stabilizer. Let $g_0\in G$ be any group element. For any tuple of matrices $(M_1,\dots,M_k)\in \BAR{G}^k$, 
\begin{align*}
    &(1_G,\dots,1_G,g_0,1_G)f(M_1,\dots,M_k)\\
    =&f(M_1,\dots,M_{k-1},\BAR{g_0}M_k)\\
    =&f(M_1,\dots,M_{k-1}\BAR{g_0}^{-1},M_k)\text{ (since $L_1{\uhr}_{\{2k-2,2k-1\}}\geq \Diag(G^{\{2k-2,2k-1\}})$)}\\
    =&f(M_1,\dots,\BAR{g_1}M_{k-1},M_k) \text{ for some $g_1\in G$}\\
    &\:\:\vdots \\
    =&f(M_1,\dots,\BAR{g_{k-j+1}}M_{j+1},\dots,M_k)\text{ for some $g_{k-j+1}\in G$}\\
    =&f(M_1,\dots,M_k).
\end{align*}
The last equality follows from $f$ being $L_i$ invariant and $L_i{\uhr}_{\{2j+1\}}=G^{\{2j+1\}}$. This shows that $(1_G,\dots,1_G,g_0,1_G)$ is an element of $L_i$, since $L_i$ is the $Q$-stabilizer of $f$.

If for all $i\in[r]$ there exists $j\in[k-1]$ such that $L_i{\uhr}_{\{2j,2j+1\}}> \Diag(G^{\{2j,2j+1\}})$, then we have shown that $(1_G,\dots,1_G,g_0,1_G)$ is an element of $L_1\cap\dots\cap L_r\leq K$ for any $g_0\in G$, a contradiction, since $\{k-1,k\}\in S\subseteq E(K)$ implies that $K{\uhr}_{\{2k-2,2k-1\}}=\Diag(G^{\{2k-2,2k-1\}})$. 

Therefore, there is some $L_i$ such that $L_i{\uhr}_{\{2j,2j+1\}}=\Diag(G^{\{2j,2j+1\}})$ for all $j\in[k-1]$. Now $H_i{\uhr}_{\{2j,2j+1\}}=\Diag(G^{\{2j,2j+1\}})$ for all $j\in[k-1]$ because $H\leq H_i\leq L_i$ implies that for all $j\in[k-1]$,
\begin{align*}
    \Diag(G^{\{2j,2j+1\}})= H{\uhr}_{\{2j,2j+1\}}\leq H_i{\uhr}_{\{2j,2j+1\}}\leq L_i{\uhr}_{\{2j,2j+1\}}=\Diag(G^{\{2j,2j+1\}}).
\end{align*}
Thus, $S$ is a subset of edges in $([k],E(H_i)\cap E(L_i))$.
\end{proof}

It remains to prove that $\mu$ satisfies \Cref{shrinkage property} with $c=n$. A designer attempting to build a small formula would want to decrease $\mu$, and hence the size of the largest connected component in $([k],E(H)\cap E(K))$ quickly, so we would like to characterize the cost of removing edges (which is stated using indices of subgroups). Because a group $H$ with $([k],E(H))$ having some connected component with $\ell\geq 2$ vertices has a shifted diagonal subgroup isomorphic to $G^{\ell}$, and the designer gets to deletes edge when the restrictions $H{\uhr}_{\{2i,2i+1\}}$ are no longer equal to $\Diag(G^{\{2i,2i+1\}})\cong G$, the notion of support (\Cref{def:support}) is useful.

We use \Cref{cor: support lemma} to prove that $\mu$ satisfies \Cref{shrinkage property} with $c=n$.
\begin{lemma}\label{lem: shrinkage by n}
Let $U\leq V\leq Q$ with $V\in\mathcal{N}_{Q,\Omega}$. Let $H\leq L\leq Q$ with $L\in\mathcal{N}_{Q,\Omega}$. Suppose that that $\bigcap_{h\in H}h^{-1}Uh\leq H$ and $\bigcap_{h\in H}h^{-1}Vh\leq L$. Let $[H:U]\leq n^{m-1}$. Then
\begin{align*}
    \mu(U,V)\geq \mu(U\cap H,V)\geq  \frac{\mu(H,L)}{m}.
\end{align*}
\end{lemma}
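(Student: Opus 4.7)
For the first inequality $\mu(U,V)\geq\mu(U\cap H,V)$, it suffices to show $E(U\cap H)\cap E(V)\subseteq E(U)\cap E(V)$. Given an edge $\{i,i+1\}$ on the left, the chain $U\cap H\leq U\leq V$ yields $\Diag=\pi_{\{2i,2i+1\}}(U\cap H)\leq\pi_{\{2i,2i+1\}}(U)\leq\pi_{\{2i,2i+1\}}(V)=\Diag$, forcing $\pi_{\{2i,2i+1\}}(U)=\Diag$ and hence $\{i,i+1\}\in E(U)$.

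For the main inequality, let $S\subseteq[k]$ be the vertex set of the largest connected component of $([k],E(H)\cap E(L))$; WLOG $S$ is a contiguous interval of length $\ell=\mu(H,L)$ supporting $\ell-1$ path edges in $E(H)\cap E(L)$. The goal reduces to showing that at most $m-1$ of these path edges are missing from $E(U\cap H)\cap E(V)$, since then $S$ splits into at most $m$ components in $([k],E(U\cap H)\cap E(V))$ and the largest has $\geq\ell/m$ vertices.

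Missing path edges split into two types. \textbf{Type (a)}: $\pi_{\{2i,2i+1\}}(U\cap H)<\Diag$. Writing $T_S$ for the set of $Q$-coordinates appearing in pairs $\{2i,2i+1\}$ with $i,i+1\in S$, the restriction $H\uhr_{T_S}$ embeds into $D=\prod_{\{i,i+1\}\subseteq S}\Diag(G^{\{2i,2i+1\}})\cong G^{\ell-1}$ with every factor projection surjective; \Cref{lem: Goursat} together with the simplicity of $G$ describes $H\uhr_{T_S}$ as a product of diagonal copies of $G$ over a partition of $[\ell-1]$. Combined with $[H:H\cap U]\leq n^{m-1}$, this yields $[D:(U\cap H)\uhr_{T_S}]\leq n^{m-1}$, so by \Cref{cor: support lemma} applied to $(U\cap H)\uhr_{T_S}\leq G^{\ell-1}$ its support has size at most $m-1$; this support is precisely the set of Type (a) edges. \textbf{Type (b)}: $\pi_{\{2i,2i+1\}}(U\cap H)=\Diag$ but $\pi_{\{2i,2i+1\}}(V)=G^2$ by \Cref{cor:diag to full}. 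I rule these out by the propagation argument from the proof of \Cref{lem: simple intersection property}: \Cref{cor:diag to full} gives $\pi_{\{2i+1\}}(V)=G^{\{2i+1\}}$, and chaining the diagonal structure of $H$ (and $V$, which already contains $\Diag$ at the non-Type-(b) pairs of $S$ since $V\geq U\cap H$) along the path on $S$ shows that an element of the form $(1_G,\dots,g_0,\dots,1_G)$ with $g_0$ at a boundary coordinate of $S$ lies in $\bigcap_{h\in H}h^{-1}Vh\leq L$, contradicting $\pi_{\{2j',2j'+1\}}(L)=\Diag$ at the boundary edge $\{j',j'+1\}\in E(L)$.

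The principal technical obstacle is the index bound $[D:(U\cap H)\uhr_{T_S}]\leq n^{m-1}$ in Type (a)---not merely the weaker $[H\uhr_{T_S}:(U\cap H)\uhr_{T_S}]\leq n^{m-1}$---which requires the Goursat-based structural analysis of $H\uhr_{T_S}$ as a product of diagonals inside $D$ combined with the semantic constraints on $H$ inherited from $L\in\mathcal{N}_{Q,\Omega}$ via $H\leq L$.
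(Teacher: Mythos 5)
Your first inequality is fine and, in fact, cleaner than the paper's: the sandwich $\Diag = (U\cap H)\!\uhr_{\{2i,2i+1\}} \le U\!\uhr_{\{2i,2i+1\}} \le V\!\uhr_{\{2i,2i+1\}} = \Diag$ forces $\{i,i+1\}\in E(U)$ directly, whereas the paper runs a two-case argument via the conjugation hypothesis. (Be careful about $\pi$ versus $\uhr$; the definition of $E(\cdot)$ uses the restriction $\uhr$, not the projection.) Your Type~(b) treatment is also correct in conclusion but unnecessarily elaborate: if $V\!\uhr_{\{2i,2i+1\}} = G^{\{2i,2i+1\}}$ then $V$ contains the full coordinate subgroup $G^{\{2i,2i+1\}}$; conjugation by any $h\in Q$ fixes this subgroup, so $G^{\{2i,2i+1\}}\le\bigcap_{h\in H}h^{-1}Vh\le L$, contradicting $L\!\uhr_{\{2i,2i+1\}}=\Diag$ — no propagation along the path is needed, and the propagation lemma's hypotheses (a stabilizer of a function on $\Omega$ playing the role of $L_i$) aren't the ones you have here.

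The genuine gap is in Type~(a). You claim that $H\!\uhr_{T_S}$ \emph{embeds into} $D=\prod_{i\in R}\Diag(G^{\{2i,2i+1\}})\cong G^{\ell-1}$ and that Goursat plus simplicity makes $H\!\uhr_{T_S}$ a product of diagonals. This containment runs the wrong way: since $H\!\uhr_{\{2i,2i+1\}}=\Diag$ for every $i\in R$ and the pairs $\{2i,2i+1\}$ are disjoint, we have $D\le H\!\uhr_{T_S}$, and $H\!\uhr_{T_S}$ can be \emph{strictly} larger. A concrete example with $\ell-1=2$, $T_S=\{2,3,4,5\}$: the subgroup $\{(a,ac,b,bc^{-1}) : a,b,c\in G\}$ restricts to $\Diag$ on both pairs, contains $D$, but is not a product of diagonals. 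Consequently $(U\cap H)\!\uhr_{T_S}$ is a subgroup of $G^{T_S}\cong G^{2(\ell-1)}$, not of $G^{\ell-1}$, so \Cref{cor: support lemma} cannot be applied to it as stated, and the ``index'' $[D:(U\cap H)\!\uhr_{T_S}]$ is not well-defined. The paper sidesteps this by introducing the surjective coordinate projection $\varphi:H\!\uhr_{T_S}\to G^r$ (onto one coordinate of each diagonal pair) and applying the support lemma to the image $\varphi\bigl((U\cap H)\!\uhr_{T_S}\bigr)\le G^r$; equivalently one can apply it to $(U\cap H)\!\uhr_{T_S}\cap D\le D\cong G^r$, whose index in $D$ is bounded by $[H\!\uhr_{T_S}:(U\cap H)\!\uhr_{T_S}]\le[H:H\cap U]\le n^{m-1}$, and then for $i$ outside the resulting support one gets $(U\cap H)\!\uhr_{\{2i,2i+1\}}=\Diag$ by pushing through the containment $D\le H\!\uhr_{T_S}$. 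Your approach would need to be repaired along these lines before it closes.
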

\begin{proof}
First we prove that $\mu(U,V)\geq \mu(U\cap H,V)$. To do this, consider any edge $\{i,i+1\}\in E(U\cap H)\cap E(V)$. Then $(U\cap H){\uhr}_{\{2i,2i+1\}}=\Diag(G^{\{2i,2i+1\}})$, so that $U{\uhr}_{\{2i,2i+1\}}\geq \Diag(G^{\{2i,2i+1\}})$. 

Assume for sake of contradiction that $U{\uhr}_{\{2i,2i+1\}}> \Diag(G^{\{2i,2i+1\}})$, so that  $U{\uhr}_{\{2i,2i+1\}}= G^{\{2i,2i+1\}}$ by \Cref{cor:diag to full}.

In the first case $H{\uhr}_{\{2i,2i+1\}}\leq \Diag(G^{\{2i,2i+1\}})$. Then, since $(\bigcap_{h\in H}h^{-1}Uh){\uhr}_{\{2i,2i+1\}}=G^{\{2i,2i+1\}}> H{\uhr}_{\{2i,2i+1\}}$, we have $\bigcap_{h\in H}h^{-1}Uh\not\leq H$. This is a contradiction. Otherwise, we have $H{\uhr}_{\{2i,2i+1\}}=G^{\{2i,2i+1\}}$. Then, $(U\cap H){\uhr}_{\{2i,2i+1\}}=G^{\{2i,2i+1\}}$, giving a contradiction with $(U\cap H){\uhr}_{\{2i,2i+1\}}=\Diag(G^{\{2i,2i+1\}})$.

Therefore, it must be that $\{i,i+1\}\in E(U)\cap E(V)$, proving that $E(U\cap H)\cap E(V)\subseteq E(U)\cap E(V)$. This implies that $\mu(U,V)\geq \mu(U\cap H,V)$.

Now we prove the second statement $\mu(U\cap H,V)\geq  \frac{\mu(H,K)}{m}$. Let $\{\{i,i+1\}:i\in R\}$ be the edge set of any connected component of the graph $([k],E(H)\cap E(L))$, where $R\subseteq[k-1]$. Without loss of generality assume that $R=\{1,\dots,r\}$, where $r=|R|$. First, note that
\begin{align}\label{intersection index}
    [H{\uhr}_{\bigcup_{i\in R}\{2i,2i+1\}}:(H\cap U){\uhr}_{\bigcup_{i\in R}\{2i,2i+1\}}]&\leq [H:H\cap U]\leq n^{m-1}.
\end{align}
This follows from the standard group theory fact that if $A\leq B$ and $D$ is any other group, then $[B\cap D:A\cap D]\leq [B:A]$, and that the restriction $\uhr$ is an intersection with a subgroup of $Q$.

Then $\varphi:H{\uhr}_{\bigcup_{i\in R}\{2i,2i+1\}}\to G^r$ is a surjective homomorphism given by
\begin{align*}
    \varphi(g_2,\dots,g_{2r+1})&=(g_2,g_4,\dots,g_{2r})
\end{align*}
Let $T$ be a support for $\varphi((U\cap H){\uhr}_{\bigcup_{i\in R}\{2i,2i+1\}})$ such that $|T|\leq m-1$. Existence of such a $T$ is guaranteed by \Cref{cor: support lemma}, surjectivity of $\varphi$, and (\ref{intersection index}).

For any $i\in R\setminus T$, we have that $(U\cap H){\uhr}_{\{2i,2i+1\}}=\Diag(G^{\{2i,2i+1\}})$. Therefore, for every $i\in R\setminus T$, the edge $\{i,i+1\}$ is present in $E(U\cap H)$.

We claim that $\{i,i+1\}$ is an edge in $E(V)$ as well. Assume otherwise. Then since $V\geq U\geq U\cap H$, we have that $V{\uhr}_{\{2i,2i+1\}}>\Diag(G^{\{2i,2i+1\}})$. By \Cref{cor:diag to full}, $V{\uhr}_{\{2i,2i+1\}}=G^{\{2i,2i+1\}}$. But since $i$ is an element of $R$, we have that $\{i,i+1\}$ is an element of $E(H)\cap E(L)\subseteq E(L)$, so that $L{\uhr}_{\{2i,2i+1\}}=\Diag(G^{\{2i,2i+1\}})$. Now it is impossible to have $\bigcap_{h\in H}h^{-1}Vh\leq L$. We have our contradiction.

This shows that we obtain the graph $([k],E(U\cap H)\cap E(V))$ from the graph $([k],E(H)\cap E(L))$ by removing $|T|\leq m-1$ edges. For any connected component in $([k],E(H)\cap E(L))$ with $t$ vertices, it must be that there exists a connected component in $([k],E(U\cap H)\cap E(V))$ with at least $\frac{t}{m}$ vertices, since $([k],E(H)\cap E(L))$ is a subgraph of a path graph.
\end{proof}

We have now achieved our goal of lower bounding $\beta_{Q,\Omega,d}(G^{k-1},G^{k-1})$.

\begin{theorem}\label{thm:nonabelian simple lower bound}
$\beta_{Q,\Omega,d}(G^{k-1},G^{k-1})\geq n^{d(k^{1/d}-1)}$ and $\beta_{Q,\Omega}(G^{k-1},G^{k-1})\geq n^{\log_2 k}$.
\end{theorem}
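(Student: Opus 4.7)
The plan is to apply \Cref{lem:general lower bound} with the complexity measure $\mu$ from \Cref{def: mu nonabelian simple} and constant $c = n$. The preceding lemmas were designed exactly so that the four hypotheses of \Cref{lem:general lower bound} hold for this $\mu$: the intersection hypothesis \Cref{intersection property} is \Cref{lem: simple intersection property}, the shrinkage hypothesis \Cref{shrinkage property} is \Cref{lem: shrinkage by n}, the literal hypothesis \Cref{literal property} is \Cref{lem: literals no $G$-hard chains}, and \Cref{Q property} is immediate from the convention $\mu(H,Q) = 0$ built into \Cref{def: mu nonabelian simple}. So the only remaining tasks are to check that $(G^{k-1},G^{k-1})$ is a legal input to $\mu$ and to evaluate $\mu(G^{k-1},G^{k-1})$.

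For legality, I note that $G^{k-1} = \Stab_Q(\Word_{G,k})$ and $\Word_{G,k}$ is a Boolean function on $\Omega$, so $G^{k-1} \in \mathcal{B}_{Q,\Omega}$; consequently $(G^{k-1},G^{k-1})$ lies in the domain of both $\mu$ and $\beta_{Q,\Omega,d}$. For the evaluation, I unpack the shifted diagonal embedding \eqref{embedding G}: for every $i \in [k-1]$, the restriction $G^{k-1}{\uhr}_{\{2i,2i+1\}}$ is $\Diag(G^{\{2i,2i+1\}})$. Hence $E(G^{k-1}) = \{\{i,i+1\}: i \in [k-1]\}$ is the full edge set of the path graph on $[k]$. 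Intersecting this edge set with itself leaves the whole path, whose unique connected component has $k$ vertices, giving $\mu(G^{k-1},G^{k-1}) = k$.

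Plugging $\mu(G^{k-1},G^{k-1}) = k$ and $c = n$ into the two conclusions of \Cref{lem:general lower bound} immediately yields $\beta_{Q,\Omega,d}(G^{k-1},G^{k-1}) \geq n^{d(k^{1/d}-1)}$ and $\beta_{Q,\Omega}(G^{k-1},G^{k-1}) \geq n^{\log_2 k}$. There is essentially no obstacle here: the real difficulties — engineering $\mu$ so that all four lower-bound-witness properties hold simultaneously, and in particular controlling how the diagonal subgroups behave under intersection and conjugation — have been absorbed into the preceding structural and shrinkage lemmas. The one micro-check worth flagging is that the intersection hypothesis of \Cref{lem:general lower bound} is phrased in terms of the numerical invariant $\mu$, whereas \Cref{lem: simple intersection property} is phrased in terms of edge sets; the bridge is to apply that lemma to the edge set of the largest component of $([k],E(H)\cap E(K))$, and observe that whichever index $i$ it produces must have $\mu(H_i,L_i) \geq \mu(H,K)$ since the full component of $\mu(H,K)$ vertices reappears inside $([k],E(H_i)\cap E(L_i))$.
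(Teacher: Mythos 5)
Your proposal is correct and matches the paper's own proof: both compute $\mu(G^{k-1},G^{k-1})=k$ directly from the embedding (\ref{embedding G}) and invoke \Cref{lem:general lower bound} with $c=n$, relying on \Cref{lem: literals no $G$-hard chains}, \Cref{lem: simple intersection property}, and \Cref{lem: shrinkage by n} to supply the hypotheses. Your bridge from the edge-set phrasing of \Cref{lem: simple intersection property} to the numerical \Cref{intersection property} (taking $S$ to be a maximum component and noting it persists inside $([k],E(H_i)\cap E(L_i))$) is exactly the observation the paper leaves implicit, and your domain check that $G^{k-1}=\Stab_Q(\Word_{G,k})\in\mathcal{B}_{Q,\Omega}$ is likewise correct.
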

\begin{proof}
By definition of the embedding $G^{k-1}<Q$ in (\ref{embedding G}), we have $\mu(G^{k-1},G^{k-1})=k$, since $E(G^{k-1})=\{\{i,i+1\}:{i\in[k-1]}\}$. This section shows that $\mu$ satisfies the hypotheses of \Cref{lem:general lower bound} with $c=n$, so simply apply that result.
\end{proof}

\subsection{$\mathsf{TC}$ lower bounds for cyclic groups of prime order}\label{sec:prime order}
In this section we prove \Cref{thm: simple lower bound} in the case where the group $G$ is abelian using the framework set up in \Cref{sec:framework}. Note that if a finite simple group is abelian, then it must be a cyclic group of prime order, denoted $C_p$ for some prime $p$.

We frequently work with $C_p^k$ as the additive group of the vector space $\FF_p^k$. We fix the following notation. Given a vector $x=(x_1,\dots,x_k)\in \FF_p^k$, define $\text{Supp}(x)=\{i\in[k]:x_i\neq 0\}$, and define the Hamming weight $|x|=|\text{Supp}(x)|$. For $x_1,\dots,x_\ell\in \FF_p^k$, let $\langle x_1,\dots,x_\ell\rangle $ be the subspace generated by $x_1,\dots,x_\ell$.

We want to use \Cref{lem:framework} to prove a lower bound on the formula size of $C_p^{k-1}$-invariant $\mathsf{TC}_d$ formulas computing the function $\Word_{C_p,k}$. As we did in \Cref{sec:nonabelian simple}, we prove a lower bound on $\beta_{Q,\Omega,d}(C_p^{k-1},C_p^{k-1})$ for some choice of $Q$ using \Cref{lem:general lower bound} by constructing a lower-bound witness $\mu$ that satisfies the hypotheses of that theorem.

We choose $Q = C_p^{k}$, viewed as a supergroup $C_p^{k-1} < Q \le S_{kp^2}$, which acts on $\Omega=\BAR {C_p}^k\subseteq \{0,1\}^{kp^2}$ via
\begin{align*}
    (x_1,\dots,x_{k})(M_1,\dots,M_k)&=(\BAR{x_1}M_1,\dots,\BAR{x_{k}}M_k)
\end{align*}
for $(x_1,\dots,x_k)\in C_p^k$.

The embedding $C_p^{k-1}< C_p^{k}=Q$ is given by
\begin{align}\label{embedding C_p}
    (x_1,\dots,x_{k-1})\mapsto (x_1,x_2\dots,x_{k-1},\prod_{i\in[k-1]}x_i^{-1}).
\end{align}
Observe that the coordinate functions $\chi_1,\dots,\chi_m:\{0,1\}^{kp^2}\to\{0,1\}$ are given by the entries of the permutation matrices $\{M_1,\dots,M_k\}$.

Another important point to note is that $\mathcal{B}_{Q,\Omega}$ (and hence also $\mathcal{N}_{Q,\Omega}$) consists of all subgroups of $C_p^k$, since each $H\leq C_p^k$ is the $Q$-stabilizer of the Boolean function on $\Omega=\BAR{C_p}^k$ that evaluates to 1 on $(M_1,\dots,M_k)$ if and only if $(M_1,\dots,M_k)\in H$, where we view each $M_i$ as an element of $C_p$.

We now define the lower-bound witness $\mu$ and prove that it satisfies the hypotheses of \Cref{lem:general lower bound} with $c=p$. We regard subgroups as the additive groups of subspaces of $\FF_p^k$.
\begin{definition}\normalfont
For $H<\FF_p^k$ and any $K\geq H$, define
\begin{align*}
    \mu(H,K)=\max_{W>V:\dim(W)=\dim(V)+1}\min_{v\in V^\perp\setminus W^\perp}|v|.
\end{align*}
Define $\mu(\FF_p^k,\FF_p^k)=0$. The value $\mu(H,K)$ only depends on $H$. Thus, we will suppress notation and write $\mu(H)$ to denote $\mu(H,K)$ for any $K\leq\FF_p^k$.
\end{definition}

\subsubsection{The properties of $\mu$}
First note that \Cref{Q property} is satisfied by $\mu$ by definition. We check that $\mu$ satisfies \Cref{literal property}.
\begin{lemma}\label{lem: literal property finite fields}
Let $H=\Stab_{Q}(\chi_i)$ for some $i\in[kp^2]$. Then $\mu(H)\leq 1$.
\end{lemma}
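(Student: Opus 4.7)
The plan is to identify $H = \Stab_Q(\chi_i|_\Omega)$ explicitly as a coordinate hyperplane of $\FF_p^k$ and then read $\mu(H)$ off the definition.

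First, the coordinate $\chi_i$ is some matrix entry $M_{j,a,b}$ with $j \in [k]$ and $a,b \in [p]$, and its value depends only on $M_j$. Under the minimum-degree (regular) representation of $C_p$, which acts on $\FF_p$ by translation, the permutation matrix $M_y$ for $y \in C_p = \FF_p$ satisfies $(M_y)_{a,b} = 1$ iff $y \equiv a - b \pmod{p}$. An element $(x_1,\dots,x_k) \in Q = C_p^k$ acts on $M_j$ by translating $y$ to $y + x_j$, so it stabilizes $\chi_i|_\Omega$ iff the equivalence ``$x_j + y \equiv a - b$ iff $y \equiv a - b$'' holds for every $y \in \FF_p$, which forces $x_j = 0$ while leaving the remaining coordinates free. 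Therefore $H = \{x \in \FF_p^k : x_j = 0\}$ is the coordinate hyperplane orthogonal to $e_j$.

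Second, I evaluate $\mu(H)$ using the outer maximum over covering pairs $V < W$ with $V \supseteq H$ and $\dim W = \dim V + 1$ (this is the reading under which $\mu(H,K)$ depends only on $H$, as asserted just below the definition). Since $H$ has codimension $1$, the only such pair is $(V,W) = (H, \FF_p^k)$, for which $V^\perp \setminus W^\perp = H^\perp \setminus \{0\}$. The complement $H^\perp$ is the line $\FF_p \cdot e_j$, so every nonzero element is a scalar multiple of $e_j$ and hence has Hamming weight exactly $1$. This gives $\mu(H) = 1$, satisfying the claim.

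There is essentially no obstacle: the argument is a direct calculation once the permutation-matrix convention for the regular representation of $C_p$ is fixed. The only subtlety is confirming that the stabilizer is the full hyperplane $\{x_j = 0\}$ rather than a proper subgroup, but this is immediate because the condition ``$(M_{y+x_j})_{a,b} = (M_y)_{a,b}$ for all $y$'' is a single $\FF_p$-linear equation in $x_j$ whose only solution is $x_j = 0$.
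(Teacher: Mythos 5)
Your proposal is correct and follows essentially the same route as the paper. The paper's proof only observes that since $\chi_i|_\Omega$ depends on a single block $M_{i^*}$, the stabilizer $H$ contains the coordinate hyperplane $\FF_p^{[k]\setminus\{i^*\}}$, so $H^\perp \subseteq \langle e_{i^*}\rangle$ and every vector of $H^\perp$ has Hamming weight at most $1$; you go a step further and pin down $H$ exactly as $\{x : x_j = 0\}$ using the explicit circulant structure of the regular representation of $C_p$, from which $H^\perp = \FF_p\cdot e_j$ and $\mu(H) = 1$ exactly. Your computation is correct, but the extra precision isn't needed; the paper's containment argument already gives the stated $\mu(H)\leq 1$. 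One small note: the variable $V$ in the paper's displayed definition of $\mu(H,K)$ is apparently a typo for $H$ (this is consistent with how $\mu$ is used in \Cref{lem: intersection property finite fields} and \Cref{thm: cyclic prime lower bound}); you instead read it as a maximum over all covering pairs $V < W$ with $V \supseteq H$, which is a different quantity in general, though it happens to coincide with the intended one here because $H$ has codimension $1$ so the only admissible pair is $(H,\FF_p^k)$ under either reading.
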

\begin{proof}
Assume that $H$ is the $Q$-stabilizer of a function $f$ depending on a single $M_{i^*}$ for some $i^*\in[k]$. Since $f$ only depends on $i^*$, we have $H{\uhr}_{[k]\setminus \{i^*\}}=\FF_p^{[k]\setminus\{i^*\}}$. Therefore, any $x\in H^\perp$ has its support $\mathrm{Supp}(x)$ contained in $\{i^*\}$.
\end{proof}

The following lemmas prove that $\mu$ satisfies \Cref{shrinkage property} with $c=p$.
\begin{lemma}\label{lem: conjugate property finite fields}
Let $U,H\leq \FF_p^{k}$ be such that $\bigcap_{h\in H}h^{-1}Uh\leq H$. Then $\mu(U)\geq \mu(H\cap U)$.
\end{lemma}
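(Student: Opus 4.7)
The plan is to observe that in the abelian setting, the hypothesis collapses to a very simple containment and the conclusion follows immediately. Since we are working inside $\mathbb{F}_p^k$, which is abelian (written additively, though the lemma statement uses multiplicative notation for consistency with the general framework of Section~\ref{sec:framework}), conjugation is trivial: for every $h \in H$ and every subgroup $U \le \mathbb{F}_p^k$, we have $h^{-1} U h = U$. Consequently,
\[
  \bigcap_{h \in H} h^{-1} U h \;=\; U.
\]

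Therefore the hypothesis $\bigcap_{h \in H} h^{-1} U h \le H$ is equivalent to $U \le H$. Under this containment, $H \cap U = U$, and so
\[
  \mu(H \cap U) \;=\; \mu(U),
\]
which gives the desired inequality (in fact as equality) $\mu(U) \ge \mu(H \cap U)$.

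The only subtlety worth flagging is notational rather than mathematical: the framework of Definition~\ref{def:beta} and Lemma~\ref{lem:general lower bound} is phrased for arbitrary $Q \le S_m$ using multiplicative/conjugation notation, so one has to translate it to the additive setting $Q = \mathbb{F}_p^k$ to see that the conjugation condition trivializes. After that translation there is nothing to prove. The real work of verifying \Cref{shrinkage property} for this choice of $\mu$ will therefore lie entirely in the second inequality $\mu(H \cap U, V) \ge \mu(H,L)/(1 + \log_p[H : H \cap U])$, which is handled in the next lemma; the present lemma is essentially a bookkeeping step that eliminates the conjugation closure from the hypothesis in the abelian regime.
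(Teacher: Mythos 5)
Your proof is correct and matches the paper's argument essentially verbatim: since $\mathbb{F}_p^k$ is abelian, conjugation is trivial, so $\bigcap_{h\in H}h^{-1}Uh = U \le H$ and hence $U = H\cap U$, giving equality of the two $\mu$-values.
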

\begin{proof}
Since $\FF_p^k$ is abelian as an additive group, $U=\bigcap_{h\in H}h^{-1}Uh\leq H$. Therefore, $U=U\cap H$.
\end{proof}

\begin{lemma}[\cite{Rossman2018}, Lemma 3.5]\label{lem: Rossman shrinkage}
Let $H<W\leq \FF_p^k$ with $\dim(W)=\dim(H)+1$. Let $U\leq H$. Then there exists $T> U$ with $\dim(T)=\dim(U)+1$ and
\begin{align*}
    \min_{u\in U^\perp\setminus T^\perp}|u|\geq \frac1{\dim(H)-\dim(U)+1}\min_{h\in H^\perp\setminus W^\perp}|h|.
\end{align*}
Moreover, $H+T=W$.
\end{lemma}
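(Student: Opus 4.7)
The plan is to reformulate the problem as a search for a single vector $v \in W \setminus H$, then use an annihilator/duality calculation to produce it. Any $T$ satisfying $T > U$, $\dim T = \dim U + 1$, and $H+T = W$ must have the form $T = U + \langle v \rangle$ with $v \in W \setminus H$, and then $U^\perp \setminus T^\perp = \{u \in U^\perp : \langle u, v \rangle \neq 0\}$. Setting $B = \min_{h \in H^\perp \setminus W^\perp}|h|$ and $r = \dim H - \dim U$, the lower bound on $\min_{u \in U^\perp \setminus T^\perp}|u|$ is equivalent to requiring $v$ to be orthogonal to the ``danger set'' $L := \{u \in U^\perp : |u| < B/(r+1)\}$.

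Next, I would observe that every $u \in L \cap H^\perp$ automatically lies in $W^\perp$: any $u \in H^\perp \setminus W^\perp$ has $|u| \geq B > B/(r+1)$. Consequently, if $v \in W$, the constraints from $L \cap W^\perp$ are vacuous, and it suffices to find $v \in W \setminus H$ orthogonal to $L_2 := L \setminus W^\perp$. This amounts to showing $L_2^\perp \cap W \not\leq H$, which by dualization (using $(A \cap B)^\perp = A^\perp + B^\perp$, together with $H^\perp = W^\perp + \langle h^* \rangle$ for any minimum-weight $h^* \in H^\perp \setminus W^\perp$) reduces to the assertion $h^* \notin \mathrm{span}(L_2) + W^\perp$.

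The heart of the proof is this last nonmembership claim, which I would establish by contradiction. Suppose $h^* = \sum_{u \in S} c_u u + w^*$ with $S \subseteq L_2$, $c_u \in \FF_p$, $w^* \in W^\perp$, and $|S|$ chosen minimally among all such expressions. Minimality forces every $c_u$ to be nonzero and forces the images $\{u + W^\perp : u \in S\}$ to be linearly independent in the quotient $U^\perp / W^\perp$; since $\dim(U^\perp/W^\perp) = \dim W - \dim U = r + 1$, we obtain $|S| \leq r + 1$. Now $h^* - w^* = \sum c_u u \in H^\perp \setminus W^\perp$, so $|h^* - w^*| \geq B$. On the other hand, using $|c_u u| = |u|$ for $c_u \neq 0$ and subadditivity of Hamming weight, $|h^* - w^*| \leq \sum_{u \in S}|u| < (r+1)\cdot B/(r+1) = B$, the desired contradiction.

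The main obstacle I anticipate is identifying the right duality reformulation: the argument turns on the fact that $\dim(U^\perp/W^\perp) = r + 1$ precisely matches the denominator of the target bound, so that the weight inequality $\sum |u| < |S|\cdot B/(r+1) \leq B$ just barely closes. Once $v \in L_2^\perp \cap W \setminus H$ has been produced, verifying that $T := U + \langle v\rangle$ has the required properties---in particular $H + T = W$ (because $v \in W \setminus H$ and $U \leq H$) and the weight bound---is immediate from the setup.
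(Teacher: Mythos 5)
The paper does not reproduce a proof of this lemma; it is cited directly from [Rossman2018, Lemma 3.5] and used as a black box, so there is no internal proof to compare against. I can only assess correctness of your argument, and it checks out: it is a complete, self-contained proof by duality. The parametrization $T=U+\langle v\rangle$ with $v\in W\setminus H$ is exactly right; the reduction of the weight condition to ``$v\perp L$'' is correct; the perp identities $(A\cap B)^\perp=A^\perp+B^\perp$, $H^\perp=W^\perp+\langle h^*\rangle$, and $\dim(U^\perp/W^\perp)=r+1$ all hold; and the final contradiction via a minimal-support representation $h^*=\sum_{u\in S}c_u u+w^*$ (so $|S|\le r+1$, all $c_u\neq 0$, and $B\le|h^*-w^*|\le\sum_{u\in S}|u|<|S|\cdot B/(r+1)\le B$) is airtight, including the edge case $L_2=\emptyset$, which forces $S=\emptyset$ and gives $h^*\in W^\perp$ immediately.

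One small note: the observation ``every $u\in L\cap H^\perp$ lies in $W^\perp$'' is true but not actually used. Your reduction to $L_2=L\setminus W^\perp$ needs only the trivial fact that $v\in W$ is automatically orthogonal to $W^\perp$; the ``Consequently'' framing suggests a dependence that is not there. This does not affect the validity of the proof.
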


Finally, we prove that $\mu$ satisfies \Cref{intersection property}.
\begin{lemma}
\label{lem: intersection property finite fields}
Let $H\leq \FF_p^{k}$ and assume that $H=H_1\cap\dots\cap H_r$ where each $H_i\leq \FF_p^k$. Then there is some $i\in[r]$ such that $\mu(H_i)\geq\mu(H)$.
\end{lemma}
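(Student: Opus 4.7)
My plan is to work entirely on the dual side, where $\mu(H)$ can be read as a property of $H^\perp$ and the choice of a codimension-$1$ subspace inside it. I will interpret the optimal $W$ in the definition of $\mu(H)$ as picking a hyperplane $W^\perp \subsetneq H^\perp$, and then pull this hyperplane back along a chosen $H_i^\perp$ to construct a suitable $W_i$ extending $H_i$.

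Concretely, let $W \supsetneq H$ with $\dim W = \dim H + 1$ achieve the maximum in the definition of $\mu(H)$, so that
\[
  \mu(H) = \min_{v \in H^\perp \setminus W^\perp} |v|.
\]
Using the standard identity $\bigl(\bigcap_{i \in [r]} H_i\bigr)^\perp = \sum_{i \in [r]} H_i^\perp$, we have $H^\perp = H_1^\perp + \dots + H_r^\perp$. Since $W^\perp$ is a proper subspace of $H^\perp$ (of codimension $1$), it cannot contain every summand $H_i^\perp$; otherwise we would have $H^\perp = \sum_i H_i^\perp \subseteq W^\perp$, a contradiction. Fix any index $i$ for which $H_i^\perp \not\subseteq W^\perp$.

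Now define $W_i$ by the dual condition $W_i^\perp = H_i^\perp \cap W^\perp$. Since $H_i^\perp \not\subseteq W^\perp$, the intersection $H_i^\perp \cap W^\perp$ is a hyperplane inside $H_i^\perp$, which gives $\dim W_i = \dim H_i + 1$ and $W_i \supsetneq H_i$. For any $v \in H_i^\perp \setminus W_i^\perp$ we have $v \in H_i^\perp \subseteq H^\perp$ and $v \notin W^\perp$ (otherwise $v$ would lie in $H_i^\perp \cap W^\perp = W_i^\perp$). Thus $v \in H^\perp \setminus W^\perp$, whence $|v| \ge \mu(H)$. Taking the minimum over such $v$ and then the maximum over extensions of $H_i$, we obtain
\[
  \mu(H_i) \;\ge\; \min_{v \in H_i^\perp \setminus W_i^\perp} |v| \;\ge\; \mu(H),
\]
as required. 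I do not anticipate any real obstacle: the only subtle point is matching the ``extension by one dimension'' in $\mu(H)$ with the analogous extension $H_i \subsetneq W_i$, and this is handled cleanly by passing to orthogonal complements and using that a proper subspace of a sum $\sum H_i^\perp$ cannot contain every summand.
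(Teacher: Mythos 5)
Your proof is correct, and it takes a genuinely different and in fact cleaner route than the paper's. The paper first reduces to $r=2$ by induction, picks a minimizing vector $y\in H^\perp\setminus W^\perp$, decomposes it as $y=y_1+y_2$ with $y_i\in H_i^\perp$, and then does a case analysis on whether $y_1,y_2$ are linearly independent, tracking Hamming weights and dimensions of $W+H_2$ to build a suitable codimension-$1$ extension of $H_2$. You instead work entirely on the dual side for arbitrary $r$: since $W^\perp$ has codimension $1$ in $H^\perp=\sum_i H_i^\perp$, some summand $H_i^\perp$ escapes $W^\perp$, and setting $W_i^\perp := H_i^\perp\cap W^\perp$ immediately gives a hyperplane of $H_i^\perp$ (hence a one-step extension $W_i \supsetneq H_i$) for which every $v\in H_i^\perp\setminus W_i^\perp$ lies in $H^\perp\setminus W^\perp$ and so has $|v|\ge\mu(H)$. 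This avoids both the $r=2$ reduction and the linear-independence case split, and gets the dimension count for free from $H_i^\perp+W^\perp=H^\perp$. The one thing you leave implicit, harmlessly, is the degenerate case $H=\FF_p^k$ (where $\mu(H)=0$ by convention and the claim is vacuous); for $H<\FF_p^k$ your argument is complete.
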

\begin{proof}
We can assume that $r=2$, since all other cases follow by induction. Let $y\in H^\perp$ be the vector with $|y|=\mu(H)$ and $y=\arg\min_{x\in H^\perp\setminus W^\perp}|x|$, where $\dim(W)=\dim(H)+1$. Let $H=H_1\cap H_2$. This means that $H^\perp={H_1}^\perp+{H_2}^\perp$. 

There exist $y_1\in H_1^\perp$ and $y_2\in {H_2}^\perp$ such that $y=y_1+y_2$. First assume that $y_1$ and $y_2$ are linearly independent. There exists some $a_1, a_2\in \FF_p$ such that $a_1y_1+a_2y_2\in W^\perp$ and one of the $a_i$ is nonzero. Assume otherwise. Then, for any such pair $y_1,y_2$ we would have $W^\perp\cap \langle y_1,y_2\rangle=\{0\}$, and 
\begin{align*}
    \dim(H^{\perp})&=\dim(W^\perp)+\dim(\langle y_1,y_2\rangle)-\dim(W^\perp\cap \langle y_1,y_2\rangle)\\
    &=\dim(W^\perp)+2.
\end{align*}
This contradicts that $\dim(W)=\dim(H)+1$, so assume that $a_1y_1+a_2y_2$ is an element of $W^\perp$. Assume that $a_1\neq 0$. The case where $a_2\neq 0$ follows similarly. Then $y_2$ cannot be an element of $W^\perp$, since otherwise $y=y_1+y_2=\frac1{a_1}(a_1y_1+a_2y_2)-(\frac{a_2}{a_1}+1)y_2$, implying that $y\in W^\perp$. As a result, $\mu(H)\geq |y_2|$. Since $|y|$ is minimal in ${H}^\perp\setminus W^\perp$, we have $|y_2|\geq |y|$. We also have $|y_2|\leq |y|$ since otherwise $\mu(H)\geq |y_2|>|y|$. 

Now consider the space $W+{H_2}$. We have that $H\leq W\cap {H_2}$ so $\dim(W\cap {H_2})\geq \dim(H)$. Thus,
\begin{align*}
    \dim(W+{H_2})&=\dim(W)+\dim({H_2})-\dim(W\cap {H_2})\\
    &\leq \dim(H)+1+\dim({H_2})-\dim(H)\\
    &=\dim({H_2})+1.
\end{align*}
If $\dim(W+{H_2})= \dim({H_2})$, then we would have $W= {H_2}$. Then $y_2$ would be an element of $W^\perp$, a contradiction. Therefore, $\dim(W+{H_2})=\dim({H_2})+1$. Finally, we have that $y_2$ is an element of ${H_2}^\perp\setminus (W+{H_2})^\perp$, and we claim that $|y_2|$ is minimal among all $v\in {H_2}^\perp\setminus(W+{H_2})^\perp$. 

Assume otherwise that there exists another $v\in {H_2}^\perp\setminus(W+{H_2})^\perp$ with $|v|<|y_2|$. Since ${H_2}^\perp\leq{H}^\perp$, $v$ is an element of $H^\perp$. Then it must be that $v$ is an element of $W^\perp$, since otherwise we would contradict minimality of $|y|$ in $H^\perp \setminus W^\perp$. But now $v$ is an element of $W^\perp\cap {H_2}^\perp=v\in (W+{H_2})^\perp$, a contradiction. Therefore, $y_2=\arg\min_{x\in H_2^\perp\setminus (W+H_2)^\perp}|x|$. We have proved in this case that $\mu(H_2)\geq |y_2|\geq |y|=\mu(H)$.

Finally, if $y_1$ and $y_2$ are not linearly independent, then at least one is a nonzero scalar multiple of $y$. Assume it is $y_2$. By the same argument, $y_2=\arg\min_{x\in H_2^\perp\setminus (W+H_2)^\perp}|x|$, and we have $\mu(H_2)\geq |y_2|=|y|= \mu(H)$. 
\end{proof}

Having proved that $\mu$ satisfies all the hypotheses of \Cref{lem:general lower bound}, we have the following result.
\begin{theorem}\label{thm: cyclic prime lower bound}
$\beta_{Q,\Omega,d}(C_p^{k-1},C_p^{k-1})\geq p^{d(k^{1/d}-1)}$ and $\beta_{Q,\Omega}(C_p^{k-1},C_p^{k-1})\geq p^{\log_2 k}$.
\end{theorem}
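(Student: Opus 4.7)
The plan is to apply \Cref{lem:general lower bound} with $c = p$ to the complexity measure $\mu$ defined just above the theorem. All four properties required by that lemma have essentially been set up by the preceding lemmas in this subsection; the remaining work is to assemble them and compute $\mu(C_p^{k-1}, C_p^{k-1})$.

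First I would verify the four hypotheses of \Cref{lem:general lower bound}. Property \ref{Q property} holds by fiat: $\mu(\FF_p^k, \FF_p^k) = 0$ by definition. Property \ref{literal property} is \Cref{lem: literal property finite fields}, and property \ref{intersection property} is \Cref{lem: intersection property finite fields}. For the shrinkage property \ref{shrinkage property}, the first inequality $\mu(U,V)\geq \mu(H\cap U,V)$ is \Cref{lem: conjugate property finite fields}, whose proof exploits that $Q=\FF_p^k$ is abelian so that $\bigcap_{h\in H} h^{-1}Uh = U$ and hence the hypothesis $\bigcap_{h\in H} h^{-1}Uh \leq H$ forces $U = H \cap U$. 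For the second inequality, I would instantiate \Cref{lem: Rossman shrinkage} with $U$ replaced by $H\cap U$ and with the subspace $W > H$ of dimension $\dim H + 1$ that witnesses $\mu(H)$: this produces a $T > H\cap U$ with $\dim T = \dim(H \cap U) + 1$ satisfying
\[
\min_{u \in (H \cap U)^\perp \setminus T^\perp} |u| \ \geq \ \frac{\mu(H)}{\dim H - \dim(H \cap U) + 1} \ = \ \frac{\mu(H)}{1 + \log_p[H : H \cap U]},
\]
which gives the claimed lower bound on $\mu(H \cap U, V)$ (recall that $\mu$ depends only on its first argument).

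Next I would evaluate $\mu(C_p^{k-1}, C_p^{k-1})$. Under the embedding in (\ref{embedding C_p}), $C_p^{k-1}$ is identified with the hyperplane $\{x \in \FF_p^k : x_1 + \cdots + x_k = 0\}$, whose orthogonal complement $(C_p^{k-1})^\perp$ is the line spanned by the all-ones vector $(1,1,\dots,1)$. The only subspace $W > C_p^{k-1}$ with $\dim W = \dim C_p^{k-1} + 1$ is $W = \FF_p^k$, so $W^\perp = \{0\}$, and every nonzero element of $(C_p^{k-1})^\perp$ has Hamming weight exactly $k$. Thus $\mu(C_p^{k-1}, C_p^{k-1}) = k$.

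Plugging $\mu(C_p^{k-1}, C_p^{k-1}) = k$ and $c = p$ into \Cref{lem:general lower bound} yields both claimed bounds $\beta_{Q,\Omega,d}(C_p^{k-1},C_p^{k-1})\geq p^{d(k^{1/d}-1)}$ and $\beta_{Q,\Omega}(C_p^{k-1},C_p^{k-1})\geq p^{\log_2 k}$. The main conceptual content lives in the coding-theoretic shrinkage estimate \Cref{lem: Rossman shrinkage} imported from \cite{Rossman2018} and the intersection lemma \Cref{lem: intersection property finite fields}, so there is no real obstacle remaining; this theorem is essentially the assembly step for the abelian case of the framework.
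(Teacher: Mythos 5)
Your proposal is correct and follows essentially the same route the paper takes: verify the four hypotheses of \Cref{lem:general lower bound} with $c=p$ via \Cref{lem: literal property finite fields}, \Cref{lem: intersection property finite fields}, \Cref{lem: conjugate property finite fields}, and \Cref{lem: Rossman shrinkage}, then compute $\mu(C_p^{k-1},C_p^{k-1})=k$ from the fact that $(C_p^{k-1})^\perp$ is spanned by the all-ones vector. Your write-up is actually slightly more explicit than the paper's one-sentence proof about how \Cref{lem: conjugate property finite fields} and \Cref{lem: Rossman shrinkage} combine (applied to $H\cap U$ with the maximizing $W$) to give the two inequalities of the shrinkage property, which is a helpful clarification but not a different argument.
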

\begin{proof}
By the embedding (\ref{embedding C_p}), $C_p^{k-1}$ is regarded as the additive group of the subspace $V=\{(x_1,\dots,x_k)\in\FF_p^k:\sum_{i\in[k]}x_i=0\}$. Thus, $\mu(C_p^{k-1},C_p^{k-1})=k$ as witnessed by $\min_{v\in V^\perp\setminus (\FF_p^k)^\perp}|v|=k$. This section shows that $\mu$ satisfies the hypotheses of \Cref{lem:general lower bound} with $c=p$, so we have our result.
\end{proof}

\subsection{Proof of \Cref{thm: simple lower bound}}
Having covered both nonabelian and abelian simple groups, we can now prove our main theorem, which we restate here.
\simpleLB*
\begin{proof}
In the first case $G$ is nonabelian. In this case \Cref{thm:nonabelian simple lower bound} shows that for the choice $Q=\{(g_1,\dots,g_{2k})\in G^{2k}:g-1=g_{2k}=1_G\}$ with the embedding $G^{k-1}\to Q$ given by (\ref{embedding G}), we have $\beta_{Q,\Omega,d}(G^{k-1},G^{k-1})\geq n^{d(k^{1/d}-1)}$ and $\beta_{Q,\Omega}(G^{k-1},G^{k-1})\geq n^{\log_2 k}$.

Any $\Phi\in\mathsf{TC}_d$ such that $\sem{\Phi}=\Word_{G,k}$ must have $\Stab_Q(\sem{\Phi})=\Stab_Q(\Word_{G,k})=G^{k-1}$. If $\Stab_Q(\Phi)\geq G^{k-1}$, then we must have $\Stab_Q(\Phi)= G^{k-1}$, since $\Stab_Q(\Phi)\leq \Stab_Q(\sem{\Phi})$. Therefore, $\mathrm{size}(\Phi)\geq \beta_{Q,\Omega,d}(G^{k-1},G^{k-1})$ by \Cref{lem:framework}. Similarly argue for unbounded-depth formulas, and we have
\begin{align*}
    \mathcal L^{G^{k-1}}_{\mathsf{TC}_d}(\Word_{G,k}) &\ge \beta_{Q,\Omega,d}(G^{k-1},G^{k-1})\geq n^{d(k^{1/d}-1)},\\
    \mathcal L^{G^{k-1}}_{\mathsf{TC}}(\Word_{G,k}) &\ge \beta_{Q,\Omega}(G^{k-1},G^{k-1})\geq n^{\log_2(k)}.
\end{align*}
Otherwise, $G$ is abelian, and is therefore $C_p$ for some prime $p$. Note that then $C_p\to S_p$ is the minimum degree faithful permutation representation. For the choice $Q=C_p^k$ and embedding $C_p^{k-1}\to C_p^k$ given by (\ref{embedding C_p}), $\beta_{Q,\Omega,d}(C_p^{k-1},C_p^{k-1})\geq n^{d(k^{1/d}-1)}=p^{d(k^{1/d}-1)}$ and $\beta_{Q,\Omega}(C_p^{k-1},C_p^{k-1})\geq n^{\log_2 k}=p^{\log_2 (k)}$ by \Cref{thm: cyclic prime lower bound}. 

Any $\Phi\in\mathsf{TC}_d$ such that $\sem{\Phi}=\Word_{C_p,k}$ must have $\Stab_Q(\sem{\Phi})=\Stab_Q(\Word_{C_p,k})=C_p^{k-1}$. If $\Stab_Q(\Phi)\geq C_p^{k-1}$, then we must have $\Stab_Q(\Phi)= C_p^{k-1}$, since $\Stab_Q(\Phi)\leq \Stab_Q(\sem{\Phi})$. Therefore, $\mathrm{size}(\Phi)\geq \beta_{Q,\Omega,d}(C_p^{k-1},C_p^{k-1})$ by \Cref{lem:framework}. Similarly argue for unbounded-depth formulas, and we have
\begin{align*}
    \mathcal L^{C_p^{k-1}}_{\mathsf{TC}_d}(\Word_{C_p,k}) &\geq \beta_{Q,\Omega,d}(C_p^{k-1},C_p^{k-1})\geq p^{d(k^{1/d}-1)},\\
    \mathcal L^{C_p^{k-1}}_{\mathsf{TC}}(\Word_{C_p,k}) &\geq \beta_{Q,\Omega}(C_p^{k-1},C_p^{k-1})\geq p^{\log_2(k)}.
\end{align*}
This completes the proof.
\end{proof}

\section{Proof of \Cref{thm: cyclic prime power LB}}\label{sec:prime power}
In this section we prove the lower bound on the invariant $\mathsf{AC}$ formula size of the word problem on cyclic groups of prime power order stated in \Cref{thm: cyclic prime power LB}. Let $q=p^t$ be some prime power so that $C_q$ is the cyclic group of order $q$. Our technique of using \Cref{lem:general lower bound} for proving our lower bounds for invariant $\mathsf{TC}$ formula size does not carry over to subgroups of $C_q^k$, even though we believe that some variant of them does. Therefore, we revert to the technique used in \cite{Rossman2018}, which relies on the fact that $\mathsf{AC}$ formulas only have gates computing the $\AND$, $\OR$, and $\NOT$ functions. 

We would, however, like to reuse the idea of embedding $C_q^{k-1}$ in a larger group. To do this, we let $C_q^k$ act on inputs $(M_1,\dots,M_k)\in\{0,1\}^{kq^2}$, viewed as circulant $q$-by-$q$ permutation matrices via
\begin{align}\label{embedding C_q}
    (x_1,\dots,x_k)(M_1,\dots,M_k)&=(\BAR{x_1}M_1,\dots,\BAR{x_k}M_k).
\end{align}
Denote this set $\Omega$. The coordinate functions $\chi_i$ then give a single entry in a single $M_i$. 

The shifted diagonal action of $C_q^{k-1}$ on inputs $(M_1,\dots,M_k)\in\Omega$ can now be described by embedding $C_q^{k-1}\to C_q^k$ via
\begin{align*}
    (x_1,\dots,x_{k-1})\to (x_1,\dots,x_{k-1},\prod_{i\in[k-1]}x_i^{-1}).
\end{align*}
Note that the image of this embedding is the subgroup $\{(x_1,\dots,x_k)\in C_q^k: \prod_{i\in[k]}x_i=1_{C_q}\}$.

\subsection{The lower-bound witness}
\begin{definition}\normalfont
Let $C_p\leq C_q$ be the unique cyclic subgroup of order $p$. Let $H\leq C_q^k$. Then $H\cap C_p^k$ can be viewed as a vector space $V_H\leq \FF_p^k$. 
\end{definition}
We will regard $V_H$ as a subgroup of $C_p^k\leq C_q^k$ and also as a subspace of $\FF_p^k$.

\begin{definition}\normalfont
Let $V< W\leq \FF_p^k$ be such that $\dim(W)=\dim(V)+1$. Then define
\begin{align*}
    \mu(V,W)&=\min_{v\in V^\perp\setminus W^\perp}|v|.
\end{align*}
\end{definition}

This $\mu$ plays a role similar to that played by the lower-bound witnesses we defined in our proof of \Cref{thm: simple lower bound} for both the nonabelian and abelian cases.

\begin{lemma}\label{lem: only one perp achieves hard chain}
Let $V< W\leq \FF_p^k$ with $\dim(W)=\dim(V)+1$. If $x\in V^\perp\setminus W^\perp$ is such that $|x|=\min_{v\in V^\perp\setminus W^\perp}|v|$, then there does not exist $y\in V^\perp$, $y\not\in  \langle x\rangle$ such that $\emph{Supp}(y)\subseteq \emph{Supp}(x)$.
\end{lemma}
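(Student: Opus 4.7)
The plan is to argue by contradiction. Suppose such a $y$ exists, and split into two cases based on whether $y \in W^\perp$ or $y \in V^\perp \setminus W^\perp$. The main tool is that $V < W$ with $\dim(W) = \dim(V)+1$ forces $\dim(V^\perp/W^\perp) = 1$, so $V^\perp \setminus W^\perp$ is a single nonzero coset of $W^\perp$ in $V^\perp$; this lets me reduce the second case to the first.

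\emph{Case 1: $y \in W^\perp$.} Since $y \neq 0$ (as $0 \in \langle x \rangle$), pick any $i \in \mathrm{Supp}(y)$ and set $a = x_i/y_i \in \FF_p$; note $i \in \mathrm{Supp}(x)$ since $\mathrm{Supp}(y) \subseteq \mathrm{Supp}(x)$, so $a$ is well-defined and the $i$th coordinate of $x - ay$ vanishes. Then $x - ay \in V^\perp$ (both $x$ and $y$ lie in $V^\perp$), and $x - ay \notin W^\perp$ because $ay \in W^\perp$ but $x \notin W^\perp$. Finally, $\mathrm{Supp}(x - ay) \subseteq \mathrm{Supp}(x) \cup \mathrm{Supp}(y) = \mathrm{Supp}(x)$, and this containment is strict since the $i$th coordinate was killed. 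This contradicts the minimality of $|x|$ on $V^\perp \setminus W^\perp$.

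\emph{Case 2: $y \notin W^\perp$.} Since $V^\perp/W^\perp$ is one-dimensional and both $x + W^\perp$ and $y + W^\perp$ are nonzero, there exists $b \in \FF_p$ with $y - bx \in W^\perp$. If $y - bx = 0$ then $y = bx \in \langle x \rangle$, contradicting the hypothesis on $y$. So $y' := y - bx$ is a nonzero element of $W^\perp \cap V^\perp = W^\perp$ with $\mathrm{Supp}(y') \subseteq \mathrm{Supp}(y) \cup \mathrm{Supp}(x) = \mathrm{Supp}(x)$, and I can now apply the argument of Case 1 to $y'$ in place of $y$ to derive the same contradiction.

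Both cases refute the existence of $y$, proving the lemma. The only subtlety is bookkeeping around which set each element lives in after subtraction; I expect this to be straightforward once the quotient $V^\perp/W^\perp$ is made explicit, and I do not anticipate any serious obstacle beyond being careful that the reduction modulo $\langle x \rangle$ in Case 2 stays inside $\mathrm{Supp}(x)$.
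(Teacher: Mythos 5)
Your proof is correct, and it takes a genuinely cleaner route than the paper's. Both arguments exploit that $\dim(V^\perp)=\dim(W^\perp)+1$, but the paper starts from a nonzero solution $(a,b)$ to $ax+by\in W^\perp$, normalizes to $a=1$, and then splits on whether $\mathrm{Supp}(y)\subsetneq\mathrm{Supp}(x)$ or $\mathrm{Supp}(y)=\mathrm{Supp}(x)$; the latter case requires separately excluding $p=2$ and $|\mathrm{Supp}(x)|=1$ and then counting values $b'\in\FF_p$ with $|x+b'y|<|x|$. You instead case directly on whether $y\in W^\perp$: in Case~1 you kill a single coordinate of $x$ with the right multiple of $y$ to get a strictly shorter vector in $V^\perp\setminus W^\perp$, and in Case~2 you subtract the unique multiple of $x$ that pushes $y$ into $W^\perp$ and fall back to Case~1. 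Your version avoids all of the paper's auxiliary hypotheses on $p$ and $|\mathrm{Supp}(x)|$, and — more substantively — it makes the $y\in W^\perp$ case explicit, which the paper glosses over: the normalization to $a\neq 0$ is unavailable precisely when $y\in W^\perp$ (since then $x\notin W^\perp+\langle y\rangle=W^\perp$ forces $a=0$ in every solution), so that case needs exactly the kind of coordinate-killing argument your Case~1 supplies. One small wording nit: the reason $a=x_i/y_i$ is well-defined is that $y_i\neq 0$ (as $i\in\mathrm{Supp}(y)$); the inclusion $\mathrm{Supp}(y)\subseteq\mathrm{Supp}(x)$ is instead what guarantees $i\in\mathrm{Supp}(x)$, so that zeroing coordinate $i$ genuinely shrinks the support. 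The mathematics is fine.
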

\begin{proof}
Let $y$ that is not an element of $\langle x \rangle$ be such that $\text{Supp}(y)\subseteq \text{Supp}(x)$ and $y$ is an element of $V^\perp$. First we argue that there are $a,b\in \FF_p$ such that $ax+by$ is in $W^\perp$ and at least one of $a,b$ is nonzero. Assume otherwise. Then, we would have $U^\perp\cap \langle x,y\rangle=\{0\}$, and
\begin{align*}
    \dim(V^\perp)&=\dim(W^\perp)+\dim(\langle x,y\rangle)-\dim(U^\perp\cap \langle x,y\rangle)\\
    &=\dim(W^\perp)+2.
\end{align*}
This contradicts $\dim(U)=\dim(V)+1$, so for some $a,b\in \FF_p$, we have that $ax+by$ is an element of $U^\perp$. Assume that $a=1$, so that $x+by$ is in $U^\perp$.

If $y$ is an element of $W^\perp$ then $x=\frac1a(ax+by-by)\in W^\perp$, a contradiction. Therefore we must have that $y$ is not in $W^\perp$. If $\text{Supp}(y)\subsetneq \text{Supp}(x)$ then we contradict minimality of $|x|$ in $V^\perp\setminus W^\perp$, so assume that $\text{Supp}(y)=\text{Supp}(x)$. We can also assume that $|\text{Supp}(x)|\geq 2$, since otherwise $y$ is a scalar multiple of $x$. Finally, we also assume that $p\neq 2$, since in $\FF_2^k$, the set $\text{Supp}(x)$ completely determines $x$.

Since $y$ is not in $W^\perp$, we must have that $x+b'y$ in $W^\perp$ implies that $b'=b$. However, there is at least one other $b'\in \FF_p$ with $b'\neq b$ such that $|x+b'y|< |x|$, since $p\neq 2$ and $y$ and $x$ are not scalar multiples of each other but have the same support. This contradicts minimality of $|x|$ in $V^\perp\setminus W^\perp$, since $x+b'y$ is not in $W^\perp$.
\end{proof}

\begin{corollary}\label{cor: projection is codimension 1}
Let $V<W\leq \FF_p^k$ with $\dim(W)=\dim(V)+1$. Let $v=\arg\min_{v\in V^\perp\setminus W^\perp}|v|$. Then $\pi_{\emph{Supp}(v)}(V)=\langle \pi_{\emph{Supp}(v)}(v)\rangle ^\perp$.
\end{corollary}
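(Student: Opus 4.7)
The plan is to apply Lemma~\ref{lem: only one perp achieves hard chain} after translating the statement via standard duality between a subspace and its annihilator under projection. The target equality is between two subspaces of $\FF_p^{\text{Supp}(v)}$, so it suffices to show equality of their orthogonal complements in $\FF_p^{\text{Supp}(v)}$.

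First I would invoke the elementary duality fact that for any subspace $V \le \FF_p^k$ and any $S \subseteq [k]$, the annihilator of $\pi_S(V)$ inside $\FF_p^S$ is given by
\[
  (\pi_S(V))^\perp = \{y \in V^\perp : \text{Supp}(y) \subseteq S\},
\]
where we identify a vector $w \in \FF_p^S$ with its zero-extension $\tilde w \in \FF_p^k$. This is immediate from $\langle \pi_S(u), w\rangle = \langle u, \tilde w\rangle$ for all $u \in V$. Specializing to $S = \text{Supp}(v)$ then gives $(\pi_{\text{Supp}(v)}(V))^\perp = \{y \in V^\perp : \text{Supp}(y) \subseteq \text{Supp}(v)\}$.

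Next I would apply Lemma~\ref{lem: only one perp achieves hard chain} with $x = v$: since $v$ minimizes weight in $V^\perp \setminus W^\perp$, the lemma says that every $y \in V^\perp$ with $\text{Supp}(y) \subseteq \text{Supp}(v)$ lies in $\langle v\rangle$. Combined with the previous identity, this yields $(\pi_{\text{Supp}(v)}(V))^\perp = \langle \pi_{\text{Supp}(v)}(v)\rangle$, and taking $\perp$ once more inside $\FF_p^{\text{Supp}(v)}$ delivers the claim. There is no real obstacle here beyond stating the projection/annihilator duality correctly; the only small check is that $\pi_{\text{Supp}(v)}(v)$ is nonzero (so that $\langle \pi_{\text{Supp}(v)}(v)\rangle^\perp$ indeed has codimension one in $\FF_p^{\text{Supp}(v)}$), which is immediate from the definition of support.
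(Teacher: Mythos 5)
Your proof is correct and takes essentially the same route the paper intends: the corollary is stated without proof as a direct consequence of Lemma~\ref{lem: only one perp achieves hard chain}, and your argument simply makes the standard projection/annihilator duality explicit before invoking that lemma. This is exactly the intended reasoning, filled in cleanly.
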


For notational convenience, regard $C_q$ as the additive group of the ring $\ZZ/q\ZZ$, the integers modulo $q$. For $a\in\ZZ/q\ZZ$, let $(a)$ be the ideal generated by $a$. Then $C_p\leq C_q$ is the additive subgroup given by the ideal $(p^{t-1})$.
\begin{lemma}\label{lem: literal property AC}
Let $\chi_i:\Omega\to \{0,1\}$ be a coordinate function. If $f$ is invariant under $V\leq C_p^k$ and not constant on $W>V$ with $\dim(W)=\dim(V)+1$, then $\mu(V,W)\leq 1$, viewing $\Omega$ as the group $(\ZZ/q\ZZ)^k$.
\end{lemma}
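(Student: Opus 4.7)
The plan is to decode the coordinate function $\chi_i$ explicitly and extract a weight-one witness for $\mu(V,W)\le 1$ directly from the hypotheses. Since $\chi_i$ is a coordinate function on $\Omega\subseteq\{0,1\}^{kq^2}$, it equals the single matrix entry $M_{\ell,a,b}$ for some $\ell\in[k]$ and some $a,b\in\{0,\dots,q-1\}$. After identifying $\Omega$ with $(\ZZ/q\ZZ)^k$ as indicated in the lemma statement (each tuple $(m_1,\dots,m_k)$ representing the tuple of circulant permutation matrices of $m_1,\dots,m_k\in C_q$), the function $\chi_i$ becomes $\chi_i(m_1,\dots,m_k)=[m_\ell=b-a]$, which depends only on the $\ell$-th coordinate.

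Let $e_\ell\in\FF_p^k$ denote the $\ell$-th standard basis vector. From the additive $C_q^k$-action on $\Omega$, for which $(x\cdot m)_\ell=x_\ell+m_\ell$, the invariance of $\chi_i$ under any $v\in V\le C_p^k\le C_q^k$ forces $v_\ell=0$ in $C_q$, and hence in $\FF_p$ since $V$ is viewed inside $\FF_p^k$. Therefore $V\subseteq e_\ell^\perp$, equivalently $e_\ell\in V^\perp$. For the second hypothesis, non-constancy of $\chi_i$ on $W$ produces $w_1,w_2\in W$ with $\chi_i(w_1)\ne\chi_i(w_2)$; because $\chi_i$ depends only on the $\ell$-th coordinate, this forces $(w_1)_\ell\ne(w_2)_\ell$, so $w_1-w_2\in W$ has nonzero $\ell$-th coordinate in $\FF_p$. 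Consequently $e_\ell\cdot(w_1-w_2)\ne 0$, and hence $e_\ell\notin W^\perp$.

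Combining the two observations, $e_\ell\in V^\perp\setminus W^\perp$ with $|e_\ell|=1$, so by the definition of $\mu$ we conclude $\mu(V,W)\le 1$. I do not expect a serious obstacle beyond the routine unpacking of definitions; the only step requiring care is the translation between the combinatorial description of $\chi_i$ as a matrix entry and its additive-group reformulation as the indicator $[m_\ell=b-a]$, together with the verification that nonvanishing of a coordinate in $C_q$ corresponds to nonvanishing in $\FF_p$ when the element lies in the subgroup $C_p^k$.
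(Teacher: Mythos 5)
Your proof is correct, and it takes a genuinely different (and cleaner) route than the paper's. The paper argues by contradiction: it assumes $\mu(V,W)\ge 2$, takes the minimum-weight vector $v\in V^\perp\setminus W^\perp$, assumes $\{1,2\}\subseteq\mathrm{Supp}(v)$, and invokes \Cref{cor: projection is codimension 1} twice to produce elements of $W$ and $V$ whose relevant coordinates can be adjusted independently, deriving a contradiction with $\chi_i$ depending on only one matrix $M_\ell$. Your argument instead produces the weight-one witness directly: unpacking $\chi_i$ as a single circulant-matrix entry shows it depends only on $m_\ell$, whence $V$-invariance gives $e_\ell\in V^\perp$ and non-constancy on $W$ gives $e_\ell\notin W^\perp$, so $\mu(V,W)\le|e_\ell|=1$. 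This bypasses \Cref{cor: projection is codimension 1} entirely and avoids the proof-by-contradiction scaffolding; the only care required is exactly the translation you flag, namely that a coordinate vanishing in $C_q$ for an element of the embedded copy $p^{t-1}\FF_p^k$ is equivalent to its vanishing in $\FF_p$. One small caveat: the constant in $[m_\ell = b-a]$ may be $a-b$ depending on the left/right circulant convention, but this does not affect the argument, which uses only that $\chi_i$ is a function of $m_\ell$ alone.
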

\begin{proof}
Assume for sake of contradiction that $\mu(V,W)\geq 2$. Let $v=\arg\min_{v\in V^\perp\setminus W^\perp}|v|$. Assume that $\{1,2\}\subseteq \text{Supp}(v)$. Then since $\chi_i$ is not constant on $W$, we have that there exist $p^{t-1}(x_1,\dots,x_k)\in W$ such that $\chi_i((0,\dots,0))\neq f(p^{t-1}(x_1,x_2,\dots,x_k))$ and $x_1\neq 0$ by \Cref{cor: projection is codimension 1}.

Again, by \Cref{cor: projection is codimension 1}, we have that there exists $p^{t-1}(x_1',x_2',\dots,x_k')\in V$ such that $x_2'\neq x_2$. Then, by $V$-invariance, we have $\chi_i(p^{t-1}(x_1,x_2,\dots,x_k)=\chi_i(p^{t-1}(x_1-x_1',x_2-x_2',\dots,x_k-x_k')\neq \chi_i((0,\dots,0))$. Since $x_2+x_2'\neq 0$, we have that $\chi_i$ depends on both $M_1$ and $M_2$, a contradiction.
\end{proof}

\begin{lemma}\label{lem: shrinkage by q}
Let $H\leq C_q^k$. If $|H|\geq q^m$, then $\dim(V_H)\geq m$.
\end{lemma}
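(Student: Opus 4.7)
The plan is to reduce this to a routine application of the structure theorem for finite abelian $p$-groups, after identifying $V_H$ with the $p$-torsion subgroup of $H$.

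First, regard $C_q^k$ as the additive group of the $\mathbb{Z}/p^t\mathbb{Z}$-module $(\mathbb{Z}/p^t\mathbb{Z})^k$. Observe that $C_p^k \leq C_q^k$, viewed under the ideal identification $C_p = (p^{t-1}) \leq \mathbb{Z}/p^t\mathbb{Z}$, coincides exactly with the $p$-torsion subgroup
\[
  (C_q^k)[p] \;=\; \{y \in C_q^k : py = 0\},
\]
since for $x \in \mathbb{Z}/p^t\mathbb{Z}$ one has $px = 0$ iff $x \in (p^{t-1})$. Consequently $H \cap C_p^k = H[p]$, and the $\mathbb{F}_p$-vector space $V_H$ is exactly $H[p]$, so $\dim(V_H)$ equals the $\mathbb{F}_p$-dimension of $H[p]$.

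Next, apply the structure theorem for finite abelian $p$-groups to $H$: there exist exponents $1 \leq e_1 \leq \dots \leq e_\ell$ such that
\[
  H \;\cong\; \bigoplus_{i=1}^{\ell} \mathbb{Z}/p^{e_i}\mathbb{Z},
\]
where $\ell$ is the minimum number of generators of $H$. Since $H$ embeds in $(\mathbb{Z}/p^t\mathbb{Z})^k$, each $e_i \leq t$. Taking $p$-torsion, $H[p] \cong \bigoplus_{i=1}^\ell (\mathbb{Z}/p^{e_i}\mathbb{Z})[p] \cong \mathbb{F}_p^\ell$, so $\dim(V_H) = \ell$.

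Finally, $|H| = p^{e_1 + \cdots + e_\ell}$. The hypothesis $|H| \geq q^m = p^{tm}$ gives $e_1 + \cdots + e_\ell \geq tm$, and combining with $e_i \leq t$ for each $i$ yields $t\ell \geq tm$, hence $\ell \geq m$. Therefore $\dim(V_H) \geq m$, as required. There is no real obstacle here; the only substantive point is the identification of $V_H$ with $H[p]$ before invoking the structure theorem.
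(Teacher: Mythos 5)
Your proof is correct but takes a genuinely different route from the paper's. You identify $V_H$ with the $p$-torsion subgroup $H[p]$, invoke the structure theorem for finite abelian $p$-groups to write $H \cong \bigoplus_{i=1}^{\ell} \ZZ/p^{e_i}\ZZ$ with each $e_i \leq t$ (since $H \leq (\ZZ/p^t\ZZ)^k$), observe $\dim V_H = \ell$, and conclude $\ell \geq m$ from $q^m \leq |H| = p^{\sum_i e_i} \leq p^{t\ell} = q^{\ell}$. The paper instead argues directly via the filtration quotients $V_i = \bigl(H \cap (p^i)^k\bigr)/\bigl(H \cap (p^{i+1})^k\bigr)$ for $i = 0,\dots,t-1$: multiplication by $p$ injects $V_i$ into $V_{i+1}$, so every $|V_i| \leq |V_{t-1}| = |V_H|$, and since $\prod_i |V_i|$ telescopes to $|H|$, one gets the contrapositive. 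The two arguments are morally equivalent --- the paper's successive quotients are exactly what the structure theorem organizes into cyclic factors --- but yours is shorter at the cost of invoking the structure theorem as a black box, while the paper's is self-contained and stays in the filtration language used elsewhere in that section.
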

\begin{proof}
For each $i\in \{0,\dots,t-1\}$, let $V_i=\frac{H\cap (p^{i})^k}{H\cap (p^{i+1})^k}$. Now notice that $V_i\leq V_{i+1}$ for all $i\in \{0,\dots,t-2\}$, since if $(x_1,\dots,x_k)+(p^t)^k$ is an element of $\frac{H\cap (p^{i-1})^k}{H\cap (p^{i})^k}$, then $p(x_1,\dots,x_k)$ is an element of $\frac{H\cap (p^{i})^k}{H\cap (p^{i+1})^k}$. Moreover,
\begin{align*}
    \prod_{i=1}^{t-1}|V_i|&=\frac{|H|}{|H\cap (p)^k|}\cdot \frac{|H\cap (p)^k|}{|H\cap (p^2)^k|}\cdots \frac{|H\cap (p^{t-1})^k|}{|H\cap (0)^k|}
    =\frac{|H|}{|H\cap(0)^k|}
    =|H|.
\end{align*}
If $|V_{t-1}|=|V_H|<p^m$, then $|V_i|<p^m$ for all $i\in\{0,\dots,t-1\}$. Then we have 
\begin{align*}
    |H|=\prod_{i=0}^{t-1}|V_i|<p^{tm}=q^m.
\end{align*}
This proves the contrapositive.
\end{proof}
In the following proof, we will identify subspaces of $\FF_p^k$ with the corresponding subgroups of $C_q^k$. That is, for $W\leq\FF_p^k$, we also say that $W$ is the subgroup $L\cap C_p^k\leq C_q^k$, where $L$ is any subgroup of $C_q^k$ such that $V_L=W$. We will also say that a formula $\Phi$ is nonconstant on a subset of $\Omega$ (viewed as $C_q^k$) if and only if $\sem{\Phi}$ is nonconstant on that subset.
\begin{theorem}\label{thm: cyclic prime power general mu}
Let $H\leq C_q^k$. Let $W>V_H$ be such that $\dim(W)=\dim(V_H)+1$. Let $\Phi\in\mathsf{AC}_d$ be a formula taking inputs in $\Omega$ that is $H$-invariant but nonconstant on inputs in $W$. Then for $d\geq 1$,
\begin{align*}
    \emph{size}(\Phi)\geq \frac{|H|q^{d(\mu(V_H,W)^{1/d}-1)}}{q^{\dim(V_H)}}.
\end{align*}
Moreover, as $d$ approaches $\infty$, the lower bound goes to
\begin{align*}
    \emph{size}(\Phi)\geq \frac{|H|q^{\log_2(\mu(V_H,W))}}{q^{\dim(V_H)}}.
\end{align*}
\end{theorem}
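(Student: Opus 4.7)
The plan is to induct on the depth $d$, following the framework of Rossman's proof for the case $q=2$. By De Morgan's laws pushed to negations at the leaves, I assume without loss of generality that $\Phi = (\OR, I)$ is a disjunction. For the base case $d = 1$, $\Phi$ is a disjunction of literals. Non-constancy on $W$ yields $w_1, w_2 \in W$ with $\sem{\Phi}(w_1) = 1$ and $\sem{\Phi}(w_2) = 0$, so some literal $\ell$ satisfies $\sem{\ell}(w_1) = 1$ while every literal vanishes on $w_2$. The $H$-orbit of $\ell$ sits inside $\Phi$ and has size $[H : \mathrm{Stab}_H(\ell)]$. By \Cref{lem: literal property AC} applied to a literal in the orbit non-constant on $W$ one forces $\mu(V_H, W) \leq 1$, so the target reduces to $|H|/q^{\dim V_H}$; that this is at most the orbit size follows from \Cref{lem: shrinkage by q} applied to the stabilizer.

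For the inductive step $d \geq 2$, partition the subformulas of $\Phi$ into $H$-orbits $O_1, \ldots, O_r$. Since $\sem{\Phi}(w_1) = 1$ and $\sem{\Phi}(w_2) = 0$, some $\Psi \in O_{j_0}$ has $\sem{\Psi}(w_1) = 1$, while all subformulas vanish on $w_2$, so $\Psi$ itself is non-constant on $W$. Let $U = \mathrm{Stab}_H(\Psi)$; the orbit $O_{j_0}$ has size $[H:U]$ and its members all have the same formula size as $\Psi$. The crucial step is to extract a subspace $T > V_U$ of dimension $\dim V_U + 1$ on which $\Psi$ is non-constant with $\mu(V_U, T)$ controlled. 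I would apply \Cref{lem: Rossman shrinkage} with $(H', W', U') = (V_H, W, V_U)$ to obtain a candidate $T$ satisfying $\mu(V_U, T) \geq \mu(V_H, W)/m$ where $m = \dim V_H - \dim V_U + 1$, together with $V_H + T = W$ and (by dimension count) $V_H \cap T = V_U$. Using $U$-invariance of $\Psi$ and the decomposition $W = V_H + T$, one shows, after possibly translating $w_1, w_2$ by appropriate elements of $V_H$, that $\Psi$ remains non-constant on $T$.

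Applying the inductive hypothesis to $\Psi$ with $T$ in the role of $W$ and $U$ in the role of $H$, and then multiplying by the orbit size $[H:U]$, one obtains
\[
\mathrm{size}(\Phi) \;\geq\; [H:U]\cdot\mathrm{size}(\Psi) \;\geq\; \frac{|H|\, q^{(\dim V_H - \dim V_U)+(d-1)((\mu(V_H,W)/m)^{1/(d-1)}-1)}}{q^{\dim V_H}}.
\]
Writing $a = m-1 = \dim V_H - \dim V_U$, it remains to verify the elementary inequality $a + (d-1)((\mu/(a+1))^{1/(d-1)}-1) \geq d(\mu^{1/d}-1)$, a standard calculus optimization of the same form that appears in the proof of \Cref{lem:general lower bound}. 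The unbounded-depth bound $q^{\log_2 \mu}$ then follows from monotonicity in $d$ together with the integrality of $\mu$.

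The principal obstacle will be the step of showing that $\Psi$ is non-constant on the specific $T$ delivered by \Cref{lem: Rossman shrinkage}, rather than on some other codimension-$1$ slice of $W$ over $V_U$. Since $\Psi$ is only $U$-invariant and not $V_H$-invariant, one cannot simply quotient $\sem{\Psi}|_W$ by $V_U$ and pick an arbitrary line; instead one must exploit both $V_H + T = W$ and the existence of witnessing pair $(w_1,w_2)$ in $W$ to replace $w_1, w_2$ by elements of a single coset of $T$, possibly after averaging over $H$-translates of $\Psi$ or switching to a different orbit representative within $O_{j_0}$.
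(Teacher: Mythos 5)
Your proposal follows the same structure as the paper's proof: push negations to the leaves, induct on depth, decompose into $H$-orbits, select a nonconstant orbit and orbit representative $\Psi$ with stabilizer $U$, apply \Cref{lem: Rossman shrinkage} to produce $T > V_U$, translate the witnessing points by elements of $V_H$ (using $V_H + T = W$) to show $\Psi^h$ is nonconstant on $T$ for a suitable $h \in V_H$, and then combine orbit-stabilizer counting with the inductive bound and the same calculus optimization. The one small slip is in your base case: a literal $\ell$ is only $V_{\Stab_H(\ell)}$-invariant, not $V_H$-invariant, so \Cref{lem: literal property AC} does not directly yield $\mu(V_H,W)\le 1$; the paper avoids this by starting induction at $d=0$ with a conventional interpretation of the bound, which lets the shrinkage step itself absorb the reduction from $V_H$ to $V_U$, and your argument repairs the same way.
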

\begin{proof}
We prove by induction on $d$ and begin at $d=0$. Interpret for $d=0$
\begin{align*}
    q^{d(\mu(V_H,W)^{1/d}-1)}&=\begin{cases}
    1 &\text{if } \mu(V_H,W)=1,\\ 
    \infty &\text{otherwise.}
  \end{cases}
\end{align*}
In both cases we are done by \Cref{lem: literal property AC} and \Cref{lem: shrinkage by q}. Now assume that the result holds for all $\Psi\in\mathsf{AC}_d$. Note that the base case fits into our induction, since these are simply the limits taken as $d\to 0$.

Let $\Phi=(\OR,I)$ be a $\mathsf{AC}_{d+1}$ formula that is $H$-invariant but nonconstant under inputs in $W>V_H$ with $\dim(W)=\dim(V_H)+1$. We can assume that the top gate of $\Phi$ is $\OR$, since all the $\NOT$ gates can be pushed to the bottom of the formula and the AND case follows symmetrically. Letting $H$ act on $I$, we can break into $H$-orbits, and write $I=I_1\cup\dots\cup I_r$. Let $\Phi_i$ be the formula $(\OR,I_i)$. We claim that there is some $\Phi_i$ that is not $W$-invariant. 

Since $\Phi=\bigvee_{i\in[r]}\Phi_i$, we cannot have $\Phi_i$ identically 1 on $W$ for any $i$. Otherwise, we would have $\Phi$ identically 1 on $W$. If $\Phi_i$ is identically 0 on $W$ for all $i$, then so is $\Phi$. Therefore, there is some $\Phi_i$ that is nonconstant on inputs in $W$. Since $\text{size}(\Phi_i)\geq\text{size}(\Phi)$, we can assume that $\Phi_i$ is all of $\Phi$ and that the action of $H$ is transitive on $I$.

Let $\Psi\in I$. Let $U=\Stab_H(\Psi)$. Notice that $U$ is also the $H$-stabilizer for all $\Psi^h$ with $h\in H$ by the orbit stabilizer theorem and $C_q^k$ being abelian.

Let $m=\dim(V_H)-\dim(V_U)+1$. By \Cref{lem: Rossman shrinkage}, there exists $T>V_U$ with $\dim(T)=\dim(V_U)+1$ such that $\mu(V_U,T)\geq \frac{\mu(V_H,W)}m$ and $V_H+T=W$. 

We claim that there is $h\in H$ such that $\Psi^h$ is nonconstant on $T$. We break into cases.

In the first case, $\Phi$ is identically 0 on $V_H$ and identically 1 on some coset $w+V_H$ of $V_H$ in $W$. Then some $\Psi$ is identically 0 on $V_H$ and not identically 0 on $w+V_H$. Let $w'\in w+V_H$ be such that $\Psi(w')=1$. Let $t\in T$ and $h\in V_H$ be such that $h+t=w'$. Such a pair exists because $T+V_H=W$. Then, $\Psi^h(t)=\Psi(h+t)=\Psi(w')=1$. We have $\Psi^h$ is identically 0 on $U$ and nonconstant on $T$, as $\Psi^h(t)=1$ and $\Psi^h(0)=\Psi(h)=0$.

In the second case, $\Phi$ is identically 1 on $V_H$ and identically 0 on some coset $w+V_H$ of $V_H$ in $W$. Then some $\Psi$ is identically 0 on $w+V_H$ and not identically 0 on $V_H$. Let $h\in V_H$ be such that $\Psi(h)=1$. Let $t\in T$ be such that $t+h$ is in the coset $w+V_H$. Such $t$ exists by the following argument. Let $h'\in V_H$ and $t\in T$ be such that $h'+t=w$. Then $h'=h''+h$ where $h''$ is in $V_H$. Therefore
\begin{align*}
    h+h''+t&=w  \iff
    h+t=w-h'' \implies
    h+t\text{ is in } w+V_H.
\end{align*}
Then $\Psi^h(t)=\Psi(h+t)=\Psi(w)=0$. Again, $\Psi^h$ is nonconstant on $T$ because $\Psi^h(0)=\Psi(h)=1$ and $\Psi^h(t)=\Psi(w)=0$.

As $\Psi^h$ has depth $d$, is $U$-invariant, and is not constant on $T>V_U$ where $\dim(T)=\dim(V_U)+1$, we have by induction that $\text{size}(\Psi^h)\geq \frac{|U|q^{(d-1)(\mu(V_U,T)^{1/(d-1)}-1)}}{q^{\dim(V_U)}}$. By the orbit-stabilizer theorem, $|I|=[H:U]$, so we have
\begin{align*}
    \text{size}(\Phi)&\geq [H:U]\text{size}(\Psi^h)\\
    &\geq \frac{q^{\dim(V_H)}|H|}{q^{\dim(V_H)}|U|}\cdot\frac{|U|q^{(d-1)(\mu(V_U,T)^{1/(d-1)}-1)}}{q^{\dim(V_U)}}\\
    &\geq \frac{|H|q^{(d-1)(\mu(V_U,T)^{1/(d-1)}-1)}}{q^{\dim(V_H)}}\cdot \frac{q^{\dim(V_H)}}{q^{\dim(V_U)}}\\
    &\geq \frac{|H|q^{m-1+(d-1)((\frac{\mu(V_H,T)}m)^{1/(d-1)}-1)}}{q^{\dim(V_H)}}\\
    &\geq \frac{|H|q^{d(\mu(V_H,W)^{1/d}-1)}}{q^{\dim(V_H)}}.
\end{align*}
The last inequality follows from a basic optimization over $m$ using elementary calculus. 

The second claim about the limit as $d\to\infty$ again follows from integrality of $\mu$.
\end{proof}

We now obtain our lower bound for $\mathsf{AC}$ formulas computing $\Word_{C_q,k}$ as an immediate consequence of this result.
\CyclicACLB*
\begin{proof}
The embedding (\ref{embedding C_q}) lets us view $C_q^{k-1}\leq C_q^k$ as the subgroup $\{(x_1,\dots,x_k)\in (\ZZ/q\ZZ)^k:\sum_{i\in[k]}x_i=0\}$.

Then, $\mu(V_{C_q^{k-1}},\FF_p^k)=k$. The word problem is nonconstant on $C_p^k\leq C_q^k$. Moreover, $|C_q^{k-1}|=q^{k-1}=q^{\dim(V_{C_q^{k-1}})}$. Finally, we note that any $\Phi\in\mathsf{AC}_d$ computing $\Word_{C_q,k}$ must be nonconstant on $C_p^k\leq C_q^k$, so satisfies the hypotheses of \Cref{thm: cyclic prime power general mu}. Thus, that theorem gives the result.
\end{proof}

\section{Groups with small cyclic $p$-subgroups}\label{sec:large gaps}
In this section we show that given some conjectures on the Mersenne numbers, the $\mathsf{AC}$ formula size lower bounds implied by \Cref{thm: simple lower bound} may be much stronger than those implied by \Cref{thm: cyclic prime power LB}. This justifies the effort put into \Cref{thm: simple lower bound}, especially in the nonabelian case.

In some cases for $G$, the two bounds may not be very far apart. For instance, when $G=A_n$, the alternating group on $n$ elements when $n\geq 5$, then \Cref{thm: simple lower bound} implies $\mathcal{L}_{\mathsf{AC}_d}^{A_n^{k-1}}(\Word_{A_n,k})\geq n^{d(k^{1/d}-1)}$. Since there is a prime $p\geq \frac{n}{2}$ by Bertrand's postulate, and therefore a subgroup of $A_n$ isomorphic to $C_p$, even \Cref{thm: simple lower bound} applied to just abelian groups gives a bound $\mathcal{L}_{\mathsf{AC}_d}^{A_n^{k-1}}(\Word_{A_n,k})\geq (\frac{n}{2})^{d(k^{1/d}-1)}$. These two bounds have similar bases ($\frac{n}{2}$ vs. $n$) for the exponent.

We give conditional examples of nonabelian simple $G$ where applying \Cref{thm: simple lower bound} or \Cref{thm: cyclic prime power LB} to a cyclic subgroup of prime power of $G$ does not yield a lower bound on $\mathcal{L}_{\mathsf{AC}_d}^{G^{k-1}}(\Word_{G,k})$ that is near the lower bound implied by \Cref{thm: simple lower bound} applied to $G$ itself. Our examples are $n$-dimensional projective special linear groups over $\FF_2$, denoted $\mathrm{PSL}(n,\FF_2)$. These groups are nonabelian simple. The orders of the cyclic subgroups of prime power order of these groups are related to numbers of the form $2^i-1$.

\begin{definition}\label{def:mersenne}\normalfont
The \emph{Mersenne numbers} are the numbers of the form $2^i-1$ for $i\in\mathbbm{N}$. The \emph{Mersenne primes} are the Mersenne numbers that are prime.
\end{definition}
The two conjectures on which our examples of large gaps between the lower bounds implied by the two theorems depend concern the properties of the Mersenne numbers and Mersenne primes. First, the Lenstra-Pomerance-Wagstaff Conjecture \cite{wagstaff1983divisors} implies that for any $x\in\mathbbm{N}$, the expected number of $x\leq i\leq 2x$ such that $2^i-1$ is prime is approximately 2. That is, $\mathbbm{E}_{x\in\mathbbm{N}}[|\{i:x\leq i\leq 2x, 2^i-1 \text{ prime}\}|]\approx 2$.

Also, it has been conjectured that all Mersenne numbers are square-free \cite{warren1967square}. We call this the Square-Free Conjecture.

We show that, given these two conjectures, there are infinitely many $n$ such that $\mathrm{PSL}(n,\FF_2)$ has no cyclic subgroup of prime power order with order greater than $2^{5n/6}$. 

The claim implies that the best bound that can be obtained by \Cref{thm: cyclic prime power LB} is given by $\mathcal{L}_{\mathsf{AC}_d}^{\mathrm{PSL}(n,\FF_2)^{k-1}}(\Word_{\mathrm{PSL}(n,\FF_2),k})\geq 2^{(5n/6) d(k^{1/d}-1)}$. On the other hand, the lower bound obtained by \Cref{thm: simple lower bound} is $\mathcal{L}_{\mathsf{AC}_d}^{\mathrm{PSL}(n,\FF_2)^{k-1}}(\Word_{\mathrm{PSL}(n,\FF_2),k})\geq 2^{(n-1)  d(k^{1/d}-1)}$, since $\mathrm{PSL}(n,\FF_2)$ is simple for $n\geq 3$ and has minimal degree permutation representation of degree $2^{n}-1$ \cite{vasilyev1996minimal}, obtained by its permutation action on $n$-dimensional projective space. The gap between the base in the exponents ($2^{5n/6}$ vs. $2^{n-1}$) is quite significant.

We first show that conditional on the Square-Free Conjecture, the largest cyclic subgroup of prime power order $\mathrm{PSL}(n,\FF_2)$ for $n$ large enough has order equal to a Mersenne prime. It suffices to do so for $\mathrm{GL}(n,\FF_2)$, since $\mathrm{PSL}(n,\FF_2)$ is a quotient of a subgroup of $\mathrm{GL}(n,\FF_2)$. We follow a proof by van Leeuwen.

\begin{lemma}[\cite{294524}]\label{lem:GL exponent}
Let $n\geq 2$. Suppose that $g\in\mathrm{GL}(n,p)$ has maximum order among all elements of prime power order. If the Square-Free Conjecture holds, then the order of $g$ is $p$ for some Mersenne prime $p<2^n$.
\end{lemma}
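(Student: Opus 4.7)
The plan is to analyze an element $g \in \mathrm{GL}(n,\FF_2)$ of maximum prime power order via its rational canonical form, and then to use the Square-Free Conjecture to collapse prime powers down to single primes. First I would dispose of the characteristic case: if $g$ has order a power of $2$, then $g$ is unipotent, so $g = I+N$ with $N$ nilpotent of nilpotency index $k \leq n$, and the Frobenius identity $(I+N)^{2^a} = I + N^{2^a}$ in characteristic $2$ gives order $2^{\lceil \log_2 k \rceil} \leq 2n$. For $n \geq 2$, this is dominated by the odd-prime case, since the embedding $\mathrm{GL}(2,\FF_2) \hookrightarrow \mathrm{GL}(n,\FF_2)$ already provides an element of order $3 = 2^2-1$, and larger Mersenne primes quickly dominate $2n$.

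For $g$ of odd prime power order $\ell^k$, $g$ is semisimple since $\ell$ is coprime to the characteristic, so its minimal polynomial over $\FF_2$ factors into distinct irreducibles $f_1 \cdots f_r$ with $\sum_i \deg f_i \leq n$. Letting $d_i = \deg f_i$, the rational canonical form decomposes $\FF_2^n$ into cyclic $g$-invariant summands on which $g$ acts through the field $\FF_2[x]/(f_i) \cong \FF_{2^{d_i}}$, so the restricted order divides $2^{d_i}-1$, and the total order of $g$ is the lcm of these contributions. Since this lcm is the prime power $\ell^k$, some single factor $f_i$ must contribute the full $\ell^k$, forcing $\ell^k \mid 2^{d_i}-1$ with $d_i \leq n$. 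Now I invoke the Square-Free Conjecture on the Mersenne number $2^{d_i}-1$: this forces $k=1$, so $g$ has prime order $p$ dividing $2^d-1$ for some $d \leq n$, hence $p \leq 2^n-1 < 2^n$.

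Finally, to upgrade the conclusion from ``$g$ has prime order $p < 2^n$'' to ``$p$ is a Mersenne prime'', I would argue that maximality selects a Singer cycle configuration: for the largest $d^\ast \leq n$ such that $2^{d^\ast}-1$ is prime, the embedding $\mathrm{GL}(d^\ast,\FF_2) \hookrightarrow \mathrm{GL}(n,\FF_2)$ provides an element of order exactly the Mersenne prime $2^{d^\ast}-1$. A size comparison shows this dominates any non-Mersenne competitor: a non-Mersenne prime factor of $2^{d'}-1$ is at most $(2^{d'}-1)/3$ (since $2^{d'}-1$ is odd and composite, so has smallest prime factor $\geq 3$), while the witness $2^{d^\ast}-1$ is the full $2^{d^\ast}-1$. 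I expect this last comparison to be the main obstacle, since it implicitly requires that the Mersenne primes be dense enough to outweigh non-Mersenne factors of $2^{d'}-1$ across all $d' \leq n$; formalizing it cleanly requires combining the one-third size bound with a careful case analysis (or Zsygmondy-type primitive-prime-divisor input) for degrees $d' > d^\ast$ with $2^{d'}-1$ composite.
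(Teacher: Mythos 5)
Your approach tracks the paper's closely: both analyze $g$ via its rational canonical form / minimal polynomial, isolate the unipotent (order-$2^a$) case as small, use the fact that on a cyclic block with irreducible annihilator of degree $d_i$ the order divides $2^{d_i}-1$, and then invoke the Square-Free Conjecture to collapse $\ell^k \mid 2^{d_i}-1$ down to $k=1$. If anything your treatment of the power-of-$2$ case is a bit cleaner; the paper's claim that the order is ``at most $\lceil \log_2 n\rceil$'' should read $2^{\lceil \log_2 n\rceil}$, which your $\leq 2n$ bound correctly captures.

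The genuine gap is exactly the one you flag at the end, and it is not merely a formalization issue. The paper's own proof of the final step consists of citing $\gcd(2^a-1,2^b-1)=2^{\gcd(a,b)}-1$ and then asserting that the maximal odd prime power order is a Mersenne prime; no argument is given for why that identity forces it, and the assertion appears to be false. Take $n=49$. Then $2^{49}-1 = 127\cdot 4432676798593$, and (as factorization tables report) the cofactor $\ell = 4432676798593$ is prime. Since $\ell \mid 2^{49}-1$ and $\ell \nmid 2^7-1$, the multiplicative order of $2$ mod $\ell$ is $49$, so a companion matrix of the corresponding degree-$49$ irreducible polynomial gives an element of order $\ell \approx 4.4\times 10^{12}$ in $\mathrm{GL}(49,\FF_2)$. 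But the largest Mersenne prime with exponent $\leq 49$ is $2^{31}-1 \approx 2.1\times 10^9$, which is far smaller. Hence the maximal prime power order in $\mathrm{GL}(49,\FF_2)$ is not a Mersenne prime, and your proposed one-third-size comparison against a Singer-cycle witness cannot be made to close the gap; it fails precisely because non-Mersenne primitive prime divisors of $2^{d}-1$ for $d$ near $n$ can overwhelm the Mersenne prime available at a smaller degree. A defensible conclusion from the arguments you and the paper do give is only that $g$ has prime order dividing $2^d-1$ for some $d\leq n$ (hence $<2^n$); upgrading to ``Mersenne prime'' needs either a different argument or additional hypotheses on $n$, and the downstream corollary that consumes this lemma would need to be re-examined accordingly.
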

\begin{proof}
Fix some prime $P$. Let $g\in\mathrm{GL}(n,p)$ and let $f$ be its minimal polynomial. Note that $f$ is the product of irreducible polynomials. 

The order of $g$ is then the minimal $e$ such that $X^e-1$ is divisible by $f$. Let $d$ be the degree of $f$. The polynomial $X^{p^d-1}-1$ is the product of all monic irreducible polynomials with degree dividing $d$ except $X$.

If $f$ had no repeated irreducible factors then $f$ would divide $X^{p^d-1}$. We must account for the powers of irreducible factors, which arise when $g$ has nontrivial Jordan blocks. Consider the case in which $X^{p^d-1}$ is divisible by $f$, an irreducible, but not by $f^2$.

In this case, $X^{q^d-1}-1$ is divisble by $f$ but not by $f^2$. Then $X^{q^d-1}=1+af$ with $a$ not divisible by $f$. The binomial formula implies that $X^{(q^d-1)i}\not\equiv 1\text{(mod }f^2)$ for $i\in[p-1]$. On the other hand, the Frobenius endomorphism on $\FF_p[X]$ implies that $X^{(q^d-1)p}=1+a^pf^p$. Continuing, we find that $X^{(q^d-1)p^{\ceil{\log_p(i)}}}\equiv 1 \text{(mod }f^i)$.

We find that the order of $g$ with minimal polynomial $f$ divides $(q^d-1) p^{\ceil{\log_p(i)}}$ for some integer $i$ such that $ \deg(f)i\leq  n$.

Apply this to the case $p=2$. If $g$ has prime power order, either its order is divisible by 2, in which case the order of $g$ is at most $\ceil{\log_2(n)}$. Otherwise the order of $g$ is divisble by an odd prime, and it divides $2^{d}-1$ for some $d$. By the Square-Free Conjecture, $2^d-1$ is square-free, so the order of $g$ must be a prime dividing $2^d-1$.

By the relation $\mathrm{GCD}(2^k-1,2^\ell-1)=2^{\mathrm{GCD}(k,\ell)}-1$, we have that the order of $g$ is a Mersenne prime less than $2^n$. 

Since $n\geq 2$, to maximize order over all elements of prime power order, we would choose an element $g$ such that its order is odd and is therefore a Mersenne prime.
\end{proof}

\begin{corollary}\label{cor:small p-subgroups PSL}
Given the Lenstra-Pomerance-Wagstaff Conjecture and Square-Free Conjecture, there exist infinitely many $n\in\mathbbm{N}$ such that $\mathrm{PSL}(n,\FF_2)$ has no cyclic subgroup of prime power order greater than $2^{5n/6}$.
\end{corollary}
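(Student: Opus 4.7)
The plan is to apply \Cref{lem:GL exponent} to $\mathrm{PSL}(n,\FF_2)$ and then combine with Lenstra--Pomerance--Wagstaff (LPW) to produce infinitely many $n$ for which Mersenne prime exponents avoid the interval $(5n/6, n]$. Over $\FF_2$, both the group of nonzero scalars and the center of $\mathrm{SL}(n,\FF_2)$ are trivial, so $\mathrm{PSL}(n,\FF_2) = \mathrm{GL}(n,\FF_2)$ and \Cref{lem:GL exponent} applies directly: conditional on the Square-Free Conjecture, the largest cyclic subgroup of prime power order in $\mathrm{PSL}(n,\FF_2)$ has order $2^{m^*(n)}-1$, where $m^*(n)$ denotes the largest integer $m \le n$ such that $2^m - 1$ is prime. (Such a subgroup is realized inside $\mathrm{GL}(n,\FF_2)$ by letting $\FF_{2^m}^\times$ act on an $m$-dimensional $\FF_2$-subspace of $\FF_2^n$.) Hence the corollary reduces to the claim that $m^*(n) \le 5n/6$ for infinitely many $n$.

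Enumerate the Mersenne prime exponents as $p_1 < p_2 < \cdots$. The condition $m^*(n) \le 5n/6$ holds iff $n$ lies in $[\lceil 6p_k/5\rceil,\, p_{k+1}-1]$ for some $k$, which is a nonempty integer range precisely when $p_{k+1}/p_k > 6/5$. So it suffices to show, assuming LPW, that $p_{k+1}/p_k > 6/5$ for infinitely many $k$. LPW is equivalent to the asymptotic $\#\{k : p_k \le x\} \sim (e^\gamma/\log 2)\log x$, which rearranges to $\log p_k \sim (\log 2/e^\gamma)\, k$. Consequently the Ces\`aro average
\[
  \frac{1}{k}\sum_{i=1}^{k-1}\log\!\left(\tfrac{p_{i+1}}{p_i}\right)
  \;=\; \frac{\log(p_k/p_1)}{k}
  \;\longrightarrow\; \frac{\log 2}{e^\gamma} \;\approx\; 0.389,
\]
and since $\log(6/5) \approx 0.182 < 0.389$, the elementary fact $\limsup_i a_i \ge \lim_k \frac{1}{k}\sum_{i<k} a_i$ applied to $a_i = \log(p_{i+1}/p_i)$ yields $p_{k+1}/p_k > 6/5$ infinitely often. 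For each such $k$, any integer $n \in [\lceil 6p_k/5\rceil,\, p_{k+1}-1]$ satisfies the corollary.

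The main obstacle is simply the passage from the LPW asymptotic (a statement about counts) to a statement about individual gaps $p_{k+1}/p_k$; the Ces\`aro argument above is the natural clean bridge, and the constant $5/6$ is comfortably below the predicted limiting ratio $2^{1/e^\gamma} \approx 1.475$, so no sharper conjectural input is needed. Everything else is bookkeeping: verifying that Mersenne prime exponents (rather than arbitrary divisors of $2^d-1$) indeed control the largest cyclic $p$-subgroup, which is exactly the content of \Cref{lem:GL exponent}.
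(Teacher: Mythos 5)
Your argument is correct and, in the number-theoretic half, genuinely different from the paper's. After reducing to the claim that $m^*(n) \le 5n/6$ infinitely often (using \Cref{lem:GL exponent}, and a cleaner identification $\mathrm{PSL}(n,\FF_2)=\mathrm{SL}(n,\FF_2)=\mathrm{GL}(n,\FF_2)$ than the paper's remark that $\mathrm{PSL}$ is a quotient of a subgroup of $\mathrm{GL}$), you pass to the gap ratios $p_{k+1}/p_k$ and deploy the Ces\`aro inequality $\limsup_i a_i \ge \lim_k \frac1k\sum_{i<k} a_i$ on $a_i = \log(p_{i+1}/p_i)$, whose averages telescope to $\log(p_k/p_1)/k \to \log 2 / e^\gamma$ under the LPW asymptotic. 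The paper instead uses the ``about two Mersenne prime exponents per dyadic window'' phrasing of LPW: it picks infinitely many $x$ with at most two exponents in $[x,2x]$, pigeonholes to extract a long exponent-free subinterval $[x_1,x_2]\subseteq [x,2x]$, and sets $n=x_2$. Your route is arguably tighter --- the Ces\`aro argument shows any constant larger than $2^{-1/e^\gamma}\approx 0.678$ would work in place of $5/6$, and it avoids the slightly informal step of upgrading an asymptotic density to a pointwise count ``$\le 2$'' on a fixed dyadic interval. (For what it's worth, the paper's stated gap $x_2-x_1\ge x/6$ together with $x_2\le 2x$ yields only $x_1 \le \tfrac{11}{12}n$, not $\tfrac{5}{6}n$; the pigeonhole with at most two exponents actually gives $x/3$, which does produce $5n/6$, so this is a transcription slip rather than a real issue.)

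One small point to tidy: ``nonempty precisely when $p_{k+1}/p_k > 6/5$'' is off by an integer in edge cases (e.g.\ $p_{k+1}=\lceil 6p_k/5\rceil$), but since the Ces\`aro bound gives infinitely many $k$ with $p_{k+1}/p_k$ bounded away from $6/5$, the ranges $[\lceil 6p_k/5\rceil, p_{k+1}-1]$ are eventually of length $\Omega(p_k)$, so this does not affect the conclusion. Also note the phrasing ``$5/6$ is comfortably below $2^{1/e^\gamma}\approx 1.475$'' is comparing the wrong pair; the intended comparison $6/5 < 2^{1/e^\gamma}$ is the correct one and holds with room to spare.
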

\begin{proof}
Given the Lenstra-Pomerance-Wagstaff Conjecture, let $x\in\mathbbm{N}$ be such that $|\{i:x\leq i\leq 2x, 2^i-1 \text{ prime}\}|\leq 2$; there are infinitely many such $x$. By averaging, there exists $x\leq x_1<x_2\leq 2x$ such that there are no Mersenne primes of the form $2^i-1$ with $x_1\leq i\leq x_2$ and $x_2-x_1\geq \frac{x}6$. Set $n=x_2$. We can assume that $n\geq 2$.

By \Cref{lem:GL exponent}, given the Square-Free Conjecture, the maximum cyclic subgroup of prime power order in $\mathrm{PSL}(n,\FF_2)$ has order $p$, where $p$ is a Mersenne prime less than $2^n$. There are no Mersenne primes between $2^{x_1}-1$ and $2^n-1$, so $p\leq 2^{x_1}-1\leq 2^{5n/6}$, completing the proof.
\end{proof}

\section{Future directions}
In this paper we have proved tight bounds on the size of invariant $\mathsf{TC}$ formulas for the word problem on all finite simple groups (\Cref{thm: simple lower bound}). We also prove a tight lower bound on the size of invariant $\mathsf{AC}$ formulas for the word problem of cyclic groups of prime power order (\Cref{thm: cyclic prime power LB}). One direction for future work would be to improve this $\mathsf{AC}$ lower bound to a $\mathsf{TC}$ lower bound.
\begin{question} 
Do we have $\mathcal{L}^{C_q^{k-1}}_{\mathsf{TC}_d}(\Word_{C_q,k})\geq q^{d(k^{1/d}-1)}$?
\end{question}
Another direction for progress is characterizing the invariant $\mathsf{TC}$ or $\mathsf{AC}$ formula size for all finite groups. With \Cref{cor:all finite groups} we have answered the following question is true for simple groups and abelian groups.
\begin{question}
What is $\mathcal{L}_{\mathsf{AC}_d}^{G^{k-1}}(\Word_{G,k})$ for a finite group $G$?
\end{question}

Finally, it is natural to ask whether these invariant formulas are optimal even without the restriction of invariance. Currently \Cref{cor:upper} gives the best known upper bounds on the formula size for $\Word_{S_n,k}$.
\begin{question}
Do we have $\mathcal{L}_{\mathsf{TC}_d}(\Word_{S_n,k})\geq n^{\Omega(d(k^{1/d}-1))}$?
\end{question}
An affirmative answer to this question, even for $k=\log(n)$, would prove that $\mathsf{Logspace}\neq \mathsf{NC}^1$.

\bibliographystyle{plain}
\bibliography{Symmetric_formulas_refs}

\end{document}